\definecolor{blue}{rgb}{0,0.18,0.39}
\definecolor{RoyalBlue}{rgb}{0,0.2,0.7}
\begin{document}
\title{Partition and generating function zeros in adsorbing self-avoiding walks}
\author{
E.J. Janse van Rensburg$^1$\footnote[1]{\texttt{rensburg@yorku.ca}}
}

\address{$^1$Department of Mathematics and Statistics, 
York University, Toronto, Ontario M3J~1P3, Canada\\}

\begin{abstract}
The Lee-Yang theory of adsorbing self-avoiding walks is presented.  It is shown that Lee-Yang
zeros of the generating function of this model asymptotically
accumulate uniformly on a circle in the complex plane, and that
Fisher zeros of the partition function distribute in the complex plane such that a 
positive fraction are located in annular regions centred at the origin.   These results
are examined in a numerical study of adsorbing self-avoiding walks in the square and 
cubic lattices.  The numerical data are consistent with the rigorous results; for example, Lee-Yang
zeros are found to accumulate on a circle in the complex plane and a positive fraction 
of partition function zeros appear to accumulate on a critical circle.  
The radial and angular distributions
of partition function zeros are also examined and it is found to be consistent with
the rigorous results.
\end{abstract}

\ams{82B41, 82B23}
\maketitle

\section{Introduction}
\label{section1}   

The partition function of a lattice spin model (such as the Ising or Potts model) is a 
sum over all spin configurations $\{\sigma_i\}$ in a lattice $\Lambda$ given by
\begin{equation}
Z_s(\beta,H) = \sum_{\{\sigma_i\}}
e^{-\beta\, {\cal H} (\{\sigma_i\})} .
\label{eqn1}   
\end{equation}
The inverse temperature is $\beta = \sfrac{1}{kT}$, and $\C{H}$ is the
\textit{Hamiltonian} of the model.  This is a function of the \textit{state} 
(configuration of lattice spins $\{\sigma_i\}$) and is given by
\begin{equation}
\C{H}(\{\sigma_i\}) = -J\sum_{\edge{i}{j}} f(\sigma_i , \sigma_j) - H \sum_{i} g(\sigma_i) 
+ BC.
\label{eqn2}   
\end{equation}
In this expression, $f(\sigma_i , \sigma_j)$ is a function of neighbouring spins
$\sigma_i$ and $\sigma_j$
connected by edges $\edge{i}{j}$.  In the Ising model the spins take
values $\sigma_i = \pm 1$ and $f(\sigma_i,\sigma_j) = \sigma_i\sigma_j$.  Similarly,
$g$ is a function of the spins $\sigma_i$ and in the Ising model $g(\sigma_i)
= \sigma_i$.  The Hamiltonian may also include terms corresponding to particular 
boundary conditions (denoted by $BC$). In Potts models, or other spin
models, the functions $f$ and $g$ may be defined differently.  

The parameter $J$ in equation \Ref{eqn2} is the coupling or interaction strength
between adjacent spins, and the parameter $H$ is an external (magnetic) field which 
couples to the spins.   Typically, the partition function is expressed in terms of 
activities $x = e^{\beta J}$ and $y=e^{\beta H}$. 
If $\Lambda$ is a finite lattice, then $Z_s(\beta,H)$ is a polynomial in $x$ and $y$ 
with non-negative coefficients and so it has no
zeros on the positive real axes in the $x$-plane or in the $y$-plane.

The free energy of the model is given by $F_s(\beta,H) = \log Z_s(\beta,H)$.
In finite lattices the singular points of $F_s(\beta,H)$ in the complex $\beta$- or $H$-planes
are found at the zeros of $Z_s(\beta,H)$.  $H$-plane zeros are called \textit{Lee-Yang zeros}  
\cite{YL52,LY52} and the $\beta$- or $T$-plane zeros are \textit{Fisher zeros} \cite{F64}.  
Lee-Yang or Fisher zeros of spin systems may have properties given by
the \textit{Lee-Yang theorem} \cite{YL52,LY52} which is based on the following theorem.

\begin{theorem}[Proposition 5.1.1 in Ruelle \cite{Ruelle83}]
Let $N=\{1,2,\ldots,n\}$, $S=\{i_1,i_2,\ldots,i_s\} \subseteq N$ and $S^\prime = N\setminus S$.  Let
$t^S = \prod_{j=1}^s t_{i_j}$ where $t_k\in\ComplexN$.  Suppose that $\{ A_{ij}\}_{i\not=j}$ 
is a family of real numbers such that $|A_{ij}| \leq 1$ 
for $i,j\in N$.  

Define the polynomial
\[ P_n(t_1,t_2,\ldots, t_n) = \sum_{S\subseteq N} t^S\;
\LB \prod_{i\in S} \prod_{j\in S^\pr} A_{ij} \RB \! . \]
Then, if $P_n(t_1,t_2,\ldots,t_n) = 0$ and $|t_j|\geq 1$ for $1\leq j \leq n-1$, it
follows that $|t_n|\leq 1$. \qed
\end{theorem}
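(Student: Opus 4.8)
The plan is to separate the dependence on $t_n$ and reduce everything to a single scalar inequality. Since $P_n$ is affine in $t_n$, I would write $P_n = A(t_1,\dots,t_{n-1}) + t_n\, B(t_1,\dots,t_{n-1})$, where $A$ collects the terms with $n\notin S$ and $B$ the terms with $n\in S$. A short computation identifies $A$ and $B$ with the analogous polynomial $Q$ built on $N\setminus\{n\}$, each twisted by the site-$n$ couplings: every subset $R\subseteq N\setminus\{n\}$ acquires an extra factor $\prod_{i\in R}A_{in}$ in $A$ and $\prod_{j\in(N\setminus\{n\})\setminus R}A_{nj}$ in $B$. On the zero set $P_n=0$ one has $t_n=-A/B$ when $B\ne0$, so the whole theorem amounts to showing $|A|\le|B|$ whenever $|t_j|\ge1$ for $1\le j\le n-1$.

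The cleanest route to that inequality is to first establish the Lee--Yang property in its open form, namely that $P_n$ has no zero with $|t_1|,\dots,|t_n|$ all strictly greater than $1$, and then to recover the stated (closed) form by continuity. Indeed, if this open property holds then for $|t_j|>1$ with $j\le n-1$ the unique $t_n$-root $-A/B$ cannot have modulus exceeding $1$, so $|t_n|\le1$; the degenerate possibility $B=0$ is excluded because $B$ has leading term $t_1\cdots t_{n-1}$ with coefficient $1$. Letting the first $n-1$ variables approach the unit circle from outside and using continuity of the root $-A/B$ then upgrades $|t_j|>1$ to $|t_j|\ge1$, giving the theorem, with the thin stratum where $B$ degenerates in the limit handled by a standard perturbation argument.

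I would prove the open Lee--Yang property by induction on $n$. The base case $n=1$ is immediate: $P_1=1+t_1$ vanishes only at $t_1=-1$, which lies on the unit circle. For the inductive step the scalar engine is the two-variable estimate that, for real $a$ with $|a|\le1$ and $|t|\ge1$, one has $|a+t|\ge|1+at|$, which follows at once from the identity $|a+t|^2-|1+at|^2=(1-a^2)(|t|^2-1)\ge0$. This controls a single coupling to site $n$; to handle all couplings at once I would feed the inductive hypothesis (the Lee--Yang property of $Q$ on $N\setminus\{n\}$) into an Asano-type contraction lemma, which states that if a polynomial $\alpha+\beta z+\gamma w+\delta zw$, affine in each of $z,w$, has no zero with $|z|,|w|>1$, then $\alpha+\delta u$ has no zero with $|u|>1$. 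Realizing the passage from $Q$ to $P_n$ (the addition of site $n$) as such a contraction, with $\alpha\mapsto A$ and $\delta\mapsto B$, then yields $|A|\le|B|$ and closes the induction.

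The main obstacle is precisely this contraction step: one must exhibit the ``add one site'' operation as a genuine Asano contraction and verify that it maps the zero-free polydisc $\{|t_j|>1\}$ into itself, which is where the hypothesis $|A_{ij}|\le1$ together with the reality of the couplings (making the two-variable estimate exact) is used in an essential way. Everything else---the reduction to the open property, the base case, and the final continuity argument---is routine once this preservation lemma is in hand.
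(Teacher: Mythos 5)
The first thing to note is that the paper contains no proof of this theorem: it is quoted from Ruelle \cite{Ruelle83} with the proof deferred to that reference, so the only meaningful benchmark is the classical Lee--Yang/Asano argument, which is indeed the route you sketch. Your individual ingredients are correct: the decomposition $P_n=A+t_nB$ with the twisted coefficients you describe, the identity $|a+t|^2-|1+at|^2=(1-a^2)(|t|^2-1)$ for real $a$, and the Asano contraction lemma as you state it. But the step you yourself flag as ``the main obstacle'' is the entire content of the proof, and it is not a single contraction with $\alpha\mapsto A$, $\delta\mapsto B$: to invoke the lemma you must first exhibit a polynomial, affine in the two variables being merged, that is \emph{already known} to be zero-free on the product of exteriors, and nothing in your outline produces it. The standard construction multiplies $Q(s_1,\ldots,s_{n-1})$ by $n-1$ separate edge factors $1+A_{in}(w_i+u_i)+w_iu_i$, each zero-free on $\{|w_i|>1\}\times\{|u_i|>1\}$ by your two-variable estimate, and then performs a cascade of contractions --- one merging $(s_i,w_i)$ into $t_i$ at each old site $i$, and $n-2$ further contractions merging the copies $u_1,\ldots,u_{n-1}$ into the single variable $t_n$ --- followed by the bookkeeping check that the result has exactly the coefficients $\prod_{i\in S}\prod_{j\notin S}A_{ij}$. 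Note also that the edge factor is zero-free only when the two couplings of an edge coincide: the root $u=-(1+A_{in}w)/(A_{ni}+w)$ maps the exterior of the disk into the closed disk only in the symmetric case $A_{in}=A_{ni}$. Indeed, for $A_{12}=1$, $A_{21}=0$ the polynomial $1+t_1+t_1t_2$ vanishes at $t_1=1.1$, $t_2\approx-1.91$, so without symmetry (a hypothesis absent from the statement as quoted here, but needed for the result to hold at all) neither your estimate nor the theorem survives; your proof nowhere acknowledges that it is using it.

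Two of your auxiliary claims are also wrong as stated. First, $B\neq 0$ does not follow ``because $B$ has leading term $t_1\cdots t_{n-1}$ with coefficient $1$'': a multivariate polynomial with a monic top monomial can vanish at particular points (consider $t_1t_2-4$ at $t_1=t_2=2$). The correct repair is that if $B=0$ and $A\neq 0$ the implication is vacuous (no root exists), while $A=B=0$ makes $P_n$ vanish identically in $t_n$ and contradicts the open property. Second, and more seriously, the closing continuity step cannot be ``handled by a standard perturbation argument,'' because the closed-boundary statement is genuinely false once $|A_{ij}|=1$ is allowed: with all $A_{ij}=1$ one has $P_2=(1+t_1)(1+t_2)$, which vanishes at $t_1=-1$, $t_2=5$, so the hypothesis $|t_1|\geq 1$ holds while the conclusion $|t_2|\leq 1$ fails. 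No limiting argument can erase an actual counterexample; a correct treatment must either assume $|A_{ij}|<1$ strictly (in which case $B\neq0$ follows inductively and the closed form does hold), or prove only the open form and handle the boundary stratum with an explicit nondegeneracy proviso. So the architecture of your proposal is the right one, but the central contraction step is missing and the two steps you dismiss as routine are, respectively, fallacious and impossible as stated.
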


A corollary of this theorem is the Lee-Yang theorem \cite{LY52}, namely

\begin{theorem}[The Lee-Yang theorem -- Theorem 5.1.2 in Ruelle \cite{Ruelle83}]
Let $N=\{1,2,\ldots,n\}$, $S=\{i_1,i_2,\ldots,i_s\} \subseteq N$ (so that $s=|S|$)
and $S^\prime = N\setminus S$.  Suppose that $\{ A_{ij}\}_{i\not=j}$ is a family
of real numbers such that $|A_{ij}| \leq 1$ 
for $i,j\in N$.

Define the polynomial
\[ P_n(t) = \sum_{S\subseteq N} t^{s}\;
\LB \prod_{i\in S} \prod_{j\in S^\pr} A_{ij} \RB \! . \]
Then the zeros of $P_n(t)$ all lie on the unit circle $|t|=1$ in the $t$-plane. \qed
\label{corr3.2}   
\end{theorem}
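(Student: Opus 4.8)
The plan is to reduce the single-variable claim to the multivariate Proposition~5.1.1 by setting all of the variables equal, obtaining a ``no zeros outside the closed unit disc'' statement, and then to upgrade this to ``all zeros on $|t|=1$'' by exploiting a reciprocal (self-dual) symmetry of $P_n$. Two elementary facts about $P_n(t)$ set the stage. The term $S=\emptyset$ contributes the empty product $1$, so the constant term is $1$ and $P_n(0)=1\neq 0$; the term $S=N$ has $S^\pr=\emptyset$ and contributes $t^n$ with coefficient $1$. Thus $P_n$ has degree exactly $n$ with leading coefficient $1$, and it possesses precisely $n$ zeros counted with multiplicity, none of them at the origin.

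Next I would rule out zeros in $|t|>1$. Let $t_0$ be any zero of the single-variable $P_n$ and specialise $t_1=t_2=\cdots=t_n=t_0$ in Proposition~5.1.1, so that the multivariate polynomial collapses to $P_n(t_0)=0$. If $|t_0|>1$, then $|t_j|=|t_0|\geq 1$ for $1\leq j\leq n-1$, and the proposition forces $|t_n|=|t_0|\leq 1$, a contradiction. Hence every zero of $P_n$ satisfies $|t|\leq 1$.

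To eliminate zeros in $|t|<1$ I would pass to the reversed polynomial $Q_n(t)=t^nP_n(1/t)$. Expanding, $Q_n(t)=\sum_{S\subseteq N}t^{\,n-|S|}\prod_{i\in S}\prod_{j\in S^\pr}A_{ij}$, and reindexing by the complement $T=S^\pr$ (so $|T|=n-|S|$ and the two index sets swap roles) rewrites this as $Q_n(t)=\sum_{T\subseteq N}t^{|T|}\prod_{i\in T}\prod_{j\in T^\pr}B_{ij}$ with $B_{ij}=A_{ji}$. Since $|B_{ij}|=|A_{ji}|\leq 1$, the polynomial $Q_n$ is again of Lee--Yang form, so the argument of the previous paragraph applies verbatim and shows that $Q_n$ has no zeros with $|t|>1$. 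But if $t_0$ were a zero of $P_n$ with $|t_0|<1$, then (using $t_0\neq 0$) the point $1/t_0$ would be a zero of $Q_n$ with $|1/t_0|>1$, which is impossible. Therefore $P_n$ has no zeros with $|t|<1$ either, and combining the two exclusions all $n$ zeros must lie on $|t|=1$.

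The only delicate point is the bookkeeping in the reciprocal identity: one must verify that the complement reindexing $S\mapsto S^\pr$ sends the weight $\prod_{i\in S}\prod_{j\in S^\pr}A_{ij}$ to $\prod_{i\in T}\prod_{j\in T^\pr}A_{ji}$, so that the array is \emph{transposed} rather than left unchanged. This transpose is precisely what allows the proof to proceed without assuming $A_{ij}=A_{ji}$: the transposed array inherits the bound $|A_{ji}|\leq 1$, so Proposition~5.1.1 applies to $Q_n$ with no additional hypothesis, and the self-duality of the Lee--Yang form closes the argument.
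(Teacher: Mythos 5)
Your derivation is correct as a reduction of the circle theorem to Proposition 5.1.1, and there is in fact no proof in the paper to compare it against: the theorem is quoted from Ruelle with a terminal \qed\ and simply described as a corollary of the proposition. Your argument supplies exactly the omitted derivation, and it is the standard one: the collapse $t_1=\cdots=t_n=t_0$ turns the multivariate polynomial into the univariate $P_n(t_0)$ and excludes zeros with $|t_0|>1$; the reversed polynomial $Q_n(t)=t^nP_n(1/t)$, which your complement reindexing $T=S^\prime$ correctly identifies as the Lee--Yang polynomial of the transposed array $B_{ij}=A_{ji}$, excludes zeros with $|t_0|<1$; and the two constant terms and leading coefficients being $1$ makes all of this legitimate. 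Every step of the bookkeeping checks out.

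One caveat about the point you emphasize in your last paragraph, namely that the transpose trick lets the argument run \emph{without} assuming $A_{ij}=A_{ji}$. The implication you prove is logically valid, but the apparent extra generality is illusory, because with the symmetry hypothesis dropped both Proposition 5.1.1 and the theorem are false as stated. For the proposition: take $n=2$, $A_{12}=1$, $A_{21}=0$; then $P_2(t_1,t_2)=1+t_1+t_1t_2$ vanishes at $(t_1,t_2)=(1,-2)$, with $|t_1|\geq 1$ but $|t_2|=2>1$. For the theorem: take $n=3$, $A_{12}=A_{13}=1$ and all other $A_{ij}=0$; then $P_3(t)=1+t+t^3$, which has one real root in $(-1,0)$ and a conjugate pair of modulus greater than $1$, so its zeros do not lie on $|t|=1$. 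Ruelle's actual statements carry the hypothesis $A_{ij}=A_{ji}$ (the $A$'s come from ferromagnetic pair couplings), and the paper's transcription has silently dropped it; your proof should therefore be read as ``the proposition as transcribed implies the theorem as transcribed,'' an implication between two statements that are only true once symmetry is restored. With that hypothesis back in place your argument still works and in fact simplifies: $B_{ij}=A_{ji}=A_{ij}$, so $Q_n=P_n$, i.e.\ $P_n$ is self-reciprocal, its zero set is invariant under $t\mapsto 1/t$, and this combined with your first exclusion forces all zeros onto the unit circle.
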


For example, if  $A_{ij}= \alpha$ for all $i,j$ in theorem \Ref{corr3.2},
$\alpha\in\RealN$ and $|\alpha|\leq 1$,  then the zeros of the polynomial
\[ P(t) = \sum_{S\in N} \alpha^{s(n-s)} t^{s}  = \sum_{s=0}^n \Bi{n}{s} \alpha^{s(n-s)} t^s . \]
all lie on the unit circle $|t|=1$ in the $t$-plane.

According to the Lee-Yang theorem, the zeros of the partition function of a wide class 
of lattice spin and related models (including the lattice gas and the Ising model) 
lie on the unit circle in the complex $y=e^{\beta H}$ plane.  In the thermodynamic
limit the Lee-Yang zeros accumulate on the unit circle, with a density profile which
determines the critical behaviour in the model.  The Lee-Yang zeros approach 
the positive real axis in the $y$-plane, creating an \textit{edge singularity} at the 
critical value of the magnetic field $H$.  This edge singularity is now known to be associated 
with a $\phi^3$ scalar field theory \cite{F78}, and has been studied in the
context of a wide variety of lattice models, including the $n$-vector model
\cite{KF79}, the $q$-state Potts model \cite{KC98,KC98a}, and also for a two dimensional
$\phi^3$ theory \cite{C85}.

In contrast to Lee-Yang zeros, the zeros in the complex temperature plane may not
be distributed on the unit circle (in the complex $x=e^{\beta J}$ plane).  More generally, 
they are distributed in ways which depend on the model, and are useful in understanding 
phase transitions \cite{F65}.  In the Ising model with special 
boundary conditions these zeros accumulate on two circles in the complex $x$-plane
\cite{F65,BK74}. These circles correspond to the ferromagnetic and anti-ferromagnetic
phases.  This pattern is not seen in the $q$-state Potts model  when $q>2$ \cite{KC98,KC01}.

The partition function zeros of models of lattice clusters have received far less 
attention in the literature.  Some results which are related to the Lee-Yang
singularities in spin systems are known.  For example, the Lee-Yang edge 
singularity of the Ising model in an imaginary magnetic field in $d$ dimensions
was shown to be related to critical behaviour in a model of lattice trees in 
$d\plus 2$ dimensions (see reference \cite{PS81}), giving exact values for the
lattice tree entropic exponent in two and in three dimensions.  Partition function
zeros of collapsing self-avoiding walks were studied in references \cite{L04,RSL14},
and considered for adsorbing walks in reference \cite{M78,TAP13,TLS14}.

In this paper the properties of Lee-Yang and Fisher zeros in a model of adsorbing
self-avoiding walks are considered.  Firstly, theorems about the distribution of
polynomial zeros  can be used to prove results on the distribution of generating
and partition function zeros.  These theorems include
classical theorems by Erd\"os and Tur\'an \cite{ET50}, and also newer results in
references \cite{HN08} and \cite{E08}.  
Secondly, by using the GAS algorithm \cite{JvRR09,JvR16} 
to estimate microcanonical data on adsorbing walks, a numerical approach 
to study the distribution of Lee-Yang zeros is presented.

The paper is organised as follows.

In section \ref{section2} a short review of models of adsorbing walks is given.
In particular, the partition function, generating function and free energy is
defined, and partial sums of the generating function is introduced.  

In sections \ref{Lee-Yang} and \ref{Yang-Lee}
the properties of zeros of partial sums of the generating function are
presented, while partition function zeros (Fisher zeros) are examined in section \ref{Fisher2}.

In general the partition function of interacting models of connected clusters 
(walks, polygons or animals) in a lattice is of the form
\begin{equation}
Z_n(a) = \sum_{m\geq 0} q_n (m)\, a^m
\end{equation}
where $q_n(m)$ is the number of clusters of size $n$ and energy $m$.  The zeros
$\{ a_k\}$ of $Z_n(a)$ are \textit{Fisher zeros} (or partition function zeros).  While
the results in section \ref{Fisher2} are directly applicable to the model of adsorbing
walks in this paper, it is also the case that similar approaches can be used in
other models. 

The partition function $Z_n(a)$ is a polynomial in 
$a$, and so it factors such that the (extensive) free energy of the model is given by
\begin{equation}
f_n(a) = \log Z_n(a) = \log C + \sum_{k\geq 0} \log (a\minus a_k)
\end{equation}
where $\log C$ is an analytic (background) contribution to $f_n(a)$.  At each zero $a_k$
there is a branch point in $f_n(a)$ in the complex $a$-plane, and, by taking
the thermodynamic limit $n\to\infty$, these singular points distribute
in the $a$-plane and create edge-singularities on the positive real $a$-axis.
These edge-singularities correspond to phase transitions in the model
\cite{LY52,YL52,Ruelle83}.  It is shown in this paper that, for adsorbing 
walks, a positive fraction of partition function zeros accumulate in an annulus with centre
at the origin (see theorem \ref{thm2.6}), and that the angular distribution
of the zeros is bounded (see theorem \ref{thm2.7a}) in the thermodynamic limit.

Related to the above are zeros of the generating function
\begin{equation}
G(a,t) = \sum_{n\geq 0} Z_n(a)\, t^n .
\end{equation}
These are studied in sections \ref{Lee-Yang} and \ref{Yang-Lee}
by considering partial sums of the generating
function, and examining the properties of $t$-plane zeros (these are
\textit{Lee-Yang zeros}) or the properties of $a$-plane zeros (these
are \textit{Yang-Lee zeros}).  Lee-Yang zeros are shown to accumulate
on a circle in the $t$-plane for models of adsorbing walks
(see theorems \ref{thm3.4} and \ref{thm3.5}), while
Yang-Lee zeros in the $a$-plane are found to have properties similar
to the partition function zeros examined in section \ref{Fisher2}.  

The numerical determination of the location of partition function zeros are
examined in section \ref{Numerical} and the properties of 
the square and cubic lattice adsorbing walk partition function 
and generating function zeros are considered numerically in 
sections \ref{Numerical2} and \ref{Numerical3}, respectively.  
The Lee-Yang zeros are found to distribute evenly on a
circle in the $t$-plane (approaching a uniform distribution with increasing
length of the walks), while partition function zeros are more widely
distributed.  It appears that a fraction of partition function zeros accumulate
on a circle of critical radius in the $a$-plane.  Similar results are found for
Yang-Lee zeros in the $a$-plane. In addition, the trajectories of the leading
partition function zeros were estimated, as well as the radial and angular
distribution of partition function and Yang-Lee zeros.

Some final comments are made in the conclusions in section \ref{conclusions}.

\begin{figure}[t]
\input Figures/figure01.tex
\caption{A positive self-avoiding walk in the half square lattice.  The walk
starts at the origin, and steps into the upper half plane on lattice sites.
Each vertex of the walk in the boundary of the half lattice is a visit (with 
the exception of the origin, which is not a visit).  This walk has 7 visits.}
\label{figure01}  
\end{figure}

\section{Lee-Yang and Fisher zeros in adsorbing self-avoiding walks}
\label{section2}

\subsection{Adsorbing self-avoiding walks}

Let $\mathL^d$ be the $d$-dimensional hypercubic lattice, and define the half-lattice
$\mathL_+^d =\{ \edge{\vec{u}}{\vec{v}}\in \mathL^d 
\svv \hbox{$u_d\geq0$ and $v_d\geq 0$} \}$ where $\edge{\vec{u}}{\vec{v}}$ 
is a lattice edge between adjacent vertices $\vec{u}$ and $\vec{v}$, and 
$u_d$ is the $d$-th component of $\vec{u}$.  The boundary of $\mathL_+^d$ is
denoted by $\partial \mathL_+^d$ and it is a $(d\minus 1)$-dimensional hypercubic lattice
containing the origin $\vec{0}$. 

A self-avoiding walk from $\vec{0}$ in $\mathL^d_+$ is a \textit{positive walk}.
Denote the number of positive walks of length $n$ with $v$ vertices (called 
\textit{visits}, and excluding $\vec{0}$) in $\partial \mathL^d_+$ by $c_n^+(v)$.
See figure \ref{figure01}.

The total number of positive walks from the origin in $\mathL^d_+$
is $c_n^+ = \sum_v c_n^+(v)$.  The expected asymptotic behaviour of $c_n^+$ is
given by
\begin{equation}
c_n^+ \simeq B\,n^{\gamma_1-1}\,\mu_d^n
\label{eqn3}   
\end{equation}
where $\mu_d$ is the growth constant of the self-avoiding walk \cite{H57}.  The
entropic exponent of half-space walks $\gamma_1$ has exact value
$\sfrac{61}{64}$ in two dimensions \cite{C83}, and approximate value
$0.679(2)$ in three dimensions \cite{HG94} (this estimate was improved to
$0.677667(17)$ in reference \cite{CCG15}).  The growth constant has values
$\mu_2 = 2.638\ldots$ \cite{CJ12} and $\mu_3=4.684\ldots$ \cite{C13}.

The partition function of a model of adsorbing self-avoiding walks is given by
\begin{equation}
A^+_n(a) = \sum_{v=0}^n c_n^+ (v)\, a^v 
\label{eqn4}   
\end{equation}
where $a = e^{\beta J}$ is an activity conjugate to the number of visits
of the positive walks in $\partial\mathL^d_+$.  The \textit{finite size free energy}
of the model is $F_n(a) = \log A^+_n(a)$.

The thermodynamic limit of this model is given by the limit
\begin{equation}
\C{A}^+(a) = \lim_{n\to\infty}\sfrac{1}{n}\, F_n(a)  = \lim_{n\to\infty} \sfrac{1}{n}\, \log A_n^+(a) .
\label{eqn5}   
\end{equation}
This was shown in reference \cite{HTW82}, and see section 9.1 in reference \cite{JvR15}
for more results on this model.   $\C{A}^+(a)$ is a convex function of $\log a$.
Of particular interest in this model is the existence of a critical value $a_c^+$ 
of the activity $a$ where the walk adsorbs onto $\partial\mathL_+^d$.  It is known 
that $\C{A}^+(a)$ is not analytic at $a_c^+$, that
$a_c^+ < \sfrac{\mu_d}{\mu_{d-1}}$ (see, for example, corollary 9.3 in
reference \cite{JvR15}) and that $a^+_c>1$ (see theorem 9.10 in 
reference \cite{JvR15}, or reference \cite{JvR98} for adsorbing lattice 
polygons).  In $\mathL^2_+$ series enumeration gives 
$a_c^+ = 1.77564\ldots$ \cite{BGJ12} and Monte Carlo simulations in 
$\mathL^3_+$ give $a_c^+ = 1.306(7)$ \cite{JvR16}.

It is known that \cite{HTW82}
\begin{equation}
\C{A}^+(a)  \; \cases{
= \log \mu_d, & \hbox{if $a\leq a_c^+$}; \\
> \log \mu_d, & \hbox{if $a>a_c^+$}. 
}
\label{eqn6}   
\end{equation}The \textit{energy}
(or density of visits) $\C{V}(a) = \lim_{n\to\infty}\sfrac{1}{n} \LA v_n \RA$ (where
$\LA v_n \RA$ is the expected number of visits for walks of length $n$)
is the order parameter of the model. By convexity it follows that 
$\C{V}(a) = a\sfrac{d}{da} \C{A}^+(a) =0$ if $0<a<a_c^+$ 
(this is the \textit{desorbed phase}), and $\C{V}(a) = a\sfrac{d}{da}\C{A}^+(a) >0$ 
if $a>a_c^+$ (whenever $\C{A}^+(a)$ is differentiable -- this is the
\textit{adsorbed phase}).   Notice that 
$\C{A}^+(a)$ is differentiable for almost all $a  >  0$.  

In the desorbed phase the walk tends to make few visits in 
the adsorbing plane $\partial \mathL^d_+$ (on average, $o(n)$ for 
walks of length $n$).  That is, in the desorbed phase the walk tends to drift 
away from $\partial \mathL^d_+$ and explores the bulk
of the lattice $\mathL^d_+$. In the adsorbed phase $\C{V}(a)>0$,
and the walk returns to $\partial\mathL^d_+$ with positive frequency. These
two phases are separated by the adsorption critical point $a_c^+$.

The asymptotic behaviour of $A_n^+(a)$ are given by expressions similar to equation
\Ref{eqn3}.  In particular, it is expected that
\begin{equation}
A_n^+(a) \sim
\cases{
B_-\,n^{\gamma_1-1}\, \mu_d^n, & \hbox{\norf desorbed phase ($a<a_c^+$)}; \\
B_0\,n^{\gamma_s-1}\, \mu_d^n, & \hbox{\norf critical point ($a=a_c^+$)}; \\
B_+\,n^{\gamma^{(d-1)}-1}\, e^{n\,\C{A}^+(a)}, 
& \hbox{\norf adsorbed phase ($a>a_c^+$)} .
}
\label{eqn7}   
\end{equation}
The exponent $\gamma_s$ is a \textit{surface
exponent} and has exact value $\gamma_s = \sfrac{93}{64}$ in two
dimensions \cite{BEG89}.  In three dimensions, it was estimated that
$\gamma_s = 1.304(16)$ \cite{LM88A,ML88A}.  In the adsorbed phase the exponent
$\gamma^{(d-1)}$ is the entropic of exponent of self-avoiding walks in one
dimension lower.

The scaling of $\C{A}^+(a)$ is controlled by the crossover exponent $\phi$
which has exact value $\shalf$ in two dimensions \cite{BY95,BEG89}, 
and mean field value $\shalf$ \cite{BEG89}.
It is thought to have value $\shalf$ in all dimensions $d\geq 2$ and in $d=3$
was estimated to be $\phi=0.505(6)$ \cite{JvR16}.  Since there
is a bulk entropy contribution of $\log \mu_d$ in $\mathL^d_+$, the free energy
scales as
\begin{equation}
\C{A}^+(a) \sim \log \mu_d + \sfrac{1}{n}\, f_a (n^\phi\,(a\minus a_c^+)),
\end{equation}
where $f_a$ is a scaling function such that $f_a(x) = 0$ if $a<a_c^+$.

The \textit{grand partition} or \textit{generating function} 
of adsorbing walks is defined by
\begin{equation}
G(a,t) = \sum_{n=0}^\infty A_n^+(a)\, t^n = 
\sum_{n=0}^\infty \sum_{v=0}^n c_n(v)\, a^v \; t^n .
\label{eqn9}   
\end{equation}
The circle of convergence of $G(a,t)$, in the $t$-plane, is evidently
given by $|t|=e^{-\C{A}^+(a)}$, since by the root test,   $G(a,t)$ is convergent if
\begin{equation}
|t| < t_c = \lim_{n\to\infty} | A_n^+(a) |^{-1/n} .
\end{equation}
There is a singular point in $G(a,t)$ on the positive real axis at $t_c$, and, if
$a>0$, then $t_c = e^{-\C{A}^+(a)}$ by equation \Ref{eqn5}.  In the desorbed phase
$a\in [0, a^+_c)$ and $t_c = \mu_d^{-1}$, by equation \Ref{eqn6}.

The partial sums of the generating function is defined by
\begin{equation}
G_N(a,t) = \sum_{n=0}^N A_n^+(a)\, t^n = 
\sum_{n=0}^N \sum_{v=0}^n c_n(v)\, a^v \; t^n 
\label{eqn11}   
\end{equation}
so that $G_N \to G$ as $N\to\infty$ and $|t| < t_c$.  For fixed $N$ and $a$, $G_N(a,t)$ is
a polynomial with non-negative coefficients, and is non-decreasing along the 
positive real axis.   The grand potential of walks of length at most $N$ is
$\log G_N(a,t)$.  This is singular at the zeros of $G_N(a,t)$.  Recovering $\log G(a,t)$ by
taking $N\to\infty$ shows that a singular point develops on the real axis at $t_c$.  This
is an \textit{edge-singularity}, which is the result of $t$-plane singular points in 
$G_N(a,t)$ converging to the real axis at $t_c$ as $N\to\infty$.

\subsection{Lee-Yang zeros}
\label{Lee-Yang}

Theorem \ref{corr3.2} is very useful in analysing lattice spin systems, such as the 
lattice gas \cite{Ruelle83}.  However, it seems more difficult to apply this result
to a model of adsorbing self-avoiding walks.  
Instead, one could use the theorem by Hughes and Nikechbali \cite{HN08}, which
will be stated below. 

Let $P(t)$ be a polynomial with zeros $\LA t_1,t_2,\ldots,t_N\RA$.
Follow reference \cite{HN08} and proceed by defining two functions $\nu_N$ and 
$\alpha_N$ to measure the distribution of zeros: 
\begin{eqnarray}
\nu_N(\rho) &= \#
\{ t_k \vert\, (1\minus \rho)\leq |t_k| \leq \sfrac{1}{1-\rho} \} ; \cr
\alpha_N(\theta,\phi) &= \#
\{ t_k \vert\, \theta< \Arg t_k  \leq \phi \} . 
\label{eqn12}   
\end{eqnarray}
Here $\Arg t_k$ is the principal argument of $t_k$ in the $t$-plane.
Then $\nu_N(\rho)$ is the number of zeros in a thin annulus about the unit circle (and of width
$2\rho + O(\rho^2)$).  If the zeros cluster about the unit circle, then $\nu_N (\rho) \simeq N$
as $N\to\infty$, for small $\rho$.  If $\lim_{N\to\infty} \sfrac{1}{N}\, \nu_N(\rho) = 1$ for
any $\rho\in(0,1)$, then almost all the zeros accumulate on the unit circle in 
the $t$-plane as $N\to\infty$.

The function $\alpha_N(\theta,\phi)$ is a measure of the
angular distribution of zeros.  If the distribution is uniform in the limit that
$N\to\infty$, then $\lim_{N\to\infty} \sfrac{1}{N}\, \alpha_N(\theta,\phi) = \sfrac{1}{2\pi}\,
(\phi\minus\theta)$ for $-\pi \leq \theta < \phi \leq \pi$.

\begin{theorem}[Hughes and Nikechbali; see theorem 3 in \cite{HN08}]
Let $\LA a_k \RA$ be a sequence of complex numbers such that $a_0a_N\not=0$.
Let $P(t) = \sum_{n=0}^N a_n\,t^n$.  Define
\[ L_N = \log \sum_{n=0}^N |a_n| - \sfrac{1}{2} \log |a_0| - \sfrac{1}{2} \log |a_N| . \]
Then, for $0< \rho < 1$,
\[ 1\minus \sfrac{1}{N}\, \nu_N(\rho) \leq \sfrac{2}{N\rho}\, L_N . \]
In, particular, if $L_N= o(N)$, then $\lim_{N\to\infty} \sfrac{1}{N}\, \nu_N(\rho) = 1$ for
any $\rho\in (0,1)$. \qed
\label{thm3.3}   
\end{theorem}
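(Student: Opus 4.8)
The plan is to bound the number of zeros lying \emph{outside} the closed annulus $1-\rho \le |t| \le (1-\rho)^{-1}$ by combining the Vieta relation for the product of the moduli of the zeros with a logarithmic integral estimate on the unit circle. Writing $P(t) = a_N\prod_{k=1}^N (t-t_k)$, the constant term gives $|a_0| = |a_N|\prod_{k=1}^N |t_k|$, so $\sum_k \log|t_k| = \log|a_0| - \log|a_N|$. The central analytic input is Jensen's formula on $|t|=1$ (valid since $a_0\neq 0$), which in its two equivalent forms reads
\[
\frac{1}{2\pi}\int_0^{2\pi}\log|P(e^{i\theta})|\,d\theta
= \log|a_N| + \sum_{|t_k|>1}\log|t_k|
= \log|a_0| + \sum_{|t_k|<1}\log\frac{1}{|t_k|}.
\]
Setting $S=\sum_{n=0}^N |a_n|$ (the quantity appearing in $L_N$), the pointwise bound $|P(e^{i\theta})| \le S$ forces the mean of $\log|P|$ to be at most $\log S$, and hence
\[
\sum_{|t_k|>1}\log|t_k| \le \log S - \log|a_N|,
\qquad
\sum_{|t_k|<1}\log\frac{1}{|t_k|} \le \log S - \log|a_0|.
\]

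Next I would convert these two estimates into zero counts. Let $n_1$ be the number of zeros with $|t_k| < 1-\rho$ and $n_2$ the number with $|t_k| > (1-\rho)^{-1}$, so that $N - \nu_N(\rho) = n_1 + n_2$. Every outer zero satisfies $\log|t_k| > -\log(1-\rho)$ and sits among the (positive) terms of the first sum, whence $n_2\,[-\log(1-\rho)] \le \log S - \log|a_N|$; symmetrically every inner zero satisfies $\log\frac{1}{|t_k|} > -\log(1-\rho)$, giving $n_1\,[-\log(1-\rho)] \le \log S - \log|a_0|$. Adding the two inequalities and recognising the numerator $2\log S - \log|a_0| - \log|a_N| = 2L_N$ yields
\[
N - \nu_N(\rho) = n_1 + n_2 \le \frac{2L_N}{-\log(1-\rho)}.
\]
The elementary bound $-\log(1-\rho) \ge \rho$ for $0<\rho<1$ then replaces the denominator, and dividing by $N$ produces the claimed $1 - \frac{1}{N}\nu_N(\rho) \le \frac{2}{N\rho}L_N$. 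The limiting statement follows at once: if $L_N = o(N)$ the right-hand side tends to $0$, and since $0 \le 1 - \frac{1}{N}\nu_N(\rho)$ always holds, the squeeze forces $\frac{1}{N}\nu_N(\rho)\to 1$.

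The step I expect to carry the real weight is the passage from the coefficients to the logarithmic integral of $|P|$ on the unit circle, that is, invoking Jensen's formula (equivalently the Mahler-measure identity $M(P) = |a_N|\prod_k\max(1,|t_k|)$) together with the pointwise bound $|P|\le S$. Once that bridge is in place the remainder is bookkeeping: the two-sided split of the zeros into inner and outer groups and the observation that each misplaced zero must contribute at least $-\log(1-\rho)$ to the relevant sum. A minor technical point is the degenerate case in which some $|t_k|$ equals $1-\rho$ or $(1-\rho)^{-1}$ exactly; since such zeros are counted inside $\nu_N(\rho)$, the strict inequalities above stay on the conservative side and no adjustment is needed.
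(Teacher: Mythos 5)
Your proof is correct, and it is essentially the standard one: the paper does not prove this theorem at all but imports it verbatim from Hughes and Nikeghbali \cite{HN08}, whose argument is precisely your combination of Jensen's formula (equivalently the Mahler-measure bound $|a_N|\prod_k \max(1,|t_k|) \le \sum_n |a_n|$) applied to $P$ and to its reversed polynomial, followed by the count of inner and outer zeros each contributing at least $-\log(1-\rho)$. The only detail you leave tacit is that replacing the denominator $-\log(1-\rho)$ by the smaller quantity $\rho$ requires the numerator $2L_N$ to be nonnegative, which is immediate since $\sum_n |a_n| \ge \max(|a_0|,|a_N|)$.
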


Theorem \ref{thm3.3} may be applied to the model of adsorbing walks.
Assume that the asymptotic behaviour of $A_n^+(a)$ is as given in equation \Ref{eqn7}.  Define
\begin{equation}
\lambda_a = \cases{
\mu_d, & \hbox{if $a\leq a_c^+$}; \cr
e^{\C{A}^+(a)}, & \hbox{if $a> a_c^+$}.
}
\label{eqn13}   
\end{equation}

By equation \Ref{eqn5}, 
\begin{equation}
\lim_{n\to\infty} \LH \Sfrac{A_n^+(a)}{\lambda_a^n} \RH^{1/n} = 1,\q
\hbox{and $A_n^+(a) = \lambda_a^{n+o(n)}$}.
\label{eqn14}   
\end{equation}
Replace $t$ by $\lambda_a^{-1}t$ in equation \Ref{eqn11} to define the partial sum
\begin{equation}
g_N(a,t) = \sum_{n=0}^N \LB\Sfrac{A_n^+(a)}{\lambda_a^n}\RB \! t^n .
\label{eqn15}   
\end{equation}
Taking $N\to\infty$ gives a generating function with circle of convergence $|t|=1$ for
any finite $a>0$.

Put $a_n = \sfrac{A_n^+(a)}{\lambda_a^n}$ in theorem \ref{thm3.3}.  Then the function $L_N$ 
becomes
\[ L_N = \log \sum_{n=0}^N \LV\Sfrac{A_n^+(a)}{\lambda_a^n} \RV 
- \sfrac{1}{2} \log 1 - \sfrac{1}{2} \log \LV\Sfrac{A_N^+(a)}{\lambda_a^N} \RV . \]
Since $A_n^+(a) = \lambda_a^{n+o(n)}$, this becomes
\[ L_N \leq \log \LB (N\plus 1)\, \lambda_a^{m_N} \RB + \sfrac{1}{2} \log \lambda_a^{o(N)} \]
where $m_N$ is that value of $o(n)$ which maximizes $\LV\sfrac{A_n^+(a)}{\lambda_a^n} \RV
= \LV \lambda_a^{o(n)}\RV$ (and so $m_N = o(N)$).  Divide by $N$ and take $N\to\infty$ to see that
\begin{equation}
 \lim_{N\to\infty} \sfrac{1}{N} \; L_N  = 0
\label{eqn16}   
\end{equation}
and so $L_N = o(N)$.  It follows by theorem \ref{thm3.3} that
$\lim_{N\to\infty} \sfrac{1}{N}\, \nu_N(\rho) = 1$ for any $\rho\in(0,1)$, 
and so the $t$-plane zeros of the partial generating function $g_N(a,t)$ 
cluster on the unit circle as $N\to\infty$. The result is the following
corollary of theorem \ref{thm3.3}:

\begin{theorem}
The zeros of $g_N(a,t)$ converge on the unit circle $|t| = 1$ in the $t$-plane as
$N\to\infty$ in the sense that, for any $\rho\in (0,1)$ and $\epsilon>0$, there is an $N_0$ 
such that for all $N\geq N_0$, $|1 - \sfrac{1}{N}\,\nu_N(\rho)| < \eps$ .  \qed
\label{thm3.4}  
\end{theorem}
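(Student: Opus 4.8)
The plan is that this statement is essentially a repackaging of the conclusion already reached in the discussion preceding it, so I would prove it by a direct appeal to Theorem \ref{thm3.3} followed by an unwinding of the definition of the limit. First I would take $a_n = A_n^+(a)/\lambda_a^n$ as in \Ref{eqn15}, so that $g_N(a,t) = \sum_{n=0}^N a_n t^n$ is exactly a polynomial of the type treated in Theorem \ref{thm3.3}. Before invoking that theorem I must verify its hypothesis $a_0 a_N \neq 0$: since $A_n^+(a) = \sum_v c_n^+(v)\,a^v$ has non-negative coefficients with $c_n^+(0)\geq 1$, we have $A_n^+(a) > 0$ for every $a>0$, whence each $a_n$ is strictly positive and in particular $a_0 a_N > 0$.

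The second step is to control $L_N$. Using the estimate $A_n^+(a) = \lambda_a^{n+o(n)}$ from \Ref{eqn14}, I would bound the coefficient sum $\sum_{n=0}^N |a_n|$ by $(N+1)\,\lambda_a^{m_N}$, where $m_N = o(N)$ is the index maximising $|A_n^+(a)/\lambda_a^n|$, while the two endpoint terms contribute only $\frac{1}{2}\log \lambda_a^{o(N)}$. Dividing by $N$ and letting $N\to\infty$ then gives $\frac{1}{N}L_N \to 0$, that is $L_N = o(N)$, which is \Ref{eqn16}. This is the one genuinely substantive step, and it is where the subexponential correction to $A_n^+(a)$ must be handled with care so that both the logarithm of the coefficient sum and the two endpoint contributions are individually $o(N)$; everything else is formal.

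With $L_N = o(N)$ established, Theorem \ref{thm3.3} immediately yields $\lim_{N\to\infty} \frac{1}{N}\,\nu_N(\rho) = 1$ for every $\rho \in (0,1)$. The final step is purely to rephrase this limit in the $\epsilon$-$N_0$ language of the statement: by the definition of convergence, for each $\rho \in (0,1)$ and each $\epsilon>0$ there is an $N_0$ such that $|1 - \frac{1}{N}\,\nu_N(\rho)| < \epsilon$ for all $N \geq N_0$, which is exactly the claim. The only real obstacle is the $L_N = o(N)$ bound of the second step; the opening verification of the hypotheses and the closing translation of the limit are both routine.
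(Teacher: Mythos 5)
Your proposal is correct and follows essentially the same route as the paper: substitute $a_n = A_n^+(a)/\lambda_a^n$ into Theorem \ref{thm3.3}, show $L_N = o(N)$ using $A_n^+(a) = \lambda_a^{n+o(n)}$ from equation \Ref{eqn14}, and then unwind the resulting limit $\lim_{N\to\infty}\sfrac{1}{N}\,\nu_N(\rho)=1$ into the $\epsilon$--$N_0$ form of the statement. Your explicit verification that $a_0 a_N \neq 0$ is a detail the paper leaves implicit (there $a_0 = A_0^+(a)/\lambda_a^0 = 1$ exactly), but it does not alter the argument.
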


This implies that the zeros of $G_N(a,t)$ accumulate on the circle 
$|t|=\lambda_a^{-1}$ as $N\to\infty$.   Hence, if
\begin{equation}
\psi_N(\rho) = \#
\{ t_k \vert\, (1\minus \rho) \lambda_a^{-1} \leq |t_k| 
\leq \sfrac{1}{1-\rho}\lambda_a^{-1} \} ,
\end{equation}
then a corollary of theorem \ref{thm3.4} is

\begin{corollary}
The zeros of $G_N(a,t)$ converge on the circle $|t| = \lambda_a^{-1}$ in the $t$-plane as
$N\to\infty$ in the sense that, for any $\rho\in (0,1)$ and $\epsilon>0$, there is an $N_0$ 
such that for all $N\geq N_0$, $|1 - \sfrac{1}{N}\,\psi_N(\rho)| < \eps$ .  \qed
\label{thm3.4C}  
\end{corollary}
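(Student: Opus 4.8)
The plan is to derive Corollary \ref{thm3.4C} directly from Theorem \ref{thm3.4} by the elementary rescaling of zeros that is already implicit in the passage from $G_N(a,t)$ to $g_N(a,t)$. First I would record the exact relationship between the two polynomials: comparing equations \Ref{eqn11} and \Ref{eqn15}, the substitution $t\mapsto \lambda_a^{-1}t$ in $G_N$ produces $g_N(a,t) = G_N(a,\lambda_a^{-1}t)$. Consequently $t_\ast$ is a zero of $g_N(a,\cdot)$ if and only if $\lambda_a^{-1}t_\ast$ is a zero of $G_N(a,\cdot)$; equivalently, the map $s\mapsto \lambda_a s$ is a bijection (preserving multiplicities) between the zeros of $G_N$ and the zeros of $g_N$.

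Next I would track how this bijection acts on the moduli of the zeros. Since $a>0$, equation \Ref{eqn13} guarantees that $\lambda_a$ is a strictly positive real number (indeed $\lambda_a\geq \mu_d>1$), so scaling by $\lambda_a^{-1}$ is a genuine radial contraction carrying the unit circle $|t|=1$ onto the circle $|t|=\lambda_a^{-1}$. Writing $\LA t_k\RA$ for the zeros of $g_N(a,\cdot)$ and $s_k=\lambda_a^{-1}t_k$ for the corresponding zeros of $G_N(a,\cdot)$, we have $|s_k|=\lambda_a^{-1}|t_k|$. Multiplying the annular inequality $(1\minus\rho)\leq|t_k|\leq\sfrac{1}{1-\rho}$ from the definition of $\nu_N(\rho)$ through by $\lambda_a^{-1}$ yields $(1\minus\rho)\lambda_a^{-1}\leq|s_k|\leq\sfrac{1}{1-\rho}\lambda_a^{-1}$, which is exactly the annulus appearing in the definition of $\psi_N(\rho)$. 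Hence the counting function $\psi_N(\rho)$ for the zeros of $G_N$ coincides identically with $\nu_N(\rho)$ for the zeros of $g_N$.

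Finally I would invoke Theorem \ref{thm3.4}: given any $\rho\in(0,1)$ and $\eps>0$, it supplies an $N_0$ such that $|1\minus\sfrac{1}{N}\,\nu_N(\rho)|<\eps$ for all $N\geq N_0$. Because $\psi_N(\rho)=\nu_N(\rho)$ for every $N$, the same $N_0$ and the same inequality hold verbatim with $\psi_N$ in place of $\nu_N$, which is precisely the assertion of the corollary. There is no genuine obstacle in this argument; the entire analytic content sits in Theorem \ref{thm3.4}, and the only point requiring any care is confirming that $\lambda_a$ is real and positive (so that the rescaling is a well-defined radial map taking one circle exactly onto the other and preserving the annular neighbourhoods), which is immediate from equation \Ref{eqn13}.
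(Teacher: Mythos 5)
Your proposal is correct and is essentially the paper's own argument: the paper obtains the corollary precisely by noting that $g_N(a,t)=G_N(a,\lambda_a^{-1}t)$, so the zeros of $G_N$ are the zeros of $g_N$ rescaled by the positive real factor $\lambda_a^{-1}$, whence $\psi_N(\rho)$ for $G_N$ equals $\nu_N(\rho)$ for $g_N$ and Theorem \ref{thm3.4} transfers verbatim. Your write-up simply makes explicit the bijection of zeros and the positivity of $\lambda_a$, which the paper leaves implicit.
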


That is, $\lim_{N\to\infty} \sfrac{1}{N}\, \psi_N(\rho) = 1$ for any $\rho\in(0,1)$.

The angular distribution of zeros can be considered using a theorem of Erd\"os and Turan
which is similar to theorem \ref{thm3.3} above.

\begin{theorem}[Erd\"os and Tur\'an  \cite{ET50}]
Suppose $\LA a_n\RA$ is a sequence in $\ComplexN$ and suppose that $a_0a_N \not=0$.
Define $P(t) = \sum_{n=0}^n a_n \, t^n$.  Define
\[ L_N = \log \sum_{n=0}^N |a_n| - \shalf \log |a_0| - \shalf \log |a_N| . \]
Then
\[ \LV \sfrac{1}{N}\, \alpha_N(\theta,\phi) - \sfrac{1}{2\pi}\,(\phi \minus\theta) \RV^2
\leq \Sfrac{C}{N}\, L_N ,\]
for some constant $C$. \qed
\label{thm3.6}  
\end{theorem}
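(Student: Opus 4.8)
The plan is to prove the equivalent, sharper discrepancy bound $\left|\frac{1}{N}\alpha_N(\theta,\phi)-\frac{1}{2\pi}(\phi-\theta)\right|\le C\sqrt{L_N/N}$ and then square it; here I use that $L_N=\log\big(\sum_{n=0}^N|a_n|\big/\sqrt{|a_0a_N|}\big)$, which is exactly the logarithmic ``height'' of $P$ that governs the estimate. Writing the zeros as $t_k=r_ke^{i\varphi_k}$, the quantity $\frac{1}{N}\alpha_N(\theta,\phi)$ is the fraction of points $\varphi_k$ of the angular measure $\mu=\frac1N\sum_k\delta_{\varphi_k}$ falling in $(\theta,\phi]$, so the left-hand side is a single-arc instance of the discrepancy of $\mu$ from the uniform measure on the circle. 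First I would invoke the classical harmonic-analytic discrepancy inequality — a statement purely about point sets on the circle — to bound the discrepancy, for any integer $m\ge 1$, by $C\big(\frac1m+\frac1N\sum_{j=1}^m\frac1j|\sigma_j|\big)$, where $\sigma_j=\sum_k e^{ij\varphi_k}$ are the Weyl sums over the \emph{arguments} of the zeros.

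The substance is then to estimate $\sigma_j$ through the coefficients of $P$. The natural bridge is $\log|P(e^{i\psi})|=\log|a_N|+\sum_k\log|e^{i\psi}-t_k|$, whose $j$-th Fourier coefficient I would compute by expanding each $\log|1-r_ke^{i(\varphi_k-\psi)}|$ in its Fourier series. This produces not $\sigma_j$ itself but the \emph{radially weighted} sums $\tilde\sigma_j=\sum_k\rho_k^{\,j}e^{ij\varphi_k}$ with $\rho_k=\min(r_k,r_k^{-1})\le 1$, the two cases $r_k<1$ and $r_k>1$ each contributing the same damping factor $\rho_k^{\,j}$. The size of $\tilde\sigma_j$ is controlled by the oscillation of $\log|P|$ on the unit circle, and this oscillation is in turn bounded by $L_N$: pointwise $\log|P(e^{i\psi})|\le\log\sum_n|a_n|$, while Jensen's formula pins the mean of $\log|P|$ on the circle to $\tfrac12(\log|a_0|+\log|a_N|)+\tfrac12\sum_k|\log r_k|$, so the upward spread of $\log|P|$ is paid for by $L_N$.

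The main obstacle is twofold, and this is where the square-root exponent lives. The first difficulty is replacing the weighted sums $\tilde\sigma_j$ by the true sums $\sigma_j$: zeros lying far from the unit circle are suppressed by the factor $\rho_k^{\,j}$ in $\tilde\sigma_j$, yet their arguments still contribute to $\alpha_N(\theta,\phi)$. I would control this by splitting the zeros according to whether $r_k$ is close to or far from $1$ and observing that a large population of zeros far from the circle, just like strong angular clustering, forces $|a_0|/|a_N|$ or $\sum_n|a_n|$ to be extreme and hence inflates $L_N$, so both obstructions are charged against the same quantity. The second, more delicate, difficulty is that feeding these bounds through the discrepancy inequality by a crude Cauchy--Schwarz yields only the weaker exponent $(L_N/N)^{1/3}$; to recover the stated rate one must replace the generic inequality by Erd\"os and Tur\'an's tailored construction of explicit nonnegative trigonometric majorants and minorants of the indicator of the arc $(\theta,\phi]$, whose Fourier coefficients decay so as to match the growth of the power-sum bounds. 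Optimising the truncation as $m\sim\sqrt{N/L_N}$ then balances the $1/m$ truncation error against the Weyl-sum contribution and produces $\big|\frac1N\alpha_N(\theta,\phi)-\frac{1}{2\pi}(\phi-\theta)\big|\le C\sqrt{L_N/N}$, which squares to the stated inequality.
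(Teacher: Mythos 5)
You should know at the outset that the paper contains no proof of this statement: Theorem \ref{thm3.6} is the classical Erd\"os--Tur\'an theorem, quoted from \cite{ET50}, and the \qed there closes a citation, not an argument (the paper only \emph{uses} the result, in equations \Ref{eqn19a}--\Ref{eqn20a} and theorem \ref{cor222}). So there is nothing in the paper to compare your proof against; the comparison is with the classical argument, and your sketch is essentially a correct reconstruction of it. Your key estimates all check out. The Fourier expansion of $\log|P(e^{i\psi})|=\log|a_N|+\sum_k\log|e^{i\psi}-t_k|$ does produce the radially damped Weyl sums: writing $t_k=r_ke^{i\varphi_k}$ and $\rho_k=\min(r_k,r_k^{-1})$, the coefficient of $e^{ij\psi}$ is $-\frac{1}{2j}\overline{\tilde\sigma_j}$ with $\tilde\sigma_j=\sum_k\rho_k^{\,j}e^{ij\varphi_k}$. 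Jensen's formula gives the circle mean $\frac{1}{2}\log|a_0a_N|+\frac{1}{2}\sum_k|\log r_k|$, so the upward spread of $\log|P|$ above its mean is at most $L_N-\frac{1}{2}\sum_k|\log r_k|$; this simultaneously bounds the Fourier coefficients (a mean-zero function bounded above by $M$ has $L^1$ norm at most $4\pi M$, so $|\tilde\sigma_j|\leq 4jL_N$) and yields $\frac{1}{2}\sum_k|\log r_k|\leq L_N$, which is exactly the quantity you need to charge the off-circle zeros against: $|\sigma_j-\tilde\sigma_j|\leq\sum_k(1-\rho_k^{\,j})\leq j\sum_k|\log r_k|\leq 2jL_N$. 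Feeding $|\sigma_j|=O(jL_N)$ into the discrepancy inequality in your displayed form and taking $m\sim\sqrt{N/L_N}$ gives $C(\frac{1}{m}+\frac{m L_N}{N})=O(\sqrt{L_N/N})$, with the case $L_N\geq N$ trivial since the discrepancy never exceeds $1$.

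Two caveats, neither fatal. First, the one genuine black box in your sketch is the discrepancy inequality itself, i.e.\ the construction of trigonometric majorants and minorants of the indicator of an arc with controlled Fourier support; that construction is the real technical content of the Erd\"os--Tur\'an machinery, and a complete proof would have to include it. Second, your closing worry --- that a ``crude Cauchy--Schwarz'' route only gives $(L_N/N)^{1/3}$ and that the majorant construction must be tailored to the power-sum growth --- is a red herring: once the inequality $C\bigl(\frac{1}{m}+\frac{1}{N}\sum_{j=1}^m\frac{1}{j}|\sigma_j|\bigr)$ is granted in the form you state, the linear-in-$j$ bound on $|\sigma_j|$ and the optimization over $m$ already deliver the square-root rate with no further delicacy. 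You might also note that the handling of off-circle zeros can be done more cleanly than your splitting argument: projecting each zero radially onto the unit circle, using $|e^{i\psi}-re^{i\varphi}|\geq\sqrt{r}\,|e^{i\psi}-e^{i\varphi}|$ for $|e^{i\psi}|=1$, changes the relevant height by at most a factor controlled by $\sqrt{|a_0/a_N|}$ and reduces the whole problem to the on-circle case; this is how the original and subsequent treatments (e.g.\ Ganelius) proceed, and it avoids comparing $\sigma_j$ with $\tilde\sigma_j$ altogether.
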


A corollary of the Erd\"os and Turan theorem for adsorbing walks is a consequence 
of equation \Ref{eqn16}.  Since $\sfrac{1}{N}\, L_N\to 0$ as $N\to\infty$, the result is that
$\sfrac{1}{N}\, \alpha_N(\theta,\phi) \to \sfrac{1}{2\pi}\,(\phi\minus\theta)$.  That
is, the principal arguments of $t$-plane zeros are asymptotically uniformly distributed 
over angles about the origin.

\begin{theorem}
The arguments of the zeros of $G_N(a,t)$ approach a uniform distribution
over angles about the origin in the sense that, 
for any $-\pi \leq \theta < \phi \leq \pi$ and $\eps>0$, there is
an $N_1$ such that for all $N\geq N_1$, $|\sfrac{1}{N} \, \alpha_N(\theta,\phi)
- \sfrac{1}{2\pi}\,(\phi\minus\theta)|^2 \leq \eps$.  \qed
\label{thm3.5}  
\end{theorem}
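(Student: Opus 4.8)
The plan is to obtain Theorem \ref{thm3.5} as a direct corollary of the Erd\"os--Tur\'an theorem (Theorem \ref{thm3.6}), in exactly the same way that Theorem \ref{thm3.4} was obtained from the Hughes--Nikechbali theorem (Theorem \ref{thm3.3}). The essential observation is that the angular counting function $\alpha_N(\theta,\phi)$ is unchanged by the real rescaling already built into $g_N(a,t)$, so the coefficient estimate leading to equation \Ref{eqn16} can be reused verbatim.

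First I would reduce the statement for $G_N(a,t)$ to the rescaled polynomial $g_N(a,t)$ of equation \Ref{eqn15}. The substitution $t \mapsto \lambda_a^{-1} t$ relating the two multiplies every zero of $G_N(a,t)$ by the positive real number $\lambda_a$ defined in equation \Ref{eqn13}. Since $\lambda_a > 0$, this multiplication leaves each principal argument unchanged, so the counting function $\alpha_N(\theta,\phi)$ is identical for the zeros of $G_N(a,t)$ and for the zeros of $g_N(a,t)$. It therefore suffices to bound $\alpha_N(\theta,\phi)$ for the zeros of $g_N(a,t)$, whose coefficients are $a_n = A_n^+(a)/\lambda_a^n$.

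Next I would check the hypotheses of Theorem \ref{thm3.6} for this coefficient sequence. One has $a_0 = A_0^+(a) = 1$ (the single vertex at the origin, which is not a visit), while $a_N = A_N^+(a)/\lambda_a^N > 0$ for every $a > 0$, since $A_N^+(a)$ is a polynomial with non-negative coefficients at least one of which is positive; hence $a_0 a_N \neq 0$ and the theorem applies. The quantity $L_N$ appearing in Theorem \ref{thm3.6} is then precisely the one already estimated in the discussion preceding Theorem \ref{thm3.4}: using $A_n^+(a) = \lambda_a^{n+o(n)}$ from equation \Ref{eqn14} gives equation \Ref{eqn16}, namely $\lim_{N\to\infty}\frac{1}{N}L_N = 0$.

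Finally the conclusion is immediate. Theorem \ref{thm3.6} supplies
\[ \left| \frac{1}{N}\,\alpha_N(\theta,\phi) - \frac{1}{2\pi}\,(\phi-\theta) \right|^2 \leq \frac{C}{N}\,L_N , \]
and because the right-hand side tends to zero, for any $-\pi \leq \theta < \phi \leq \pi$ and any $\epsilon > 0$ one chooses $N_1$ so that $\frac{C}{N}L_N \leq \epsilon$ for all $N \geq N_1$, which is exactly the assertion. I do not expect a serious obstacle: the result is a corollary of the same flavour as Corollary \ref{thm3.4C}. The only point genuinely requiring care is the rescaling invariance of arguments under multiplication by the positive real $\lambda_a$, since it is precisely this observation that reduces the angular statement for $G_N(a,t)$ to the already-established estimate $L_N = o(N)$; without it one would be forced to re-derive the coefficient asymptotics for $G_N(a,t)$ directly.
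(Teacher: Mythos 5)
Your proposal is correct and follows essentially the same route as the paper: apply the Erd\H{o}s--Tur\'an theorem (Theorem \ref{thm3.6}) to the rescaled partial sum $g_N(a,t)$ with coefficients $a_n = A_n^+(a)/\lambda_a^n$, invoke the estimate $L_N = o(N)$ from equation \Ref{eqn16}, and let $N\to\infty$. In fact you are slightly more careful than the paper, which leaves implicit the point you rightly flag as essential --- that multiplication of the zeros by the positive real number $\lambda_a$ preserves principal arguments, so the angular count $\alpha_N(\theta,\phi)$ for $G_N(a,t)$ coincides with that for $g_N(a,t)$ (a direct application to $G_N$ would give $L_N \sim \sfrac{N}{2}\log\lambda_a$, which is not $o(N)$ unless $\lambda_a=1$).
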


Thus, by theorems \ref{thm3.4} and \ref{thm3.5} the $t$-plane zeros of the
partial generating function $G_N(a,t)$ are uniformly distributed on the circle 
of radius $\lambda_a^{-1}$ in the limit as $N\to\infty$.  In the thermodynamic limit this 
circle is a natural boundary of $G(a,t)$ in the $t$-plane, and
the point $\lambda_a^{-1}$ on the positive real axis is an edge singularity.

\subsection{Fisher zeros}
\label{Fisher2}

Let $\LA a_j \RA$ be the set of zeros of $A_n^+(a)$ in the complex 
$a$-plane.  Since $A_n^+(a)$ is a polynomial of degree $n$ in $a$, there
are $n$ such zeros, counted with multiplicity.  The \textit{leading zero}
will always be denoted by $a_1$, and it is defined as that zero with smallest
positive principal argument.

By the fundamental theorem of algebra $A_n^+(a)$ factors into a product
over linear factors of the form
\begin{equation}
A_n^+(a) = C\prod_{j=1}^n (a\minus a_j).
\label{eqn18}   
\end{equation}
The coefficients in (the polynomial) $A_n^+(a)$ are all positive integers, so
the zeros $a_j$ are all in pairs of complex conjugates or are real, and, since
$A_n^+(a)$ is a polynomial with non-negative coefficients, none of the $a_j$ are
located on the positive real axis in the complex $a$-plane.

The coefficient of $a^n$ is equal to $c_n^+(n) = c_n^{(d-1)}$ (the number 
of walks from the origin in $\partial \mathL^d_+ \equiv \mathL^{(d-1)}$, and this
fixes the constant $C=c_n^{(d-1)}$.  
The (finite size extensive) free energy $F_n(a)$ is given by 
\begin{equation}
F_n(a) = \log A_n^+(a) = \log c_n^{(d-1)} + \sum_{j=1}^n \log (a\minus a_j) .
\label{eqn19f}   
\end{equation}
$F_n(a)$ is a non-analytic function of $a$ but is analytic on points along the positive
real axis.  Dividing by $n$ and taking $n\to\infty$ shows that the limiting free energy
is given by
\begin{equation}
\C{F}(a) = \lim_{n\to\infty} \sfrac{1}{n}\, F_n(a)
= \log\mu_{d-1} + \varrho (a),
\label{eqn19q}   
\end{equation}
where the limit
\begin{equation}
\varrho (a) = \lim_{n\to\infty} \sfrac{1}{n}\, \sum_{j=1}^n \log (a\minus a_j)
\end{equation}
exists and 
is the limiting average of $\log(a\minus a_j)$ taken over the zeros
of $A_n^+(a)$.  This limit exists for $a>0$ since $\C{F}(a)$ is 
known to exist in this model \cite{HTW82}.  Since $\varrho(a)$ is analytic
on a line segment in the complex $a$-plane, its analytic continuation into 
the complex $a$-plane is unique up to natural boundaries, by the coincidence 
principle.

\subsubsection{The distribution of Fisher zeros:}
\label{Dist-Fisher}   
Define the plane measure of a complex set $E\subseteq \ComplexN$ 
by noting that $\ComplexN \simeq \RealN^2$ and by considering $E$ 
to be a subset of $\RealN^2$ by inclusion.  This endows $\ComplexN$
with (real valued) plane measure $\alpha$ and a $\sigma$-algebra of
$\alpha$-measure sets.

The distribution of Fisher zeros in the $a$-plane will be analysed by constructing
a probability measure on $\ComplexN$ which is absolutely continuous with respect to
plane measure $\alpha$.
 
Suppose that $\langle h_n (a) \rangle$ is a family of (non-negative, real-valued
and normalised) distribution functions on the complex $a$-plane with respect 
to plane measure $\alpha$, and with the property that
\begin{equation}
\lim_{n\to\infty} h_n(a) = 0,\q\hbox{if $|a|>0$}. 
\end{equation} Then a distribution function
on the $j$-th zero $a_j$ of $A_n^+(a)$ is given by $h_{j,n}(a) = h_n(a-a_j)$.  

A distribution function $H_n(a)$ on the set of zeros $\{ a_j \}$ may be defined by
\begin{equation}
H_n(a) = \sfrac{1}{n} \sum_{j=1}^n h_{j,n}(a) .
\label{eqn26Z}  
\end{equation}
$H_n(a)$ may be used to define a set function $\lambda_n$ on plane measurable 
sets $E$ in $\ComplexN$:
\begin{equation}
\lambda_n\,E = \int_E H_n(a)\,d\alpha(a) 
= \sfrac{1}{n} \int_E \sum_{j=1}^n h_{j,n}(a) \, d\alpha(a) .
\end{equation}
Notice that $\lambda_n\, E = 0$ if $\alpha E = 0$, that $\lambda_n E \leq \lambda_n F$
if $E\subseteq F$ and $\lambda_n E\cup F  = \lambda_n E +\lambda_n F$ if $E\cap F
=\emptyset$.  This shows that $\lambda_n$ is a complete measure on $\alpha$-measure sets
in $\ComplexN$.  Since $\lambda_n \ComplexN = 1$, it is also a probability measure
on $\alpha$-measure sets in $\ComplexN$ and it is absolutely continuous with respect
to plane measure $\alpha$.

There is a great deal of freedom in constructing the measure $\lambda_n$.  In this
paper two examples of the measure $\lambda_n$ are considered:
\begin{itemize}
\item Define the family of functions 
\begin{equation}
 f_n(a) = \cases{
\Sfrac{\tau_n}{\pi}, & \hbox{\norf if $|a|\leq \sfrac{1}{\sqrt{\tau_n}}$}\, ;\\
0, & \hbox{\norf otherwise,}
} \label{eqn28BB}
\end{equation}
where $\tau_n>0$ is a non-decreasing function such that $\tau_n\to\infty$ as $n\to\infty$.
Then $f_n(a) \to 0$ if $|a|>0$ and $n\to\infty$.  Put $f_{j,n}(a) = f_n(a\minus a_j)$
and define the measure $\mu_n$ by 
\begin{equation}
\mu_n E = \sfrac{1}{n} \int_E \sum_{j=1}^n f_{j,n} (a) \, d\alpha (a) 
\label{eqn10q}   
\end{equation}
on the $\sigma$-algebra of $\alpha$-measurable sets $E$.
\item Similarly, define the family of Gaussians on $\ComplexN$ by
\begin{equation}
 g_n (a) = \tau_n \,
e^{-\pi\tau_n\,|a|^2} , 
\label{eqn28CC}
\end{equation}
where $\tau_n>0$ is a non-decreasing function such that $\tau_n\to\infty$ as $n\to\infty$.
Notice that $g_n(a) \to 0$ if $|a|>0$ and $n\to\infty$.  Define
$g_{j,n}(a) = g_n(a\minus a_j)$ and define the measure $\nu_n$ by
\begin{equation}
\nu_n E = \sfrac{1}{n} \int_E \sum_{j=1}^n g_{j,n} (a) \, d\alpha(a) 
\label{eqn28z}
\end{equation}
on the $\sigma$-algebra of $\alpha$-measurable sets $E$.
\end{itemize}
Then both $\mu_n \ll \alpha$ and $\nu_n \ll \alpha$ and one may choose
$\lambda_n = \mu_n$, or $\lambda_n=\nu_n$ in what follows.

Existence of the limit
\begin{equation}
\lambda E = \lim_{n\to\infty} \lambda_n E
\label{eqn10a}   
\end{equation}
for arbitrary $\alpha$-measure sets $E$ is dependent on the model and the 
choice of $\lambda_n$, and may be difficult to prove.  We assume
that $\lambda$ is defined in this way, and that this limit exists, and will have
this as a condition in the theorems below.

If $\lambda$ exists, then it is monotone.  This follows because, if
$E\subseteq F$, then $\lambda_n E \leq \lambda_n F$ for all values $n$.
This shows that $\lambda E \leq \lambda F$ if $E\subseteq F$.   In addition, 
if $E\cap F= \emptyset$, then $\lambda_n E\cup F = \lambda_n E \plus \lambda_n F$
for all values of $n$, and so it follows that 
 $\lambda E\cup F = \lambda E \plus \lambda F$.  Thus, $\lambda$ is
additive over the unions of disjoint sets. 
In addition, $\lambda \emptyset = 0$ and $\lambda \ComplexN = 1$.
Hence, $\lambda$ is a complete probability measure on $\ComplexN$ (on
the $\sigma$-algebra of $\alpha$-measure sets).

\begin{lemma} 
If the limit $\lambda E = \lim_{n\to\infty} \lambda_n E$ exists for all $\alpha$-measure
sets $E$, then the set function $\lambda$ is a complete probability measure 
on the $\sigma$-algebra of $\alpha$-measure sets in $\ComplexN$. \qed
\label{lemma21}   
\end{lemma}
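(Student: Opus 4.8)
The statement bundles three claims: that $\lambda$ is countably additive, that it has total mass $1$, and that it is complete. The text preceding the lemma already supplies finite additivity, monotonicity, $\lambda\emptyset=0$ and $\lambda\ComplexN=1$, so the normalisation is in hand and the only substantive gap is the upgrade from finite to \emph{countable} additivity. I would begin by recording one further consequence of the standing hypothesis that I will use repeatedly. Each $\lambda_n$ is the integral of a density against $\alpha$ and is therefore absolutely continuous with respect to $\alpha$; hence every $\alpha$-null set $N$ has $\lambda_n N=0$ for all $n$, and so $\lambda N=\lim_n\lambda_n N=0$. Thus the assumption that the limit exists for \emph{all} $\alpha$-measurable $E$ — not merely for disks or rectangles — already forces $\lambda\ll\alpha$.

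The core of the proof is to show that $\lambda$ is continuous from below, i.e.\ that $\lambda E_k\to\lambda E$ whenever $E_k\uparrow E$; together with the finite additivity already established this is equivalent to countable additivity. Writing $\lambda E_k=\lim_n\lambda_n E_k$ and $\lambda E=\lim_n\lambda_n E$, and using that each $\lambda_n$, being a genuine measure, satisfies $\lambda_n E_k\uparrow\lambda_n E$ as $k\to\infty$, the claim is exactly the interchange of two limits,
\[
\lim_{k\to\infty}\lim_{n\to\infty}\lambda_n E_k=\lim_{n\to\infty}\lim_{k\to\infty}\lambda_n E_k .
\]
This interchange is the whole difficulty: a pointwise limit of countably additive measures is automatically \emph{finitely} additive but need not be countably additive, so finite additivity by itself cannot close the gap.

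The tool that legitimises the interchange is the Nikodym convergence theorem for sequences of measures: if $\{\lambda_n\}$ is a sequence of uniformly bounded countably additive measures on a $\sigma$-algebra and $\lim_n\lambda_n E$ exists for every $E$ in that $\sigma$-algebra, then the family $\{\lambda_n\}$ is \emph{uniformly} countably additive and the limiting set function is itself countably additive. The hypotheses hold here verbatim: each $\lambda_n$ is a probability measure (hence bounded by $1$) on the $\sigma$-algebra of $\alpha$-measurable sets, and by assumption $\lambda_n E$ converges for every such $E$. The uniformity in $n$ furnished by the theorem is precisely what permits swapping the two limits above, after which $\lambda E_k\to\lambda E$ follows. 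Equivalently, since all $\lambda_n\ll\alpha$, one may invoke the Vitali--Hahn--Saks theorem, which additionally returns the uniform $\alpha$-absolute continuity of the family and re-confirms $\lambda\ll\alpha$.

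It then remains to assemble the pieces: $\lambda$ is countably additive with $\lambda\emptyset=0$ and $\lambda\ComplexN=1$, hence a probability measure, and it is defined on the $\sigma$-algebra of $\alpha$-measurable sets, which is complete with respect to plane measure; thus $\lambda$ is a complete probability measure. The single genuine obstacle is countable additivity, and the point I expect to take most care over is verifying that the construction really meets the hypotheses of Nikodym's theorem — in particular that convergence is assumed on the \emph{entire} $\sigma$-algebra, which is exactly the feature that rules out a singular limit (arclength on a critical circle, say) and forces $\lambda\ll\alpha$.
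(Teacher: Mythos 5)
Your proof is correct, but it is genuinely different from --- and substantially stronger than --- the argument the paper itself gives. The paper's proof is exactly the paragraph preceding the lemma: it verifies monotonicity, \emph{finite} additivity over disjoint unions, $\lambda\emptyset=0$ and $\lambda\ComplexN=1$, and then immediately asserts the conclusion; the passage from finite to countable additivity is never addressed. You correctly identified that this is the one step that cannot be waved through: a setwise limit of countably additive probability measures is automatically finitely additive, but countable additivity requires an argument, and the interchange of limits $\lim_k\lim_n\lambda_nE_k=\lim_n\lim_k\lambda_nE_k$ for $E_k\uparrow E$ is precisely where it can fail for general finitely additive limits. Your appeal to the Nikodym convergence theorem (equivalently, Vitali--Hahn--Saks, since each $\lambda_n\ll\alpha$) is the standard and correct way to close this gap, and its hypotheses are met verbatim here: the $\lambda_n$ are probability measures on the $\sigma$-algebra of $\alpha$-measurable sets, and convergence is assumed on every set of that $\sigma$-algebra. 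What your route buys beyond the paper's is (i) an actual proof of countable additivity, and (ii) the uniform countable additivity and uniform absolute continuity of the family $\{\lambda_n\}$, which would also serve to justify the limit interchanges the paper performs later when passing from $\sfrac{1}{n}\sum_j\log(a\minus a_j)$ to $\int\log(a\minus z)\,p(z)\,d\alpha(z)$. One caveat you share with the paper rather than improve on: ``complete'' is read in both treatments as ``defined on the $\alpha$-complete $\sigma$-algebra''; strict completeness of $\lambda$ itself (every subset of a $\lambda$-null set being measurable) does not follow, since a $\lambda$-null set need not be $\alpha$-null, so $\lambda\ll\alpha$ gives no control over its subsets. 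That looseness is inherited from the statement of the lemma, not a defect of your argument.
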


Since $\lambda_n$ is absolutely continuous with respect to plane measure
$\alpha$ on $\ComplexN$, it follows (by the Radon-Nikodym theorem) that
there exist non-negative distribution functions $p$ and $p_n$, unique up
to zero $\alpha$-measure sets, such that for 
any $\alpha$-measure set $E$ in $\ComplexN$,
\begin{equation}
\lambda E = \int_E p \, d\alpha ,
\q\hbox{and}\q
\lambda_n E = \int_E p_n \, d\alpha.
\end{equation}

It is next shown that $p_n \to p$ in measure.  Define set 
functions $\beta_n$ on $\alpha$-measure sets $E$ by
\begin{equation}
\beta_n  E = ( \lambda_n \minus \lambda ) E = \int_E (p_n\minus p)\,d\alpha .
\end{equation}
The $\beta_n$ are signed measures on $\ComplexN$.  Assuming that $\lambda$
is defined by equation \Ref{eqn10a} it follows that
\begin{equation}
\lim_{n\to\infty} \beta_n E = \lim_{n\to\infty} (\lambda_n\minus \lambda) E = 0
\end{equation}
for any $\alpha$-measure set $E$.

The Hahn decomposition of $\ComplexN \simeq \RealN^2$ by the measure $\beta_n$
decomposes $\ComplexN$ in a positive set $F_+^{(n)}$ and a negative
set $F_-^{(n)}$. Define the mutually singular (positive) 
measures $\beta_n^+$ and $\beta_n^-$ on the Hahn decomposition 
of $\ComplexN$:
\begin{equation}
\beta_n^+E = \beta_n(E\cap F_+^{(n)}),
\q\hbox{and}\q
\beta_n^-E = -\beta_n(E\cap F_-^{(n)}) .
\end{equation}
Then it follows that $\beta_n = \beta_n^+ - \beta_n^-$ is a Jordan decomposition
of the signed measure $\beta_n$.
Since $\lim_{n\to\infty} \beta_n E = 0$ for any $\alpha$-measure set $E$, it follows 
that $\lim_{n\to\infty} \beta_n^+ E = 0$ and $\lim_{n\to\infty} \beta_n^- E = 0$
for any $\alpha$-measure set $E$.

The total variation of the signed measure $\beta_n$ is given by
\begin{equation}
|\beta_n|  = \beta_n^+  + \beta_n^- 
\end{equation}
and it is a measure on the $\sigma$-algebra of $\alpha$-measure sets
in $\ComplexN$.  It follows that $\lim_{n\to\infty} |\beta_n| E = 0$ 
for any $\alpha$-measure set $E$.  Notice that
\begin{eqnarray}
|\beta_n| E 
&=&  \beta_n(E\cap F_+^{(n)})  -\beta_n(E\cap F_-^{(n)}) \cr
&=& \int_{E\cap F_+^{(n)}} (p_n\minus p)\,d\alpha
 + \int_{E\cap F_-^{(n)}} (p\minus p_n)\,d\alpha \cr
&=& \int_E |p_n \minus p| \, d\alpha .
\end{eqnarray}
Since $\lim_{n\to\infty} |\beta_n| E = 0$ for every $\alpha$-measure set $E$
in $\ComplexN$, it follows that 
\begin{equation}
\lim_{n\to\infty}\int_E |p_n-p|\,d\alpha = 0.
\label{eqn22A}   
\end{equation}
Choosing $E=\ComplexN$ shows that
$\| p_n \minus p \|_1 \to 0$ in the normed space $L^1(\alpha)$.  
That is, if the distribution functions $p_n$ and $p$ exist, then $p_n$ converges to
$p$ in the $L^1(\alpha)$ norm.  This gives the lemma:

\begin{lemma}
Suppose that the limit $\lambda E = \lim_{n\to\infty} \lambda_n E$ exists.
Then there exists distribution functions $p$ and $p_n$ such that
\[ 
\lambda E = \int_E p \, d\alpha ,
\q\hbox{and}\q
\lambda_n E = \int_E p_n \, d\alpha.
\]
Moreover, for every $\alpha$-measure set $E$, $\lim_{n\to\infty}\| p_n \minus p \|_1 = 0$.
This implies that 
\[ \lim_{n\to\infty} p_n = p,\q\hbox{in measure} . \]
\end{lemma}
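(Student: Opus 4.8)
The plan is to manufacture the two Radon--Nikodym densities first and then reduce the claimed convergence to a single total-variation estimate. Each $\lambda_n$ is absolutely continuous with respect to plane measure $\alpha$ (this was established just above, since $\lambda_n E = \sfrac{1}{n}\int_E \sum_{j} h_{j,n}\,d\alpha$), so the Radon--Nikodym theorem supplies a non-negative density $p_n$ with $\lambda_n E = \int_E p_n\,d\alpha$ and total mass $\int_{\ComplexN} p_n\,d\alpha = \lambda_n \ComplexN = 1$. The first genuine issue is the \emph{existence} of the limiting density $p$: absolute continuity need not survive a mere setwise limit, so $\lambda \ll \alpha$ must be argued separately rather than read off Radon--Nikodym. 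Here I would invoke the Vitali--Hahn--Saks theorem: since each $\lambda_n \ll \alpha$ and $\lim_n \lambda_n E$ exists for every $\alpha$-measurable $E$, the family $\{\lambda_n\}$ is uniformly $\alpha$-continuous and the setwise limit $\lambda$ inherits $\lambda \ll \alpha$, which yields a density $p\ge 0$ with $\int_{\ComplexN} p\,d\alpha = 1$.

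With $p$ and $p_n$ in hand I would follow the signed-measure route indicated above. Put $\beta_n = \lambda_n - \lambda$, so $\beta_n$ has density $p_n - p$ and $\beta_n E \to 0$ for each fixed $E$. The Hahn decomposition of $\ComplexN$ for $\beta_n$ is simply $F_+^{(n)} = \{p_n \ge p\}$ and $F_-^{(n)} = \{p_n < p\}$, the Jordan parts are $\beta_n^{\pm}$, and the total variation satisfies $|\beta_n| E = \int_E |p_n - p|\,d\alpha$. Evaluating at $E = \ComplexN$ identifies $|\beta_n|\ComplexN$ with $\|p_n - p\|_1$, so the entire statement reduces to proving $|\beta_n|\ComplexN \to 0$.

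This reduction is where I expect the main obstacle, and it is a real one. The step $|\beta_n|\ComplexN = \beta_n(F_+^{(n)}) - \beta_n(F_-^{(n)}) \to 0$ does \emph{not} follow from $\beta_n E \to 0$ for fixed $E$, because the Hahn sets $F_\pm^{(n)}$ move with $n$; setwise convergence of non-negative unit-mass densities simply does not force $L^1$ convergence. (On $[0,1]$ with Lebesgue measure, $p_n = 1 + \sin(2\pi n x)$ converges setwise to $1$ by Riemann--Lebesgue, yet $\|p_n - 1\|_1 = \sfrac{2}{\pi}$ for every $n$.) To close the gap one needs an ingredient beyond setwise convergence. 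The cleanest is pointwise control: if one can show $p_n \to p$ $\alpha$-almost everywhere, then Scheff\'e's lemma (non-negativity together with $\int p_n\,d\alpha \to \int p\,d\alpha$) gives $\|p_n - p\|_1 \to 0$ at once. Since the mollifiers $f_n$ and $g_n$ are approximate identities concentrating mass at the zeros $a_j$, I would try to extract such almost-everywhere convergence directly from their structure, or alternatively supplement setwise convergence with a uniform-integrability bound and invoke a Dunford--Pettis/Vitali argument.

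Once $\|p_n - p\|_1 \to 0$ is secured, the final assertion is routine: for every $\eps > 0$, Markov's inequality gives $\alpha\{\,|p_n - p| > \eps\,\} \le \eps^{-1}\|p_n - p\|_1 \to 0$, which is precisely convergence of $p_n$ to $p$ in measure.
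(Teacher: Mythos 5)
Your proposal retraces the paper's own argument almost line by line --- Radon--Nikodym densities, the signed measures $\beta_n = \lambda_n \minus \lambda$, the Hahn/Jordan decomposition, the identification $|\beta_n|E = \int_E |p_n\minus p|\,d\alpha$, and Markov's inequality at the end --- and the step you single out as the ``real obstacle'' is precisely the step the paper takes without justification: from $\lim_{n\to\infty}\beta_n E = 0$ for every \emph{fixed} $\alpha$-measure set $E$, the paper concludes that $\lim_{n\to\infty}\beta_n^{\pm} E = 0$, hence that $|\beta_n|E \to 0$ and $\| p_n \minus p\|_1 \to 0$. As you observe, this inference is invalid because $\beta_n^{+}E = \beta_n(E\cap F_+^{(n)})$ evaluates $\beta_n$ on a set that moves with $n$, while setwise convergence controls only fixed sets. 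Your counterexample is correct and decisive: on $[0,1]$ with $p_n = 1 + \sin(2\pi n x)$ and $p=1$, the Riemann--Lebesgue lemma gives $\lambda_n E \to \alpha E$ for every measurable $E$, yet $\|p_n \minus 1\|_1 = 2/\pi$ for all $n$ (and convergence in measure fails too, since $\alpha\{|\sin(2\pi nx)|>\eps\}$ is independent of $n$). So bare setwise convergence cannot imply the lemma's conclusion, the paper's proof contains a genuine error at exactly the point you identified, and any correct proof must exploit the specific structure of the $\lambda_n$ (mollified counting measures of the zeros), which neither you nor the paper uses beyond setwise convergence.

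Two smaller points. First, your treatment of the existence of the limiting density $p$ is more careful than the paper's: the paper reads $p$ off the Radon--Nikodym theorem without ever establishing $\lambda \ll \alpha$, whereas you correctly note that this requires the Vitali--Hahn--Saks/Nikodym theorem (which also supplies the countable additivity of $\lambda$ that the paper's preceding lemma checks only for finite unions). Second, of your two suggested repairs, only the Scheff\'e route is viable as stated: almost-everywhere (or in-measure) convergence of $p_n$ to $p$, together with $\int p_n\,d\alpha = \int p\,d\alpha = 1$, gives $\|p_n\minus p\|_1\to 0$. Uniform integrability plus Dunford--Pettis yields only \emph{weak} $L^1$ convergence (your sine example is uniformly integrable), and Vitali's theorem needs convergence in measure as an input --- which is here the conclusion, not a given. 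So the honest status is: the lemma is unproven as stated, your diagnosis of where and why the argument breaks is exactly right, and closing the gap would require extracting pointwise control on $p_n = \frac{1}{n}\sum_{j} h_{j,n}$ from the distribution of the zeros themselves.
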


\begin{proof}
To see this, notice that by equation \Ref{eqn22A}
$p_n \to p$ in the normed space $L^1(\alpha)$.
Let $\eps>0$ and define measure sets 
$F_n = \{ z\in\ComplexN \svv |p_n(z) \minus p(z)|> \eps \}$.  Then
\begin{equation*}
\int |p_n \minus p| \, d\alpha \geq \int_{F_n} |p_n\minus p|\, d\alpha
\geq \eps\,\alpha F_n .
\end{equation*}
This shows that $\lim_{n\to\infty} \alpha F_n = 0$ and so
$\lim_{n\to\infty} p_n= p$ in measure.
\end{proof}

By the Lebesgue Dominated Convergence theorem for sequences
which converge in measure it follows that for any $\alpha$-measure
set $E$, since $p_n \to p$ in measure,
\begin{eqnarray}
\lambda E = \lim_{n\to\infty} \lambda_n E  
= \lim_{n\to\infty} \int_E p_n \, d\alpha 
= \int_E p\, d\alpha.
\end{eqnarray}
Since $\lambda$ and $\lambda_n$ are finite measures, and $\alpha$ is
$\sigma$-finite, the Lebesgue Dominated Convergence theorem
for sequences which converge in measure may be applied to 
the function $\varrho(a)$ in equation \Ref{eqn19q}.  That is, for 
any $a>0$ on the positive real axis,
\begin{eqnarray}
\varrho (a) 
&=& \lim_{n\to\infty} \sfrac{1}{n} \sum_{j=1}^n \log(a\minus a_j) \cr
&=& \lim_{n\to\infty} \int_\ComplexN \log (a\minus z)\, d\lambda_n(z) \cr
&=& \lim_{n\to\infty} \int_\ComplexN \log (a\minus z)\, p_n(z)\, d\alpha(z) \cr
&=& \int_\ComplexN \log (a\minus z)\, p(z)\, d\alpha(z)  = \int_\ComplexN \log (a\minus z) \, d\lambda (z) .
\end{eqnarray}
In other words, in terms of plane measure $\alpha$ and the limiting
probability distribution function $p(z)$,
\begin{equation}
\varrho (a) = \int_E \log (a\minus z)\, p(z)\, d\alpha (z) ,
\label{eqn11c}   
\end{equation}
where the integration is over the entire complex plane with real plane measure
$\alpha$.  Notice that for any $\alpha$-measure set $E$, it similarly follows that
\begin{equation}
 \lim_{n\to\infty} \sfrac{1}{n} \sum_{a_j\in E} \log(a\minus a_j) 
= \int_E \log (a\minus z)\, p(z)\, d\alpha(z) .
\end{equation}
This shows that the distribution $p(z)$ is unique up to zero $\alpha$-measure sets.

Since $C = c_n^{(d-1)}$, the following theorem follows from equations \Ref{eqn19q} 
and \Ref{eqn11c}.

\begin{theorem}
Suppose that the partition function $A_n^+(a)$ of adsorbing walks of length $n$ 
has $a$-plane zeros denoted by $\langle a_j \rangle$.

In addition, suppose that $h_n(a)$ is a family of distribution functions on $\ComplexN$ 
such that $h_n(a) \to 0$ if $|a|>0$ and $n\to\infty$ and suppose that $\alpha$ is plane
measure on $\ComplexN$.  Then the set-function
\[
\lambda_n\,E = \sfrac{1}{n} \int_E \sum_{j=1}^n h_{j,n}(a) \, d\alpha(a) 
\]
where $h_{j,n} (a) = h_n(a\minus a_j)$  is a probability measure on the $\sigma$-algebra of
plane measure sets in $\ComplexN$.

Moreover, if the limit $\lambda E = \lim_{n\to\infty} \lambda_n E$ exists,
then there exists a distribution function $p(z)$ on $\ComplexN$, unique up
to zero $\alpha$-measure sets, such that
the limiting free energy of adsorbing walks is given by
\[
\C{A}^+(a) = \log \mu_{d-1} + \varrho (a) = \log\mu_{d-1} 
+ \int_\ComplexN \log (a\minus z)\, p(z)\,d\alpha(z) ,
\]
where $a\in\RealN^+$ (the positive real axis). \qed
\label{eqn12cc}   
\end{theorem}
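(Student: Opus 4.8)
The plan is to collect the measure-theoretic apparatus built up in the preceding subsection and verify that each piece applies, so that the theorem becomes a summary of that development rather than a fresh argument. First I would confirm the probability-measure assertion for $\lambda_n$ directly from its definition. Each $h_n$, and hence each translate $h_{j,n}(a)=h_n(a\minus a_j)$, is a non-negative normalised density against plane measure $\alpha$; consequently $\lambda_n E = \sfrac{1}{n}\int_E \sum_{j=1}^n h_{j,n}(a)\,d\alpha(a)$ vanishes whenever $\alpha E = 0$, is monotone under inclusion, and is additive over disjoint sets, while $\lambda_n \ComplexN = \sfrac{1}{n}\sum_{j=1}^n 1 = 1$. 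Thus $\lambda_n$ is a probability measure on the $\sigma$-algebra of $\alpha$-measure sets and $\lambda_n \ll \alpha$.

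For the limiting statement I would take as hypothesis that $\lambda E = \lim_{n\to\infty}\lambda_n E$ exists for every $\alpha$-measure set $E$ and invoke Lemma \ref{lemma21} to conclude that $\lambda$ is a complete probability measure. Since $\lambda \ll \alpha$ and each $\lambda_n \ll \alpha$, the Radon--Nikodym theorem supplies densities $p$ and $p_n$, unique up to zero $\alpha$-measure sets, with $\lambda E = \int_E p\,d\alpha$ and $\lambda_n E = \int_E p_n\,d\alpha$. To upgrade $\lambda_n E \to \lambda E$ to convergence of the densities, I would form the signed measures $\beta_n = \lambda_n - \lambda$, take their Hahn--Jordan decomposition, and use the identity $\int_E |p_n \minus p|\,d\alpha = |\beta_n| E$; since $\beta_n E \to 0$ on every set, the positive and negative parts each tend to $0$, whence $|\beta_n| E \to 0$. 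Taking $E = \ComplexN$ gives $\|p_n \minus p\|_1 \to 0$, and $L^1$ convergence implies $p_n \to p$ in measure.

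The final step is the integral representation of $\varrho(a)$. For $a>0$ on the positive real axis I would write $\sfrac{1}{n}\sum_{j=1}^n \log(a\minus a_j) = \int_\ComplexN \log(a\minus z)\,p_n(z)\,d\alpha(z)$ and pass to the limit using the Lebesgue Dominated Convergence theorem in the version for sequences converging in measure, giving $\varrho(a) = \int_\ComplexN \log(a\minus z)\,p(z)\,d\alpha(z)$. Since the limit $\varrho(a)$ is known to exist (the free energy exists, by equation \Ref{eqn19q}) and the constant $C = c_n^{(d-1)}$ contributes $\sfrac{1}{n}\log c_n^{(d-1)} \to \log\mu_{d-1}$, substituting into $\C{A}^+(a) = \log\mu_{d-1} + \varrho(a)$ yields the stated formula; the same argument localised to an arbitrary $E$ forces $p$ to be unique up to a zero $\alpha$-measure set.

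The hard part will be justifying this passage to the limit rather than any of the bookkeeping above. The integrand $\log(a\minus z)$ carries a logarithmic branch-point singularity at $z=a$ and grows as $|z|\to\infty$, so Dominated Convergence requires an $\alpha$-integrable majorant for $|\log(a\minus z)|\,p_n(z)$ that is valid uniformly in $n$. Producing such a majorant means controlling the densities $p_n$ near the fixed point $a$ on the positive real axis --- where one uses that the zeros $a_j$ are bounded away from the positive real axis, so the smeared mass does not concentrate at $z=a$ --- and controlling the tails, using that for each fixed $n$ the zeros of the degree-$n$ polynomial $A_n^+(a)$ lie in a bounded region of the plane. Making this domination uniform in $n$, rather than merely pointwise, is the delicate point on which the rigour of the representation depends.
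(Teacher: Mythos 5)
Your proposal is correct and follows essentially the same route as the paper: the verification that $\lambda_n$ is a probability measure, the appeal to Lemma \ref{lemma21}, the Radon--Nikodym densities, the Hahn--Jordan decomposition giving $\| p_n \minus p \|_1 \to 0$ and hence convergence in measure, and the passage to the limit by the dominated convergence theorem to obtain $\varrho(a) = \int_\ComplexN \log(a\minus z)\, p(z)\, d\alpha(z)$ are precisely the steps of the paper's own development in section \ref{Dist-Fisher}. Your closing observation---that the dominated-convergence step requires an $\alpha$-integrable majorant for $|\log(a\minus z)|\,p_n(z)$ uniform in $n$, controlling both the singularity at $z=a$ and the tails---identifies a genuine subtlety, but it is one the paper itself leaves unaddressed (it invokes the theorem without exhibiting a dominating function), so your account matches the published argument rather than falling short of it.
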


The function $\C{A}^+(a)$ can be analytically continued up to natural
boundaries in the $a$-plane.
Since $\C{A}^+(a) = \log \mu_d$ if $a\in (0,a_c^+]$, it follows that
$\varrho(a) = \log \sfrac{\mu_d}{\mu_{d-1}}$ for $a\in (0,a_c^+]$.
In addition, $\varrho (a) \simeq \log a$ if $a\in\RealN$ and $a\to\infty$.
Numerical approximations of $\varrho(a)$ may be made by approximating
the distribution of zeros in the complex plane using the distribution functions
$h_{n}$ (or the distribution functions $f_n$ and $g_n$ 
in equations \Ref{eqn28BB} and \Ref{eqn28CC}).   This gives
numerical approximations to $\varrho(a)$ by 
$\sfrac{1}{n} \sum_{j=1}^n\log (a\minus a_j)$.

The function $\varrho(a)$ is singular at $a=a_c^+$ on the real axis.
This shows that in the limit $n\to\infty$, a singular point develops on the
real axis -- this is the edge singularity.

In the limit as $a \to\infty$, $\log(a\minus z)
= \log a - \sfrac{z}{a} + O(a^{-2})$.  This gives
\begin{equation}
\C{A}^+(a) = \log \mu_{d-1} + \log a - \sfrac{1}{a} \int_{\ComplexN} z\,p(z)\,d\alpha(z)
 + O(a^{-2}),
\label{eqn40A}   
\end{equation}
showing that $\C{A}^+(a)$ is asymptotic to $\log \mu_{d-1} + \log a$ as $a\to\infty$.

Numerical approximations to the distribution $p_n(z)$ is obtained from
equation \Ref{eqn26Z}.  That is,
\begin{equation}
p_n(z) = \sfrac{1}{n} \sum_{J=1}^n h_{j,n}(z) .
\end{equation}
In this event, one may use the distribution functions $f_{j,n}(z)$ or
$g_{j,n}(z)$ as examples.  In the limit as $n\to\infty$, $p_n(a)$ is an approximation
of the limiting distribution $p(z)$.

\subsubsection{Radial and angular distribution of Fisher zeros:}  
The radial and angular distribution of partition function zeros $\LA a_j \RA$
can be considered in view of theorems \ref{thm3.3} and \ref{thm3.6}.  
Define $\nu_n(\rho)$ and $\alpha_n(\theta,\phi)$ similar to the definitions 
in equation \Ref{eqn12},  but now for the zeros of $A_n^+(a)$.
Proceed by computing $L_n$ in theorem \ref{thm3.3}.  By equation \Ref{eqn4}, 
\begin{eqnarray}
L_n &= \log \sum_{v=0}^n c_n^+(v) - \shalf \log c_n^+(0) - \shalf \log c_n^+(n) \cr
&= \log c_n^+ - \shalf \log c_{n-1}^+ - \shalf c_n^{(d-1)}
= \log \Sfrac{c_n^+}{\sqrt{c_{n-1}^+\, c_n^{(d-1)}}} ,
\label{eqn17} 
\end{eqnarray}
since $c_n^+(n) = c_n^{(d-1)}$.  Dividing by $n$ and taking $n\to\infty$ gives the result
\begin{equation}
\zeta_d = \lim_{n\to\infty} \sfrac{2}{n} \, L_n =  \log \Sfrac{\mu_d}{\mu_{d-1}} ,
\label{eqn28} 
\end{equation}
where $\mu_1=1$ (since $c_n^+(n) = 2$ if $n>0$ in $\mathL^2_+$).

Observe that as $d\to\infty$, then the right hand side approaches zero.  Otherwise,
for $d=2$, $\zeta_2 = \lim_{n\to\infty} \sfrac{2}{n} \, L_n \approx 0.970 < 1$
and for $d=3$, $\zeta_3 = \lim_{n\to\infty} \sfrac{2}{n} \, L_n \approx 0.574 < 1$.

By theorem \ref{thm3.3},
\begin{equation}
1- \sfrac{2}{n\rho}\,L_n \leq \sfrac{1}{n}\, \nu_n(\rho) \leq 1 .
\label{eqn19} 
\end{equation}
That is, a positive fraction of the zeros are confined to the annulus 
$1\minus\rho \leq |a| \leq \sfrac{1}{1-\rho}$ in the $a$-plane if the 
left hand side is positive.   Taking $n\to\infty$ gives
\begin{equation}
1- \sfrac{1}{\rho}\,\zeta_d 
\leq \liminf_{n\to\infty} \sfrac{1}{n}\, \nu_n(\rho)
\leq \limsup_{n\to\infty} \sfrac{1}{n}\, \nu_n(\rho) \leq 1 .
\label{eqn20} 
\end{equation}
The left hand side is positive if $\rho > \zeta_d$, and for these values of $\rho$ there is a positive
density of zeros in the annulus $1\minus \rho \leq |a| \leq \sfrac{1}{1-\rho}$.

Since $\liminf_{n\to\infty} \sfrac{1}{n} \, \nu_n(\rho)>0$ if $\rho>\zeta_d$, this gives an
upper bound on the critical point $a_c^+$, namely $a_c^+ \leq \sfrac{1}{1-\zeta_d}$.
This is a very poor bound in the square lattice, namely $a_c^+ \leq 33.42...$, since
$\zeta_2 = \log 2.63815\ldots = 0.97008\ldots$ \cite{CJ12}.  A slightly better bound is
obtained in the cubic lattice, namely $a_c^+ \leq \log \sfrac{4.68404\ldots}{2.63815\ldots}
= 2.347\ldots$ \cite{CJ12,C13}.  Numerical estimates of $a_c^+$ are $1.77564\ldots$
in $\mathL^2_+$  \cite{BGJ12} and $a_c^+=1.306\pm0.007$ in $\mathL^3_+$ \cite{JvR16}.  
In the limit $d\to\infty$ a positive fraction 
of the zeros accumulate on the unit circle in the $a$-plane.

\begin{theorem}
Let $\zeta_d = \log \sfrac{\mu_d}{\mu_{d-1}}$ (and define $\mu_1=1$).  
If $\zeta_d < \rho < 1$, then
\[ 0 < 1- \sfrac{1}{\rho}\,\zeta_d \leq \liminf_{n\to\infty} \sfrac{1}{n}\, \nu_n(\rho)
\leq \limsup_{n\to\infty} \sfrac{1}{n}\, \nu_n(\rho) \leq 1\]
and a positive fraction of partition function zeros of $A_n^+(a)$ are in the annulus
$1\minus\rho < |a| < \sfrac{1}{1-\rho}$ in the complex $a$-plane. \qed
\label{thm2.6}  
\end{theorem}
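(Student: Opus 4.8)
The plan is to apply the Hughes--Nikechbali bound of theorem \ref{thm3.3} directly to the partition-function polynomial $A_n^+(a) = \sum_{v=0}^n c_n^+(v)\,a^v$ of equation \Ref{eqn4}, taking the coefficient sequence to be $\langle c_n^+(v)\rangle_{v=0}^n$. These coefficients are non-negative integers with $c_n^+(0)\not=0$ and $c_n^+(n)\not=0$, so the hypotheses of theorem \ref{thm3.3} are met and the bound
\[ 1 - \frac{1}{n}\,\nu_n(\rho) \leq \frac{2}{n\rho}\,L_n \]
holds for every $\rho\in(0,1)$, where $L_n = \log\sum_{v=0}^n c_n^+(v) - \frac{1}{2}\log c_n^+(0) - \frac{1}{2}\log c_n^+(n)$. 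The whole argument then reduces to evaluating $L_n$ and locating its growth rate.

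First I would evaluate the three terms of $L_n$ combinatorially. The total is $\sum_v c_n^+(v) = c_n^+$, the full count of positive walks; the constant term $c_n^+(0)$ counts positive walks making no visit to $\partial\mathL^d_+$, equivalently walks whose forced first step lifts off the boundary, so that $c_n^+(0) = c_{n-1}^+$; and the leading coefficient $c_n^+(n)$ counts walks lying entirely in the boundary hyperplane, so $c_n^+(n) = c_n^{(d-1)}$. This yields the closed form of equation \Ref{eqn17},
\[ L_n = \log \frac{c_n^+}{\sqrt{c_{n-1}^+\,c_n^{(d-1)}}}. \]

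Next I would pass to the limit. Using the prefactor-times-exponential asymptotics of equation \Ref{eqn3} for $c_n^+$ and $c_{n-1}^+$ (both governed by the bulk growth constant $\mu_d$) and the analogous asymptotics for $c_n^{(d-1)}$ (governed by $\mu_{d-1}$, with the convention $\mu_1=1$), the polynomial prefactors contribute only $O(\log n)$ to $L_n$ and hence vanish after division by $n$. What survives is
\[ \zeta_d = \lim_{n\to\infty} \frac{2}{n}\,L_n = 2\log\mu_d - \log\mu_d - \log\mu_{d-1} = \log\frac{\mu_d}{\mu_{d-1}}, \]
which is equation \Ref{eqn28}. Substituting into the rearranged inequality $1 - \frac{2}{n\rho}\,L_n \leq \frac{1}{n}\,\nu_n(\rho)\leq 1$ and taking $\liminf$ and $\limsup$ as $n\to\infty$ gives
\[ 1 - \frac{1}{\rho}\,\zeta_d \leq \liminf_{n\to\infty}\frac{1}{n}\,\nu_n(\rho) \leq \limsup_{n\to\infty}\frac{1}{n}\,\nu_n(\rho)\leq 1. \]
The left-hand side is strictly positive precisely when $\rho > \zeta_d$, which together with $\rho<1$ is exactly the hypothesis, and this forces a positive density of zeros in the stated annulus.

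I expect the step requiring the most care to be the combinatorial identification of the boundary coefficients $c_n^+(0)$ and $c_n^+(n)$, and, more importantly, the verification that each of the three logarithmic terms in $L_n$ carries the claimed exponential growth rate so that the sub-exponential corrections genuinely drop out in the $\frac{1}{n}$ limit. Once the value of $\zeta_d$ is pinned down, the remainder is a direct substitution into theorem \ref{thm3.3} together with the observation that $\zeta_d<1$ in the relevant dimensions, so the window $\zeta_d<\rho<1$ is non-empty.
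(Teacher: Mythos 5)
Your proposal is correct and takes essentially the same route as the paper: it applies theorem \ref{thm3.3} to $A_n^+(a)$ with coefficients $c_n^+(v)$, identifies $c_n^+(0)=c_{n-1}^+$ and $c_n^+(n)=c_n^{(d-1)}$ to obtain $L_n = \log\bigl( c_n^+ / \sqrt{c_{n-1}^+\,c_n^{(d-1)}}\,\bigr)$ exactly as in equation \Ref{eqn17}, passes to the limit to get $\zeta_d=\log\frac{\mu_d}{\mu_{d-1}}$ as in equation \Ref{eqn28}, and then substitutes into the rearranged inequality and takes limits. The only cosmetic difference is that you invoke the prefactor asymptotics of equation \Ref{eqn3} where the paper needs only the exponential growth rates, so there is no gap.
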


The angular distribution of partition function zeros can be constrained by 
the Erd\"os-Tur\'an theorem to 
\begin{equation}
|\sfrac{1}{n}\, \alpha_n (\theta,\phi) - \sfrac{1}{2\pi}\, (\phi-\theta)|^2 \leq \sfrac{C_0}{n} L_n,
\end{equation}
with $L_n$ given in equation \Ref{eqn17}, and $C_0$ is an unknown
constant.  This shows, in particular,  that
\begin{equation}
\sfrac{1}{2\pi}\, (\phi-\theta)- \sqrt{\sfrac{1}{n}C_0\,L_n} 
\leq \sfrac{1}{n}\, \alpha_n(\theta,\phi) \leq \sfrac{1}{2\pi}\, (\phi-\theta) 
+ \sqrt{\sfrac{1}{n}\, C_0\,L_n} .
\label{eqn19a} 
\end{equation}
Taking $n\to\infty$ shows that
\begin{eqnarray}
\sfrac{1}{2\pi}\, (\phi-\theta)- \sqrt{\sfrac{1}{2}C_0\,\zeta_d} 
&\leq \liminf_{n\to\infty} \sfrac{1}{n}\, \alpha_n(\theta,\phi) 
\label{eqn20a} 
\\
&\leq \limsup_{n\to\infty} \sfrac{1}{n}\, \alpha_n(\theta,\phi) 
\leq \sfrac{1}{2\pi}\, (\phi-\theta)+ \sqrt{\sfrac{1}{2}\,C_0\,\zeta_d}  .
\nonumber
\end{eqnarray}

\begin{theorem}
Let $\zeta_d = \log \sfrac{\mu_d}{\mu_{d-1}}$.
If $\zeta_d < \rho < 1$, then there exists a constant $C_0$ such that
$$ \limsup_{n\to\infty}  \LV \sfrac{1}{n}\, \alpha_n(\theta,\phi)
-  \sfrac{1}{2\pi}\,(\phi\minus\theta) \RV \leq \sqrt{\sfrac{1}{2}C_0\zeta_d} . \eqno \square $$
\label{cor222}   
\end{theorem}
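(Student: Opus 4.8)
The plan is to read this off the Erd\"os--Tur\'an theorem (theorem \ref{thm3.6})
applied directly to the partition function polynomial
$A_n^+(a) = \sum_{v=0}^n c_n^+(v)\,a^v$, with the coefficients
$\LA c_n^+(v)\RA_{v=0}^n$ playing the role of the sequence $\LA a_v\RA$ in that
theorem. The constant and leading coefficients are $c_n^+(0) = c_{n-1}^+$ and
$c_n^+(n) = c_n^{(d-1)}$, both strictly positive, so the hypothesis
$a_0 a_N \not= 0$ holds and the theorem applies verbatim.

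First I would recall the computation of $L_n$ already carried out in equation
\Ref{eqn17}, namely $L_n = \log \Sfrac{c_n^+}{\sqrt{c_{n-1}^+\,c_n^{(d-1)}}}$,
together with its asymptotics from equation \Ref{eqn28}:
$\sfrac{2}{n}\,L_n \to \zeta_d = \log\sfrac{\mu_d}{\mu_{d-1}}$, equivalently
$\sfrac{1}{n}\,L_n \to \shalf\,\zeta_d$. Substituting into the Erd\"os--Tur\'an
bound gives, for a constant $C_0$ independent of $n$ (the constant $C$ of theorem
\ref{thm3.6}),
\[ \LV \sfrac{1}{n}\,\alpha_n(\theta,\phi) - \sfrac{1}{2\pi}\,(\phi\minus\theta)\RV^2
\leq \Sfrac{C_0}{n}\,L_n . \]

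Next I would take $\limsup_{n\to\infty}$ of both sides. Because $\sfrac{1}{n}\,L_n$
\emph{converges} (by \Ref{eqn28}) rather than merely having a $\limsup$, the
right-hand side has the genuine limit $\shalf\,C_0\,\zeta_d$, so the $\limsup$ of
the left-hand side is bounded above by $\shalf\,C_0\,\zeta_d$. Finally, since
$x\mapsto\sqrt{x}$ is continuous and non-decreasing on $[0,\infty)$ it commutes
with $\limsup$, and taking square roots produces the asserted inequality
$\limsup_{n\to\infty}\LV \sfrac{1}{n}\,\alpha_n(\theta,\phi)
- \sfrac{1}{2\pi}\,(\phi\minus\theta)\RV \leq \sqrt{\shalf\,C_0\,\zeta_d}$.

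I do not anticipate a serious obstacle here: the heavy lifting is done by theorem
\ref{thm3.6} and by the asymptotic evaluation \Ref{eqn28}, so this is essentially a
corollary, and the only point requiring care is the standard observation that the
$\limsup$ of the product passes cleanly once one factor converges. I would also
flag that the stated hypothesis $\zeta_d < \rho < 1$ plays no role in the angular
estimate itself --- the parameter $\rho$ does not appear in the conclusion --- its
purpose being merely to record $\zeta_d < 1$ (so that the bound is informative) and
to parallel the companion radial statement, theorem \ref{thm2.6}.
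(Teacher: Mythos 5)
Your proof is correct and follows essentially the same route as the paper: apply the Erd\H{o}s--Tur\'an theorem to $A_n^+(a)$, use the computation of $L_n$ in equation \Ref{eqn17} and the limit $\sfrac{2}{n}L_n\to\zeta_d$ from equation \Ref{eqn28}, then pass to the $\limsup$ and take square roots. Your closing remark is also apt --- the hypothesis $\zeta_d<\rho<1$ indeed plays no role in the angular bound, exactly as in the paper's own derivation.
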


In low dimensions this is not a strong result, but in the limit as $d\to\infty$,
$\zeta_d \to 0$.  That is, if $C_0\zeta_d \to 0$ as $d\to\infty$, then
$ \limsup_{n\to\infty}  \LV \sfrac{1}{n}\, \alpha_n(\theta,\phi)
-  \sfrac{1}{2\pi}\,(\phi\minus\theta) \RV \to 0$. This does not prove,
but does suggest that the angular distribution of zeros may become more
uniform in higher dimensions.

Since $\limsup_{n\to\infty} \sfrac{1}{n} \, \alpha_n(-\pi,\phi)$ 
is a monotonic function of $\phi$, it is measurable and differentiable almost
everywhere in $(-\pi,\pi]$.  This shows that there exists a measurable function $q$ such that
\begin{equation}
\limsup_{n\to\infty} \sfrac{1}{n} \, \alpha_n(-\pi,\phi)
= \int_{-\pi}^\phi q(\psi)\,d\psi ,
\end{equation}
where the integral is the Lebesgue integral.  The function $q$ is a distribution
function, and 
\begin{equation}
\int_{[\theta,\phi]} dq = 
\int_\theta^{\phi} q(\psi)\, d\psi 
\label{eqn22} 
\end{equation}
is the fraction of zeros with principal argument greater than or equal to $\theta$ and less than 
or equal to $\phi$.

The bounds on the distribution of zeros above (obtained by using 
theorem \ref{thm3.6}, the Erd\"os and Tur\'an theorem) may be slightly improved by
using the theorem of Erd\'elyi  \cite{E08}.  As before, 
let $P(t)$ be a polynomial with zeros $\LA t_1,t_2,\ldots,t_N\RA$.  Define the
following functions related to $\alpha_N(\theta,\phi)$: 
\begin{eqnarray}
\alpha_N^+(\theta,\phi) &= \#
\LC t_k \vert\, \theta < \Arg t_k  \leq \phi,\;\hbox{and}\: |t_k| \geq 1  \RC ; \cr
\alpha_N^-(\theta,\phi) &= \#
\LC t_k \vert\, \theta < \Arg t_k  \leq \phi,\;\hbox{and}\: |t_k| \leq 1 \RC .
\end{eqnarray}

\begin{theorem}[Erd\'elyi  \cite{E08}]
Suppose $\LA a_n\RA$ is a sequence in $\ComplexN$ and suppose that $|a_0a_N| \not=0$.
Define $P(t) = \sum_{n=0}^N a_n \, t^n$ and let $\| P \| = \max_{|t|=1}\{ |P(t)| \}$.  
Then
\[  \sfrac{1}{N}\, \alpha_N^+(\theta,\phi) - \sfrac{1}{2\pi}\,(\phi \minus\theta) 
\leq 16 \sqrt{\sfrac{1}{N}\, \log R_1} ;\]
and
\[  \sfrac{1}{N}\, \alpha_N^-(\theta,\phi) - \sfrac{1}{2\pi}\,(\phi \minus\theta) 
\leq 16 \sqrt{\sfrac{1}{N}\,\log R_2} ,\]
where
$$
 R_1 = |a_N|^{-1} \| P \|,\; \hbox{and} \: R_2 = |a_0|^{-1}\| P \| . \eqno \qed 
$$
\label{thm3.7}   
\end{theorem}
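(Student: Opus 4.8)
The plan is to prove only the first inequality (the one for $\alpha_N^+$, the zeros of modulus at least one) and then to obtain the second for free by a reversal symmetry. Writing $P^*(t)=t^N P(1/t)=\sum_{n=0}^N a_{N-n}\,t^n$, the zeros of $P^*$ are the reciprocals $1/t_k$, so a zero of $P$ of modulus $\leq 1$ becomes a zero of $P^*$ of modulus $\geq 1$ with argument negated, while $|P^*(t)|=|P(1/t)|=|P(\bar t)|$ on $|t|=1$ gives $\|P^*\|=\|P\|$, and the roles of $a_0$ and $a_N$ are interchanged. Hence $R_1(P^*)=\|P\|/|a_0|=R_2(P)$, and the bound for $\alpha_N^-(\theta,\phi)$ is exactly the bound for $\alpha_N^+$ applied to $P^*$ on the reflected arc $[-\phi,-\theta]$ (the uniform weight $\frac{1}{2\pi}(\phi-\theta)$ being reflection invariant). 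So it suffices to bound the angular over-population of the outside zeros.

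First I would fix the radial budget. By Jensen's formula (equivalently, the Mahler-measure identity),
\[
\frac{1}{2\pi}\int_0^{2\pi}\log|P(e^{i\psi})|\,d\psi = \log|a_N| + \sum_{|t_k|\geq 1}\log|t_k| ,
\]
and since the left side is at most $\log\|P\|$ we obtain $\sum_{|t_k|\geq 1}\log|t_k|\leq \log R_1$. This single inequality is the only way the zero moduli enter the final estimate: it controls the total radial excess of the outside zeros by $\log R_1$.

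Next, for the angular count I would run the Erd\"os--Tur\'an argument already underlying theorem \ref{thm3.6}, but tracking only the outside zeros and keeping the one-sided direction. Writing $t_k=r_k e^{i\varphi_k}$, I would choose a nonnegative trigonometric majorant $F$ of degree $M$ for the indicator of the arc $(\theta,\phi]$, with $\frac{1}{2\pi}\int F=\frac{1}{2\pi}(\phi-\theta)+O(1/M)$ and Fourier coefficients $\hat F(m)$ decaying like $1/m$. Summing $F(\varphi_k)$ over the outside zeros and dividing by $N$ then bounds $\frac{1}{N}\alpha_N^+(\theta,\phi)$ by $\frac{1}{2\pi}(\phi-\theta)+O(1/M)$ plus a sum of exponential sums $\frac{1}{N}\sum_{r_k\geq 1}e^{im\varphi_k}$ weighted by $\hat F(m)$. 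Replacing each unimodular $e^{im\varphi_k}$ by the analytically accessible weight $r_k^{-m}e^{im\varphi_k}$ costs at most $\frac{m}{N}\log R_1$ per term (using $1-r^{-m}\leq m\log r$ together with the radial budget), and the resulting weighted sums are power sums of the reciprocal zeros, governed by the coefficients $a_n$ of $P$.

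The hard part will be this last step: showing that the accumulated exponential-sum error, after weighting by $\hat F(m)$ and summing over $1\leq m\leq M$, is genuinely of order $M\log R_1/N$ rather than merely bounded. A term-by-term triangle inequality is too lossy, since it discards the cancellation among the outside-zero arguments, so one must exploit the averaged structure of the power sums; this is the technical core of Erd\'elyi's sharpening of Erd\"os--Tur\'an. Granting that control, balancing the tail term $O(1/M)$ against the error $O(M\log R_1/N)$ at the optimal degree $M\asymp\sqrt{N/\log R_1}$ yields a bound of order $\sqrt{(\log R_1)/N}$; carrying the explicit constants from the majorant through the optimization produces the stated factor $16$, and the reversal argument of the first paragraph then delivers the companion bound for $\alpha_N^-$ with $R_2$.
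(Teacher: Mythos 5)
The statement you were asked to prove is one the paper itself does not prove: Theorem \ref{thm3.7} is quoted verbatim from Erd\'elyi \cite{E08}, with the \qed marking it as a cited result. So the only question is whether your blind attempt actually constitutes a proof, and it does not. Your outer scaffolding is sound: the reversal trick $P^*(t)=t^NP(1/t)$ does swap $a_0\leftrightarrow a_N$, preserves $\|P\|$, reflects arguments, and hence correctly reduces the $\alpha_N^-$ bound to the $\alpha_N^+$ bound with $R_2$ in place of $R_1$; the Jensen/Mahler identity does give the radial budget $\sum_{|t_k|\geq 1}\log|t_k|\leq\log R_1$; and the Selberg-majorant-plus-balancing scheme with $M\asymp\sqrt{N/\log R_1}$ is the standard Erd\H{o}s--Tur\'an template. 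But the decisive step --- bounding the restricted exponential sums $\frac{1}{N}\sum_{r_k\geq 1}e^{im\varphi_k}$ (after your replacement, power sums of the reflected outside zeros) so that the total contribution over $1\leq m\leq M$ is $O(M\log R_1/N)$ --- is exactly the point at which you write ``granting that control.'' That estimate is not a technical detail to be granted: it is the theorem. Sums restricted to the outside zeros are not reachable by Newton's identities or any direct coefficient manipulation, the cancellation you invoke has no identified source in your argument, and the constant $16$ is asserted to fall out of an optimization you never perform.

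Concretely, then, there is a genuine gap: everything you prove (symmetry, radial budget, majorant bookkeeping) is routine and would accompany any proof of this type, while the single nonroutine ingredient --- the mechanism by which the coefficient data $\|P\|/|a_N|$ controls the \emph{angular} discrepancy of the outside zeros at the $\sqrt{(\log R_1)/N}$ rate with an explicit constant --- is absent. A referee would treat this as a proof of nothing beyond the reductions; to close it you would need to reproduce the actual content of \cite{E08} (which proceeds by a potential-theoretic sharpening of the Erd\H{o}s--Tur\'an argument rather than by term-by-term treatment of the exponential sums), or find a substitute for it, and then actually track the constants to arrive at $16$.
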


In the particular case here, the polynomial $P(t)$ is the partition function
$A_n^+(a)$.  It follows that $\log R_1 = \log \sfrac{c_n^+}{c_n^{(d-1)}}$ and
$\log R_2 = \log \sfrac{c_n^+}{c_{n-1}^+}$.   Notice that
\begin{eqnarray*}
\lim_{n\to\infty} \sfrac{1}{n} \log R_1 &=  \log \sfrac{\mu_d}{\mu_{d-1}} .
\end{eqnarray*}
Since $c_{n-1}^+ \leq c_n^+ \leq 2d\, c_{n-1}^+$ if follows that 
$1 \leq \limsup_{n\to\infty} \log \sfrac{c_n^+}{c_{n-1}^+} \leq \log 2d$.  In other words,
\begin{eqnarray*}
 \sfrac{1}{n} \log R_2 &= O(\sfrac{1}{n} ) .
\end{eqnarray*}
Put these results together in theorem \ref{thm3.7} to obtain
\[  \sfrac{1}{n}\, \alpha_n^+(\theta,\phi) - \sfrac{1}{2\pi}\,(\phi \minus\theta) 
\leq 16 \sqrt{\sfrac{1}{n}\,\log \sfrac{c_n^+}{c_n^{(d-1)}}} = 
O(1) \q \hbox{as $n\to\infty$};\]
and
\[  \sfrac{1}{n}\, \alpha_n^-(\theta,\phi) - \sfrac{1}{2\pi}\,(\phi \minus\theta) 
\leq 16 \sqrt{\sfrac{1}{n}\,\log \sfrac{c_n^+}{c_{n-1}^+}} =
O(\sfrac{1}{\sqrt{n}}) \longrightarrow 
0\q\hbox{as $n\to\infty$}.\] 
These results give a slightly better outcome, since the bounds are numerical, 
not involving a constant $C_0$ of unknown magnitude as in corollary
\ref{cor222}.  

In the limit as $n\to\infty$, 
\begin{equation}
\limsup_{n\to\infty} \sfrac{1}{n}\, \alpha_n^-(\theta,\phi) \leq
 \sfrac{1}{2\pi}\,(\phi\minus\theta) .
\label{eqn34} 
\end{equation}
Notice that 
\begin{equation}
\lim_{\theta \nearrow \phi} \alpha_n^- (\theta,\phi) = O(\sqrt{n}) .
\end{equation}
In other words, as $n$ increases, the multiplicity of Fisher zeros with $|a|\leq 1$
will increase at a rate no faster than $O(\sqrt{n})$.

Adding the contributions from $\alpha_n^-$ and $\alpha_n^+$ show that, given
a small $\epsilon>0$, there exists an $N$ such that
\begin{equation}
\sfrac{1}{n} \alpha_n(\theta,\phi) \leq
\sfrac{1}{n} \alpha_n^-(\theta,\phi) + \sfrac{1}{n} \alpha_n^+(\theta,\phi)
\leq \sfrac{1}{\pi}\,(\phi \minus\theta) + 16\sqrt{\log \sfrac{\mu_d}{\mu_{d-1}}}  +   \eps .
\end{equation} 
for all $n\geq N$.
This is an improvement on equation \Ref{eqn19a} in that there are no arbitrary
constants $C_0$ involved.  However, this bound is not very useful in low dimensions, but
since the square root term approaches zero with increasing $d$, it gives a useful bound
in high dimensions.

Taking the limsup on the left hand side as $n\to\infty$ gives 
\begin{equation}
\limsup_{n\to\infty} \sfrac{1}{n} \alpha_n(\theta,\phi) \leq
 \sfrac{1}{\pi}\,(\phi \minus\theta) + 16\sqrt{\log \sfrac{\mu_d}{\mu_{d-1}}} .
\end{equation}
This gives the following theorem:

\begin{theorem}
Let $\alpha_n(\theta,\phi)$ be the number of Fisher zeros with principal argument
in $[\theta,\phi]$ in the complex $a$-plane.  Then there exists a measurable function
$q$ defined by
\[ \int_\theta^\phi q(\psi)\, d\psi = \limsup_{n\to\infty} \sfrac{1}{n}\, \alpha_n(\theta,\phi)  ,\]
where $- \pi<\theta\leq\phi \leq \pi$.  The function $q$ is the distribution function of the
principal arguments of the zeros of $A_n^+(a)$ in the limit as $n\to\infty$.  Moreover,
\[\int_{[\theta,\phi]}dq = \int_\theta^\phi q(\psi)\,d\psi \leq \min\{\sfrac{1}{\pi}\,
(\phi\minus\theta)+ 16\sqrt{\log \sfrac{\mu_d}{\mu_{d-1}}},1\},\]
where the integral is the Lebesgue integral on $[-\pi,\pi]$.
\qed
\label{thm2.7a}   
\end{theorem}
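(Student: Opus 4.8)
The plan is to realise the distribution function $q$ as the derivative of the cumulative angular counting function and then to import the quantitative bound directly from the Erd\'elyi estimate assembled in the paragraphs above. First I would fix the base point $\theta=-\pi$ and set $\Phi(\phi)=\limsup_{n\to\infty}\frac1n\,\alpha_n(-\pi,\phi)$ for $\phi\in(-\pi,\pi]$. For each fixed $n$ the map $\phi\mapsto\alpha_n(-\pi,\phi)$ counts the zeros of $A_n^+(a)$ whose principal argument lies in $(-\pi,\phi]$ and is therefore non-decreasing in $\phi$; since the limit superior of a family of non-decreasing functions is again non-decreasing, $\Phi$ is non-decreasing on $(-\pi,\pi]$ and, because $A_n^+(a)$ has exactly $n$ zeros, bounded in $[0,1]$. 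A bounded monotone function is measurable and, by Lebesgue's differentiation theorem, differentiable almost everywhere; setting $q=\Phi'$ produces the required non-negative measurable distribution function, and the monotone $\Phi$ generates a Lebesgue--Stieltjes measure $dq$ with $\int_{[\theta,\phi]}dq=\Phi(\phi)-\Phi(\theta)$. Interpreting $q$ as the Radon--Nikodym density of $dq$ recovers the displayed identity $\int_{[\theta,\phi]}dq=\int_\theta^\phi q(\psi)\,d\psi$.

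Next I would assemble the upper bound, for which almost all of the work has already been carried out before the statement. Applying Erd\'elyi's theorem (theorem \ref{thm3.7}) to $P=A_n^+$, with $a_N=c_n^{(d-1)}$, $a_0=c_n^+(0)=c_{n-1}^+$ and $\|P\|=A_n^+(1)=c_n^+$, gives $\frac1n\alpha_n^+(\theta,\phi)\le\frac1{2\pi}(\phi-\theta)+16\sqrt{\frac1n\log\frac{c_n^+}{c_n^{(d-1)}}}$ and $\frac1n\alpha_n^-(\theta,\phi)\le\frac1{2\pi}(\phi-\theta)+16\sqrt{\frac1n\log\frac{c_n^+}{c_{n-1}^+}}$. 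Using $\frac1n\log\frac{c_n^+}{c_n^{(d-1)}}\to\log\frac{\mu_d}{\mu_{d-1}}$ and $\frac1n\log\frac{c_n^+}{c_{n-1}^+}=O(1/n)\to0$ (the latter from $c_{n-1}^+\le c_n^+\le 2d\,c_{n-1}^+$), and adding the two displays via $\alpha_n\le\alpha_n^++\alpha_n^-$, I obtain for every $\epsilon>0$ and all large $n$ that $\frac1n\alpha_n(\theta,\phi)\le\frac1\pi(\phi-\theta)+16\sqrt{\log\frac{\mu_d}{\mu_{d-1}}}+\epsilon$. Taking the limit superior and then $\epsilon\downarrow0$ yields $\limsup_{n\to\infty}\frac1n\alpha_n(\theta,\phi)\le\frac1\pi(\phi-\theta)+16\sqrt{\log\frac{\mu_d}{\mu_{d-1}}}$, while the trivial bound $\alpha_n(\theta,\phi)\le n$ gives $\limsup_{n\to\infty}\frac1n\alpha_n(\theta,\phi)\le1$; combining the two produces the stated minimum.

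The step I expect to be the main obstacle is reconciling the two parts, that is, identifying $\int_\theta^\phi q\,d\psi$ with $\limsup_{n\to\infty}\frac1n\alpha_n(\theta,\phi)$ for a general left endpoint $\theta$, not merely $\theta=-\pi$. Although $\alpha_n(\theta,\phi)=\alpha_n(-\pi,\phi)-\alpha_n(-\pi,\theta)$ holds exactly for each $n$, the limit superior of a difference is not the difference of the limits superior, so one cannot simply assert $\Phi(\phi)-\Phi(\theta)=\limsup_{n\to\infty}\frac1n\alpha_n(\theta,\phi)$. What is true, and all that the bound requires, is the one-sided inequality $\int_{[\theta,\phi]}dq=\Phi(\phi)-\Phi(\theta)\le\limsup_{n\to\infty}\frac1n\alpha_n(\theta,\phi)$, which follows from $\limsup A_n\le\limsup(A_n-B_n)+\limsup B_n$ with $A_n=\frac1n\alpha_n(-\pi,\phi)$ and $B_n=\frac1n\alpha_n(-\pi,\theta)$. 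I would therefore phrase the conclusion throughout in terms of the measure $dq$ generated by the monotone $\Phi$, transferring the upper estimate through this inequality rather than claiming additivity of the limit superior; genuine equality of $\int_\theta^\phi q\,d\psi$ with the limit superior then holds at those endpoints where the cumulative counts actually converge and the subsequential limits coincide (in particular at almost every continuity point of $\Phi$).
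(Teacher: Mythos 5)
Your proposal is correct and follows essentially the same route as the paper: the paper likewise obtains $q$ from the monotonicity (hence measurability and a.e.\ differentiability) of $\limsup_{n\to\infty}\frac{1}{n}\alpha_n(-\pi,\phi)$, applies Erd\'elyi's theorem to $A_n^+(a)$ with $\|A_n^+\|=c_n^+$, $R_1=c_n^+/c_n^{(d-1)}$ and $R_2=c_n^+/c_{n-1}^+$, adds the $\alpha_n^{+}$ and $\alpha_n^{-}$ bounds using $\frac{1}{n}\log R_1\to\log\frac{\mu_d}{\mu_{d-1}}$ and $\frac{1}{n}\log R_2\to 0$, and combines the resulting limsup bound with the trivial bound $\alpha_n\leq n$ to get the minimum. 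Your closing paragraph on the non-additivity of the limsup over intervals addresses a point the paper passes over silently (it asserts the interval form $\int_\theta^\phi q\,d\psi=\limsup_{n\to\infty}\frac{1}{n}\alpha_n(\theta,\phi)$ without comment), and your one-sided inequality is the correct way to transfer the upper estimate.
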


\subsection{Yang-Lee zeros}
\label{Yang-Lee}

Lee-Yang zeros were defined in section \ref{Lee-Yang} as $t$-plane zeros of the
partial generating function $G_N(a,t)$ (see equation \Ref{eqn11}).  
In this section the $a$-plane zeros of $G_N(a,t)$
are considered;  these will be called \textit{Yang-Lee} zeros.  Generally, since for fixed
values of $t$ the partial sum  $G_N(a,t)$ is a linear combination of the partition functions
$A_n^+(a)$, the Yang-Lee zeros will have properties similar to the properties
of partition function (or Fisher) zeros. 

Define 
\begin{equation}
B_v(t) = \sum_{n=v}^N c_n^+(v)\, t^n .
\end{equation}
Then the partial sums of the generating function are
\begin{equation}
G_N(a,t) = \sum_{n=0}^N A_n^+(a)\, t^n
= \sum_{v=0}^N \sum_{n=v}^N c_n^+(v) t^n a^v
= \sum_{v=0}^N B_v(t)\, a^v .
\label{eqn36}   
\end{equation}
In the context here, the function $\nu_N(\rho)$ (see equation \Ref{eqn12})
is the number of $a$-plane zeros of 
$G_N(a,t)$ in the annulus $1\minus\rho \leq |a| \leq \sfrac{1}{1-\rho}$.
To apply theorem \ref{thm3.3}, compute $L_N$:
\begin{eqnarray}
L_N 
&= \log \sum_{v=0}^N B_v(t) - \shalf \log B_N(t) - \shalf \log B_0(t) \cr
&= \log \sum_{n=0}^N \sum_{v=0}^n c_n^+(v)\,t^n - \shalf \log c_N^+(N)\, t^N
- \shalf \log \sum_{n=0}^N c_n^+(0)\, t^n \cr
&= \log \sum_{n=0}^N c_n^+\, t^n - \shalf \log c_N^{(d-1)}\,t^N - \shalf \log \sum_{n=1}^N c_{n-1}^+\, t^n \cr
&= \log \sum_{n=0}^N c_n^+\, t^n - \shalf \log c_N^{(d-1)}\,t^N - \shalf \log \sum_{n=0}^{N-1} c_{n}^+\, t^n
-\shalf \log t \cr
&= \log \LH \frac{ \sum_{n=0}^N c_n^+\, t^n}{(\sum_{n=0}^{N-1} c_{n}^+\, t^n )^{1/2}\,
(c_N^{(d-1)}\,t^N)^{1/2}} \RH - \shalf \log t .
\label{eqn38}   
\end{eqnarray}
Suppose that $t$ is large, so that $c_n^+t^n = (\mu_d t)^{n+o(n)} \to\infty$ as $n\to\infty$
(that is, $t > \sfrac{1}{\mu_d}$).  Then the summations above are dominated by their
fastest exponentially growing terms, and the result is that (see equation \Ref{eqn28})
\begin{equation}
\lim_{N\to\infty} \sfrac{2}{N}\, L_N = \log \sfrac{\mu_d}{\mu_{d-1}} = \zeta_d .
\end{equation}
By theorem \ref{thm3.3}, this is similar to equation \Ref{eqn19}, namely, for finite $n$
and $\rho\in(0,1)$
\begin{equation}
1- \sfrac{2}{N\rho}\,L_n \leq \sfrac{1}{N} \nu_N(\rho) \leq 1 + \sfrac{2}{N\rho}\,L_n .
\label{eqn32} 
\end{equation}
Since $\nu_N(\rho) \leq N$, the result is that, in the limit $N\to\infty$,
\begin{equation}
1 - \sfrac{1}{\rho}\zeta_d 
\leq \liminf_{N\to\infty} \sfrac{1}{N}\, \nu_N(\rho) 
\leq \limsup_{N\to\infty} \sfrac{1}{N}\, \nu_N(\rho) 
\leq 1,
\end{equation}
where $\zeta_d = \log \sfrac{\mu_d}{\mu_{d-1}}$.  This result is similar to theorem
\ref{thm2.6}, namely that for $\zeta_d < \rho < 1$, a positive fraction of the 
Yang-Lee zeros are confined to the annulus $1-\rho \leq |a| \leq \sfrac{1}{1-\rho}$
in the $a$-plane.    This gives the following theorem:
\begin{theorem}
Suppose that $t>\sfrac{1}{\mu_d}$.  Define $\zeta_d = \log \sfrac{\mu_d}{\mu_{d-1}}$ 
(where $\mu_1=1$). If $\zeta_d < \rho < 1$, then
\[ 0 < 1- \sfrac{1}{\rho}\,\zeta_d \leq \liminf_{N\to\infty} \sfrac{1}{N}\, \nu_N(\rho) 
\leq \limsup_{N\to\infty} \sfrac{1}{N}\, \nu_N(\rho) \leq 1\]
and a positive fraction of Yang-Lee zeros of $G_N^+(a,t)$ are in the annulus
$1\minus\rho < |a| < \sfrac{1}{1-\rho}$ in the complex $a$-plane. \qed
\label{thm2.7}   
\end{theorem}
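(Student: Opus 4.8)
The plan is to regard $G_N(a,t)$, for a fixed $t>0$, as a polynomial of degree $N$ in the variable $a$ with coefficients $a_v = B_v(t) = \sum_{n=v}^N c_n^+(v)\,t^n$, as displayed in equation~\Ref{eqn36}, and then to apply the Hughes--Nikechbali result (theorem~\ref{thm3.3}) to these coefficients. First I would check the hypothesis $a_0 a_N \neq 0$: both $B_0(t)$ and $B_N(t) = c_N^{(d-1)}t^N$ are polynomials in $t$ with non-negative coefficients and strictly positive leading term, hence nonzero for every $t>0$. Theorem~\ref{thm3.3} then supplies, for $0<\rho<1$, the inequality $1 - \tfrac{1}{N}\nu_N(\rho) \leq \tfrac{2}{N\rho}L_N$, where $\nu_N(\rho)$ counts the $a$-plane zeros in the annulus $1-\rho \leq |a| \leq \tfrac{1}{1-\rho}$ and $L_N$ is the quantity computed in equation~\Ref{eqn38}.

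The heart of the argument is the asymptotic evaluation of $L_N$. Starting from the closed form in equation~\Ref{eqn38}, $L_N$ is the logarithm of the ratio of the full sum $\sum_{n=0}^{N} c_n^+\,t^n$ to the geometric mean of the truncated sum $\sum_{n=0}^{N-1} c_n^+\,t^n$ and the top coefficient $c_N^{(d-1)}t^N$, less $\tfrac{1}{2}\log t$. The decisive observation is that when $t>1/\mu_d$ one has $c_n^+\,t^n \simeq B\,n^{\gamma_1-1}(\mu_d t)^n$ with geometric ratio $\mu_d t>1$, so each such partial sum is dominated, to leading exponential order, by its final term. Hence $\tfrac{1}{N}\log\sum_{n=0}^N c_n^+\,t^n \to \log(\mu_d t)$, and likewise for the truncated sum, while $\tfrac{1}{N}\log(c_N^{(d-1)}t^N) \to \log(\mu_{d-1}t)$ because $c_N^{(d-1)} = \mu_{d-1}^{N+o(N)}$. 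Combining these, the subexponential prefactors and the $-\tfrac{1}{2}\log t$ term wash out after division by $N$, and $\tfrac{2}{N}L_N \to 2\log(\mu_d t) - \log(\mu_d t) - \log(\mu_{d-1}t) = \log\tfrac{\mu_d}{\mu_{d-1}} = \zeta_d$, as asserted immediately before the theorem.

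With this limit in hand the conclusion is immediate. Rearranging the Hughes--Nikechbali inequality gives $\tfrac{1}{N}\nu_N(\rho) \geq 1 - \tfrac{2}{N\rho}L_N$; taking $\liminf$ and inserting $\tfrac{2}{N}L_N \to \zeta_d$ yields $\liminf_{N\to\infty}\tfrac{1}{N}\nu_N(\rho) \geq 1 - \tfrac{\zeta_d}{\rho}$, while the trivial bound $\nu_N(\rho)\leq N$ gives $\limsup_{N\to\infty}\tfrac{1}{N}\nu_N(\rho)\leq 1$. For any $\rho$ with $\zeta_d<\rho<1$ the lower bound $1-\zeta_d/\rho$ is strictly positive, which is precisely the statement that a positive fraction of the Yang--Lee zeros lie in the annulus $1-\rho<|a|<\tfrac{1}{1-\rho}$.

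The step I expect to be the main obstacle is the careful justification of the asymptotics in the middle paragraph, rather than the final bookkeeping. One must verify that the dominant-term estimate for $\tfrac{1}{N}\log\sum_{n=0}^N c_n^+\,t^n$ is valid, i.e.\ that the polynomial prefactors $n^{\gamma_1-1}$ and the gap between the full and truncated sums contribute only $o(N)$ to the logarithm; this is exactly where the hypothesis $t>1/\mu_d$ is indispensable, since for $t<1/\mu_d$ the series converges and last-term dominance fails. One also needs the combinatorial identifications used implicitly in equation~\Ref{eqn38}, namely $B_0(t)=\sum_n c_n^+(0)\,t^n = \sum_n c_{n-1}^+\,t^n$, which follows from deleting the (necessarily upward) first step of a visit-free positive walk, and $B_N(t)=c_N^{(d-1)}t^N$, which follows from $c_n^+(n)=c_n^{(d-1)}$. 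These should be recorded cleanly before the limit is taken, but they are routine given equation~\Ref{eqn3} and the established existence of the free energy.
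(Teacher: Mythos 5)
Your proposal is correct and follows essentially the same route as the paper: treat $G_N(a,t)$ as a polynomial in $a$ with coefficients $B_v(t)$, reduce $L_N$ via the identities $c_n^+(n)=c_n^{(d-1)}$ and $c_n^+(0)=c_{n-1}^+$ to the closed form in equation \Ref{eqn38}, and use dominance of the fastest-growing terms (valid precisely because $t>\sfrac{1}{\mu_d}$) to get $\sfrac{2}{N}L_N\to\zeta_d$, after which theorem \ref{thm3.3} and the trivial bound $\nu_N(\rho)\leq N$ give the stated inequalities. The only cosmetic differences are that you verify $B_0(t)B_N(t)\neq 0$ explicitly and carry out the limit directly in terms of $\log(\mu_d t)$ and $\log(\mu_{d-1}t)$, where the paper simply cites the analogous Fisher-zero computation in equation \Ref{eqn28}.
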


Consider next the case that $t$ is small, namely $t< \sfrac{1}{\mu_{d}}< \sfrac{1}{\mu_{d-1}}$.
Then there is a constant $K$ such that $c_n^+ t^n \leq K$ for all $n\in\NatN$.
In particular, $1\leq \sum_{n=0}^N c_n^+ t^n \leq K(N\plus 1)$.  This shows that
$\lim_{N\to\infty} ( \sum_{n=0}^N c_n^+ t^n )^{1/N} = 1$.  Thus, by equation \Ref{eqn38},
\begin{equation}
\xi_t = \lim_{N\to\infty} \sfrac{2}{N} L_N = - \log (\mu_{d-1}t) > 0
\end{equation}
since $t< \sfrac{1}{\mu_{d-1}}$.

Observe that $\xi_t < 1$ if $t\in (\sfrac{1}{e\,\mu_{d-1}},\sfrac{1}{\mu_d})$ and if $\mu_d
< e\, \mu_{d-1}$.  In $\mathL^2_+$ this gives the range $t\in (0.36787\ldots,0.37905\ldots)$,
and in $\mathL^3_+$, $t\in (0.13944\ldots,0.21349\ldots)$.  For these values $t$,
$\xi_t < 1$ and there is a $\rho\in(\xi_t,1)$ for which a positive fraction of zeros are 
confined in an annulus $1\minus\rho < |a| < \sfrac{1}{1-\rho}$ in the complex $a$-plane. 
This gives the theorem:

\begin{theorem}
Suppose that $t\in (\sfrac{1}{e\,\mu_{d-1}},\sfrac{1}{\mu_d})$.  
Define $\xi_t = |\log (\mu_{d-1} t)|$ 
(where $\mu_1=1$). If $\xi_t < \rho < 1$, then
\[ 0 < 1- \sfrac{1}{\rho}\,\xi_t \leq \liminf_{N\to\infty} \sfrac{1}{N}\, \nu_N(\rho)
\leq \limsup_{N\to\infty} \sfrac{1}{N}\, \nu_N(\rho) \leq 1\]
and a positive fraction of Yang-Lee zeros of $G_N^+(a,t)$ are in the annulus
$1\minus\rho < |a| < \sfrac{1}{1-\rho}$ in the complex $a$-plane. \qed
\label{thm2.8}   
\end{theorem}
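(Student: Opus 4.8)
The plan is to apply the Hughes--Nikechbali estimate (Theorem \ref{thm3.3}) to $G_N(a,t)$ regarded as a degree-$N$ polynomial in $a$, exactly as in the proof of Theorem \ref{thm2.7}, but now extracting the asymptotics of $L_N$ in the small-activity regime $t<\sfrac{1}{\mu_d}$. Using the representation $G_N(a,t)=\sum_{v=0}^N B_v(t)\,a^v$ of equation \Ref{eqn36}, the coefficients $a_v=B_v(t)$ satisfy $B_0(t)\geq 1>0$ and $B_N(t)=c_N^{(d-1)}t^N\neq 0$ for $t>0$, so the hypothesis $a_0a_N\neq 0$ of Theorem \ref{thm3.3} holds, $G_N(a,t)$ has degree $N$ in $a$, and $\nu_N(\rho)$ counts the $a$-plane (Yang--Lee) zeros in the annulus $1-\rho\leq|a|\leq\sfrac{1}{1-\rho}$. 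The expression for $L_N$ is already supplied by equation \Ref{eqn38}.

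First I would estimate the three ingredients of $L_N$ in equation \Ref{eqn38}. The decisive observation is that $t<\sfrac{1}{\mu_d}$ gives $\mu_d t<1$, and since $c_n^+\simeq B\,n^{\gamma_1-1}\mu_d^n$ the terms $c_n^+ t^n$ are bounded and $\sum_{n\geq 0}c_n^+ t^n$ converges to a finite constant. Consequently $\sfrac{1}{N}\log\sum_{n=0}^N c_n^+ t^n\to 0$ and likewise for the partial sum to $N-1$, so the numerator and the first denominator factor are subexponential in $N$, while the term $-\shalf\log t$ contributes $O(\sfrac{1}{N})$. All the exponential weight therefore sits in the factor $(c_N^{(d-1)}t^N)^{1/2}$, and since $c_N^{(d-1)}=\mu_{d-1}^{N+o(N)}$ one gets $\sfrac{1}{N}\log(c_N^{(d-1)}t^N)\to\log\mu_{d-1}+\log t$. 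Assembling these limits in equation \Ref{eqn38} yields
\[ \xi_t=\lim_{N\to\infty}\sfrac{2}{N}\,L_N=-\log(\mu_{d-1}t)=|\log(\mu_{d-1}t)|, \]
the final equality holding because $t<\sfrac{1}{\mu_d}<\sfrac{1}{\mu_{d-1}}$ forces $\mu_{d-1}t<1$, so the logarithm is negative and $\xi_t>0$.

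Next I would fix the admissible window for $\rho$. The hypothesis $t>\sfrac{1}{e\,\mu_{d-1}}$ is precisely $\mu_{d-1}t>e^{-1}$, i.e. $\xi_t=-\log(\mu_{d-1}t)<1$; together with $\xi_t>0$ this makes $(\xi_t,1)$ nonempty, so a choice $\rho\in(\xi_t,1)$ exists. Theorem \ref{thm3.3} then gives, for every finite $N$, $1-\sfrac{1}{N}\nu_N(\rho)\leq\sfrac{2}{N\rho}L_N$; taking $\liminf$ as $N\to\infty$ and substituting $\lim\sfrac{2}{N}L_N=\xi_t$ produces the stated lower bound $1-\sfrac{1}{\rho}\xi_t\leq\liminf_{N\to\infty}\sfrac{1}{N}\nu_N(\rho)$. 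The upper bound $\limsup_{N\to\infty}\sfrac{1}{N}\nu_N(\rho)\leq 1$ is immediate from $\nu_N(\rho)\leq N$. Finally the choice $\xi_t<\rho<1$ makes $1-\sfrac{1}{\rho}\xi_t>0$, which is exactly the assertion that a positive fraction of the Yang--Lee zeros occupy the annulus.

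The argument is not deep, but the one point demanding care is what distinguishes it from the large-$t$ case of Theorem \ref{thm2.7}: for $t<\sfrac{1}{\mu_d}$ the sums $\sum_{n=0}^N c_n^+ t^n$ converge rather than growing like $(\mu_d t)^N$, so they drop out of the exponential rate, and the rate is governed entirely by the leading coefficient $B_N(t)=c_N^{(d-1)}t^N$, hence by $\mu_{d-1}$ instead of $\mu_d$. Everything else is the sign and inequality bookkeeping that confines $\xi_t$ to $(0,1)$ under the stated bounds on $t$.
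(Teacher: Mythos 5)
Your proof is correct and follows essentially the same route as the paper: applying Theorem \ref{thm3.3} to the coefficient representation of equation \Ref{eqn36}, observing that for $t<\sfrac{1}{\mu_d}$ the sums $\sum_{n=0}^N c_n^+ t^n$ grow subexponentially so that the rate in equation \Ref{eqn38} is governed entirely by $B_N(t)=c_N^{(d-1)}t^N$, yielding $\xi_t=-\log(\mu_{d-1}t)$, with the hypothesis $t>\sfrac{1}{e\,\mu_{d-1}}$ guaranteeing $\xi_t<1$. The only cosmetic difference is that you argue via convergence of the full series $\sum_{n\geq 0}c_n^+t^n$ (and explicitly verify $a_0a_N\neq 0$), while the paper merely bounds the partial sums by $K(N+1)$; both give $\sfrac{1}{N}\log\sum_{n=0}^N c_n^+t^n\to 0$.
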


Theorems \ref{thm2.7} and \ref{thm2.8} do not rule out the possibility that zeros may accumulate in
a annular region for $0<t\leq \sfrac{1}{e\,\mu_{d-1}}$.

The angular distribution of Yang-Lee zeros can be considered using theorem \ref{thm3.7}.
In this case, the polynomial is $G_N(a,t) = \sum_{v=0}^N B_v(t)\, a^v$
(equation \Ref{eqn36}).   Similar to the above, let $\alpha_N(\theta,\phi)$ be defined 
as in equation \Ref{eqn12}, but now for Yang-Lee zeros of the partial generating function
$G_N(a,t)$ in the $a$-plane.  Proceed by computing $R_1$ and $R_2$ in theorem \ref{thm3.7}:  
First, notice that 
\[ \hspace{-2cm}
\| G_N(a,t) \| = \max_{|a|=1}\LC\sum_{v=1}^N B_v(t)\, a^v \RC
= \sum_{v=0}^N B_v(t)
= \sum_{v=0}^N \sum_{n=v}^N c_n^+(v)\, t^n
= \sum_{n=0}^N c_n^+\,t^n. \]
This gives for $R_1$:
\begin{equation}
R_1 = B_N(t)^{-1} \| G_N(a,t) \| = \frac{\sum_{n=0}^N c_n^+ \, t^n}{c_N^+(N)\, t^N}
= \frac{\sum_{n=0}^N c_n^+ \,t^n}{c_N^{(d-1)}\,t^N} .
\end{equation}
Taking logs, dividing by $N$, and taking $N\to\infty$, shows that
\begin{equation}
\lim_{N\to\infty} \sfrac{1}{N}\, \log R_1
= \cases{
\log \sfrac{\mu_d}{\mu_{d-1}}, & \hbox{\norf if $t>\sfrac{1}{\mu_d}$}; \cr
- \log (\mu_{d-1}t) & \hbox{\norf if $t\leq \sfrac{1}{\mu_d}$}.
}
\label{eqn44}   
\end{equation}
$R_2$ is computed in a similar way, namely
\begin{equation}
R_2 = B_0(t)^{-1} \| G_N(a,t) \| = \frac{\sum_{n=0}^N c_n^+ \, t^n}{c_N^+\, t^N} .
\label{eqn45}   
\end{equation}
Taking logs, dividing by $N$, and taking $N\to\infty$, shows that
\begin{equation}
\lim_{N\to\infty} \sfrac{1}{N}\, \log R_2
= 0 ,
\end{equation}
for all $t>0$, since the numerator and denominator grows at the same exponential 
rate in equation \Ref{eqn45} if $t>\sfrac{1}{\mu_d}$.
Since the result in equation \Ref{eqn44} is positive for allmost all $t>0$, it follows by
theorem \ref{thm3.7} that 
\begin{eqnarray}
\limsup_{N\to\infty} \sfrac{1}{N}\, \alpha_N^+(\theta,\phi) - \sfrac{1}{2\pi}\,(\phi\minus\theta)
\leq 16 \cases{
\sqrt{\log\sfrac{\mu_d}{\mu_{d-1}}}, & \hbox{\norf if $t>\sfrac{1}{\mu_d}$}; \cr
\sqrt{|\log (\mu_{d-1}t)|},  & \hbox{\norf if $t\leq\sfrac{1}{\mu_d}$}.
} \\
\limsup_{N\to\infty} \sfrac{1}{N}\, \alpha_N^-(\theta,\phi) - \sfrac{1}{2\pi}\,(\phi\minus\theta)
\leq 0 . \nonumber
\end{eqnarray}
Adding these results give
\begin{equation}
\limsup_{N\to\infty} \sfrac{1}{N}\, \alpha_N(\theta,\phi) 
\leq \sfrac{1}{\pi}\,(\phi\minus \theta) + 16 \cases{
\sqrt{\log\sfrac{\mu_d}{\mu_{d-1}}}, & \hbox{\norf if $t>\sfrac{1}{\mu_d}$}; \cr
\sqrt{|\log (\mu_{d-1}t)|},  & \hbox{\norf if $t\leq\sfrac{1}{\mu_d}$}.
}
\end{equation}
This is not a particularly good bound on the limsup, since the factor of $16$
is quite significant.  It is also the case that $\lim_{N\to\infty} \sfrac{1}{N}\, \alpha_N(\theta,\phi) 
\leq 1$, and since $16\sqrt{\log \mu_2} \approx 15.76$ and
$16\sqrt{\log \sfrac{\mu_3}{\mu_{2}}} \approx 12.12$, this bound
is not useful in $\mathL^d_+$ if $d=2$ or $d=3$.  However, taking $d$ large
has the consequence that $\sqrt{\log\sfrac{\mu_d}{\mu_{d-1}}} \to 0$,
so that the bound becomes far better, and  approaches the bound given
in equation \Ref{eqn34} for Fisher zeros.

\begin{figure}[t!]
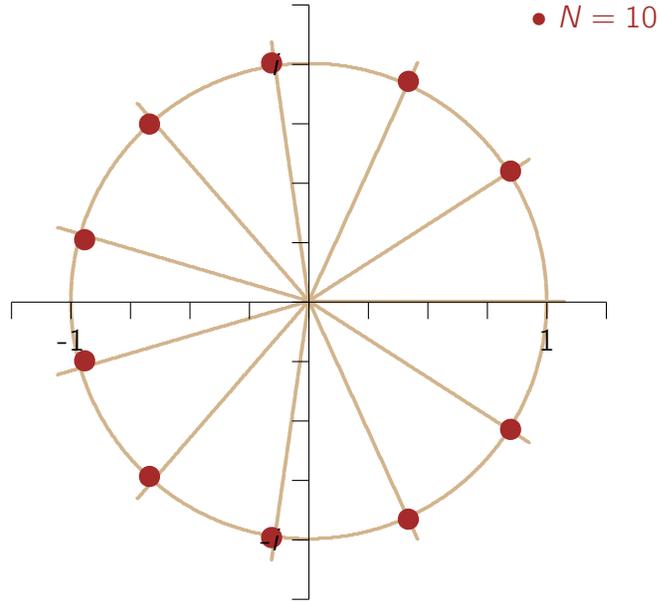

\input Figures/figure02.tex
\caption{Approximate locations of the zeros of $g_N(a,t)$ for $N=10$
and $a=1$.  The zeros are approximately on the unit circle, and close
to ten vertices of a regular 11-gon with centre at the origin (and with one
corner placed at the point $(1,0)$, as illustrated).  The rays from the origin
are the spokes of the regular 11-gon.  The least squares ellipse
through the points has approximate centre $(0.011,0)$, approximate
half short axis of length $1.00$, and approximate half long axis of
length $1.01$.}
\label{figure-t2-1}  
\end{figure}

Similar to the case for Fisher
zeros, there exists an angular distribution function $s(\psi)$ for Yang-Lee zeros.
This distribution function is similarly constrained by the results above in 
high dimensions, as shown in the following theorem:

\begin{theorem}
Let $\alpha_N(\theta,\phi)$ be the number of Yang-Lee zeros with principal argument
in $[\theta,\phi]$ in the complex $a$-plane.  Suppose that $s$ is defined by
\[ \int_\theta^\phi s(\psi)\, d\psi = \limsup_{n\to\infty} \sfrac{1}{N}\, \alpha_N(\theta,\phi)  ,\]
where $- \pi<\theta\leq\phi \leq \pi$.  Then $s$ is a distribution function of the
principal arguments of the $a$-plane zeros of $G_N(a,t)$ in the limit as $n\to\infty$.  Moreover,
if $t>\sfrac{1}{\mu_d}$, then
\[\int_{[\theta,\phi]}ds = \int_\theta^\phi s(\psi)\,d\psi \leq \min\{\sfrac{1}{\pi}\,
(\phi\minus\theta)+16\sqrt{\log\sfrac{\mu_d}{\mu_{d-1}}},1\},\]
where the integral is the Lebesgue integral on $[-\pi,\pi]$.

Similarly, if $t<\sfrac{1}{\mu_d}$, then 
\[\int_{[\theta,\phi]}ds = \int_\theta^\phi s(\psi)\,d\psi \leq \min\{\sfrac{1}{\pi}\,
(\phi\minus\theta)+16\sqrt{|\log(\mu_{d-1}t)|},1\}. \hfill \qed \]
\label{thm2.7aa}   
\end{theorem}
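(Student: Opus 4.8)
The plan is to follow the template already established for the Fisher zeros in Theorem~\ref{thm2.7a}, since the quantitative input for the Yang-Lee zeros has in fact been assembled in the paragraphs immediately preceding the statement. First I would recall that applying Erd\'elyi's theorem (Theorem~\ref{thm3.7}) to the polynomial $G_N(a,t)=\sum_{v=0}^N B_v(t)\,a^v$, together with the already computed quantities $R_1$ and $R_2$, yields
\[
\limsup_{N\to\infty}\tfrac{1}{N}\,\alpha_N(\theta,\phi)
\leq \tfrac{1}{\pi}(\phi-\theta)+16\sqrt{\log\tfrac{\mu_d}{\mu_{d-1}}}
\]
when $t>1/\mu_d$, and the analogous bound with $16\sqrt{|\log(\mu_{d-1}t)|}$ when $t<1/\mu_d$. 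The entire remaining task is therefore to repackage these limsup bounds into the language of an angular distribution function $s$.

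The second step is to construct $s$. For each fixed $N$ the map $\phi\mapsto\alpha_N(-\pi,\phi)$ counts those $a$-plane zeros of $G_N(a,t)$ whose principal argument lies in $(-\pi,\phi]$, so it is a nondecreasing step function of $\phi$. Consequently $\Phi(\phi):=\limsup_{N\to\infty}\tfrac{1}{N}\,\alpha_N(-\pi,\phi)$ is nondecreasing on $(-\pi,\pi]$. Being monotone, $\Phi$ is Lebesgue measurable and differentiable almost everywhere, so there is a nonnegative measurable function $s$ with $\Phi(\phi)=\int_{-\pi}^\phi s(\psi)\,d\psi$, and I take $s$ to be this density. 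The increment $\int_\theta^\phi s(\psi)\,d\psi=\Phi(\phi)-\Phi(\theta)$ then plays the role of the asymptotic fraction of zeros with argument in $[\theta,\phi]$, exactly mirroring the construction of $q$ for the Fisher zeros.

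The final step is to combine the two ingredients. For fixed $N$ the counting function is additive over disjoint arcs, $\alpha_N(-\pi,\phi)=\alpha_N(-\pi,\theta)+\alpha_N(\theta,\phi)$, so subadditivity of the limsup gives $\Phi(\phi)-\Phi(\theta)\le\limsup_{N\to\infty}\tfrac{1}{N}\,\alpha_N(\theta,\phi)$. Feeding in the Erd\'elyi bound from the first step, and using that every increment is trivially at most $\Phi(\pi)=1$ (since $\tfrac{1}{N}\alpha_N(-\pi,\pi)=1$), yields
\[
\int_\theta^\phi s(\psi)\,d\psi \leq \min\!\Big\{\tfrac{1}{\pi}(\phi-\theta)+16\sqrt{\log\tfrac{\mu_d}{\mu_{d-1}}},\,1\Big\}
\]
for $t>1/\mu_d$, and the corresponding statement with $|\log(\mu_{d-1}t)|$ for $t<1/\mu_d$, which is the assertion of the theorem.

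I expect the only genuine subtlety---and hence the \emph{main obstacle}---to be the passage from monotonicity of $\Phi$ to the density representation. A monotone function is differentiable almost everywhere but need not be absolutely continuous, so its Lebesgue--Stieltjes measure $d\Phi$ may carry a singular part not captured by the a.e.\ derivative $s$; the clean fix is to read the theorem's defining relation through the cumulative function $\Phi$ and identify $s$ with the absolutely continuous part of $d\Phi$, as is implicitly done for $q$ in Theorem~\ref{thm2.7a}. A related point is that $\limsup$ is only subadditive over adjacent arcs, so the defining equation $\int_\theta^\phi s=\limsup_{N\to\infty}\tfrac{1}{N}\alpha_N(\theta,\phi)$ should be understood up to this gap; crucially, the inequality $\Phi(\phi)-\Phi(\theta)\le\limsup_{N\to\infty}\tfrac{1}{N}\alpha_N(\theta,\phi)$ runs in the direction needed, so the upper bound above holds for the increment $\int_\theta^\phi s$ regardless of any singular component.
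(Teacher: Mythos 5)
Your proposal is correct and follows essentially the same route as the paper: the Erd\'elyi-based limsup bounds you invoke are exactly the computation the paper carries out in the paragraphs immediately preceding the theorem, and your construction of $s$ from the monotone cumulative function $\Phi(\phi)=\limsup_{N\to\infty}\frac{1}{N}\,\alpha_N(-\pi,\phi)$ mirrors the paper's construction of $q$ for Fisher zeros (monotone, hence measurable and differentiable almost everywhere, hence admitting a density). Your closing caveat about absolute continuity is in fact a point where you are more careful than the paper, which asserts equality between the limsup and $\int_{-\pi}^{\phi} s(\psi)\,d\psi$ without ruling out a singular part of the associated Lebesgue--Stieltjes measure; since, as you note, the inequality $\int_{\theta}^{\phi} s(\psi)\,d\psi \leq \Phi(\phi)-\Phi(\theta) \leq \limsup_{N\to\infty}\frac{1}{N}\,\alpha_N(\theta,\phi)$ runs in the direction needed, the stated upper bounds hold regardless.
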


\section{Numerical determination of the distribution of zeros}
\label{Numerical}   

In this section the location and distribution of Fisher zeros are examined, using
in particular the measures $\mu_n$ and $\nu_n$ in equations \Ref{eqn10q}
and \Ref{eqn28z}.  Similar comments can be made about Lee-Yang zeros,
but since these are, from a numerical point of view, very stably distributed
along a circle in the $t$-plane, it appears that numerical error is not a concern
here.  For example, the zeros of $g_N(a,t)$ for $N=10$ and $a=1$ are shown in
figure \ref{figure-t2-1}.  These are located very close to the vertices of a 
$11$-gon on the unit circle in the $t$-plane (with no zero close to the
vertex on the positive real axis).  Similar results will be presented in sections
\ref{Lee-Yang2} and \ref{Lee-Yang3}.

\subsection{Numerical location of Fisher zeros}

Numerical estimates of $c_n^+(v)$ in $\mathL^2_+$ and $\mathL^3_+$ were
obtained by using the GAS algorithm \cite{JvRR09} to approximately enumerate 
self-avoiding walks \cite{JvR10} in half lattices.  In each case the algorithm was
used to sample along $10^3$ sequences each of length $10^9$ iterations
(see reference \cite{JvR16} for details), estimating microcanonical data for
$0\leq v \leq n$ where $0\leq n \leq 500$ over $10^{12}$ iterations.  
Since the algorithm estimates approximate values $\overline{c}_n^+(v)$
for $c_n^+(v)$, the numerically determined partition functions 
and partial generating functions are only approximations 
to $A_n^+(a)$ and $G_N(a,t)$ in $\mathL^2_+$ and $\mathL^3_+$.

An error in the estimate of $c_n^+(v)$ will change the locations of partition 
and partial generating function zeros.  The degree to which this is the case
may be examined by assuming a small relative error of $\epsilon$ in one of the 
estimates (so that $\overline{c}_n^+(v) = (1+\epsilon) c_n^+(v)$), and then to
examine the impact this has on the locations of zeros.

\begin{figure}[t]
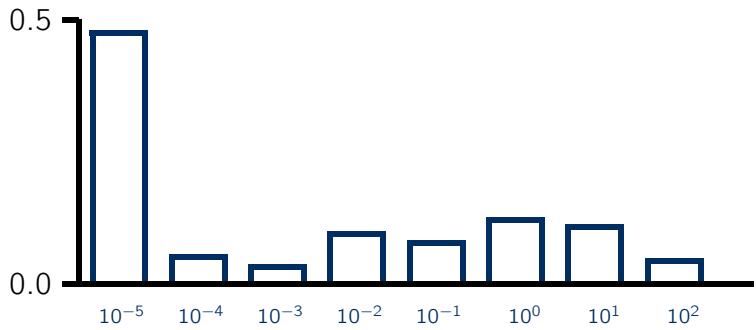

\input Figures/figure03.tex
\caption{A histogram showing the distribution of $|c_n^+(k)\thin (a_j^*)^k|/|A_n^{+\prime}(a_j)|$
for $n=400$ in the half square lattice.  Bars with label $10^{-n}$ represent the 
fraction of ratios of size larger than $10^{-n-1}$ and smaller or equal to
$10^{-n}$.  Almost all the ratios are smaller than $10^{-5}$, and less than $10\%$ of
the ratios are bigger than $1$.  The sizes of those ratios exceeding $1$ are still well smaller
than $10^2$.}
\label{histogram} 
\end{figure}

Consider the partition function $A_n^+(a)$ and suppose that the coefficient
of $a^k$ has a small relative error (that is, $\overline{c}_n^+(k) = (1+\epsilon) c_n^+(k)$).
Define the perturbed partition function $P_n(a)$ by
\begin{equation}
P_n(a) = A_n^+(a) + \epsilon\, c_n^+(k)\, a^k .
\end{equation}
Suppose that the $j$-th zero of $A_n^+(a)$ is $a_j$, and that the zeros of $P_n(a)$
are denoted by $\langle a_j^* \rangle$.  Then there is a labeling of the zeros
so that $a_j^* = a_j$ for each $j$ if $\eps=0$.  If $|\eps|>0$ is very small, then assume
that for the $j$-th zero, $a_j^* = a_j + \delta_j$ for some value of $\delta_j$.   The size
of $\delta_j$ will be estimated as follows:  Consider
\begin{eqnarray*}
P_n(a_j^*) &= A_n^+ (a_j^*) + \eps\, c_n^+(k) (a_j^*)^k \\
&= A_n^+(a_j) + \delta_j\, A_n^{+\prime}(a_j) + O(\delta_j^2) +  \eps\, c_n^+(k) (a_j^*)^k,
\end{eqnarray*}
where it is assumed that $A_n^{+\prime}(a_j) \not=0$.  Since $P_n(a_j^*) = A_n^+(a_j)=0$,
solve for $\delta_j$ to leading order in $\eps$:
\begin{equation}
\delta_j = \frac{\eps\, c_n^+(k)\thin (a_j^*)^k}{A_n^{+\prime}(a_j)} + O(\eps^2).
\label{eqn70AA}
\end{equation}
This shows that $\delta_j \to 0$ as $\eps\to 0$.   Since the relative error $\eps$ approaches
zero as the GAS algorithm continues to run; it follows that $\delta_j$ decreases in 
size with the length of the simulation, and the estimates of locations of zeros improve.

\begin{figure}[t]
\centering
\begin{minipage}{0.30\textwidth}
\centering
\fbox{\includegraphics[width=0.75\linewidth, height=0.15\textheight]{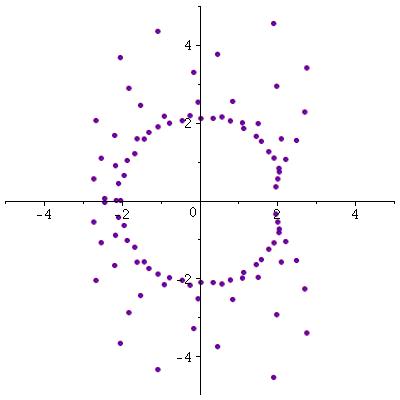}}
\end{minipage}
\begin{minipage}{0.30\textwidth}
\centering
\fbox{\includegraphics[width=0.75\linewidth, height=0.15\textheight]{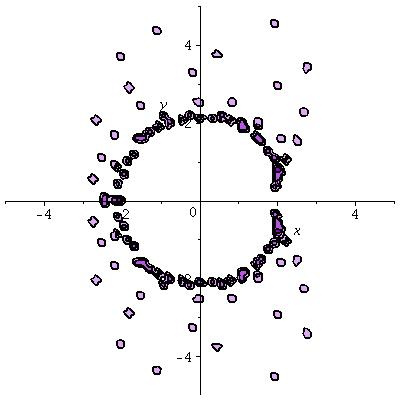}}
\end{minipage}
\begin{minipage}{0.30\textwidth}
\centering
\fbox{\includegraphics[width=0.75\linewidth, height=0.15\textheight]{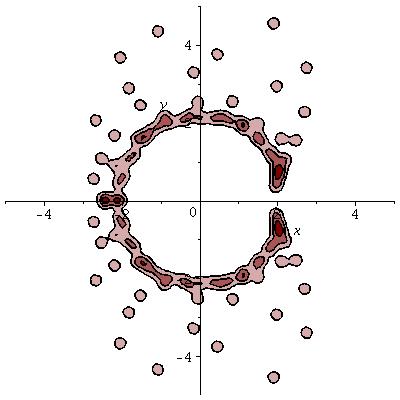}}
\end{minipage}
\caption{Left: Fisher zeros for $n=100$ in $d=2$ dimensions.  The distribution of zeros
can be estimated from this data using the measures $\mu_n$ and $\nu_n$.  Middle:  The
distribution of Fisher zeros using the measure $\mu_n$ and the data on the Left.
Right: The distribution using the measure $\nu_n$ and the data on the Left.}
\label{figure3}
\end{figure}

It is not possible to estimate $\eps$ in the above, except in some cases were $c_n^+(k)$ is
known.  For example, $c_n^+(n) = 2$ in the half square lattice for all $n>0$.  The GAS algorithm 
in our simulations converged to $c_{400}^+(400) \approx 1.9772$, giving
$|\eps| \lesssim 0.012$.  Similarly, for $n=500$ the result is 
$c_{500}^+(500) \approx 1.9751$, so that $|\eps| \lesssim  0.013$.  Incidentally, these
estimates show that the GAS algorithm samples well into the tails of the distribution
$c_n^+(v)$ in $(n,v)$, and its estimates of the microcanonical partition function in the
tails of the distribution over state space is very good -- this is an example of
rare event sampling (and is a quality this algorithm shares with the PERM and flatPERM
algorithms \cite{G97,PK04}).

In three dimensions $c_n^+(n)$ is the number of square lattice walks.  If $n=50$
then the estimate obtained by the GAS algorithm is
$c_{50}^+(50) \approx 5.29279\times 10^{21}$ while exact enumeration \cite{J04}
gives $c_{50}^+(50) = 5.30324\ldots \times 10^{21}$.  This gives $|\eps| \lesssim 0.0020$.
A similar comparison for $n=70$ gives $\eps \lesssim 0.0050$.

For other values of $n$ and $v$ the relative error $\epsilon$ can be estimated by
dividing the standard error of the estimates by the estimate.  That is, if the estimate
$\overline{c}^+_n(v)$ has standard error $e^+_n(v)$, then $\eps$ is estimated by the ratio
$e^+_n(v)/\overline{c}^+_n(v)$.  This gives $\eps\lesssim 0.01$ consistently in 
$\mathL^+_2$ and $\mathL^+_3$, for $n\leq 450$, and for $n>450$ it grows to a couple
of percent for a few values of $v$ (but otherwise remains well below 1\%).

These results suggest that, generally, one may assume that $\eps \lesssim 0.01$ (or that the
microcanonical data were estimated to a relative error of about 1\% for most values of 
$0 \leq v \leq n$ and $n\leq 500$ and for most values of $(n,v)$ in fact far better than this).

The size of $\delta_j$ is also a function of the size of the ratios 
$|c_n^+(k)\thin (a_j^*)^k|/|A_n^{+\prime}(a_j)|$.  Summing equation \Ref{eqn70AA} over
$k$ gives an estimate of the total contribution if $\eps$ is a fixed error on all the data.  
Since $\delta_j$ is a complex number, this gives the total estimated error
\begin{equation}
\Delta_j = \eps\; \frac{A_n^+(a_j^*)}{A_n^{+\prime}(a_j)} + O(n\eps^2) .
\end{equation}
Estimates of his quantity (with $a_j$ replaced by $a_j^*$ in our data) are always very small, 
and never exceeded $10^{-5}$.  However, this is very likely 
an underestimate of the error in the location of the $a_j^*$.

An alternative approach, which should give a large overestimate of the error, is to estimate
$|c_n^+(k)\thin (a_j^*)^k|/|A_n^{+\prime}(a_j)|$ and sum these contributions
over $k$.  The size distribution of $|c_n^+(k)\thin (a_j^*)^k|/|A_n^{+\prime}(a_j)|$
is displayed for $n=400$ in figure \ref{histogram}.  
Notice the logarithmic scale on the horizontal axis of the histogram.  The height
of each bar is proportional to the fraction of ratios
which has size less than the power of 10 shown
at the bottom of the bar.  Summing
the contributions give, in all cases, a result between 10 and 100 (and consistently
well less than 100).  Since this is
the result if all the magnitudes of the errors are added up without cancellations, 
multiplying this by $\eps$ likely gives an overestimate of the error in the location of the $a_j^*$.

If one accepts that the contribution of the ratio 
$\sfrac{c_n^+(k)\thin (a_j^*)^k}{A_n^{+\prime}(a_j)}$ to lie between $10^{-5}$ 
and $10^2$, and $\eps\lesssim 10^{-2}$, then $\delta_j$ will be very small for
most values of $j$, and large for a few values of $j$.  These results suggest 
that a few zeros will move quickly with uncertainties in the data, and 
equation \Ref{eqn70AA} indicates that this will be particularly true for cases where
$|a_j^*|$ is large.  Most zeros, however, should be quite stable. 

Similar observations apply to the $a$-plane zeros of the partial generating function
$G_N(a,t)$.

\begin{figure}[t]
\centering
\begin{minipage}{0.3\textwidth}
\centering
\includegraphics[width=1.125\linewidth, height=0.225\textheight]{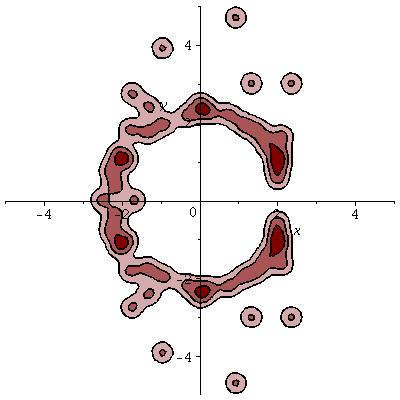}
\end{minipage}
\begin{minipage}{0.30\textwidth}
\centering
\includegraphics[width=1.125\linewidth, height=0.225\textheight]{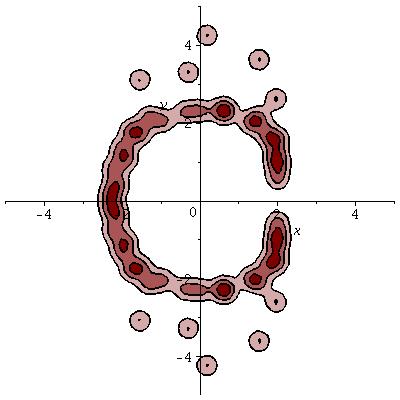}
\end{minipage}

\begin{minipage}{0.30\textwidth}
\centering
\includegraphics[width=1.125\linewidth, height=0.225\textheight]{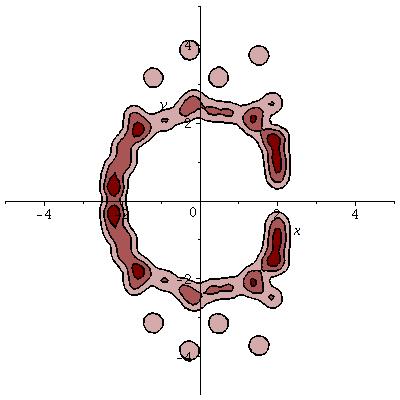}
\end{minipage}
\begin{minipage}{0.30\textwidth}
\centering
\includegraphics[width=1.125\linewidth, height=0.225\textheight]{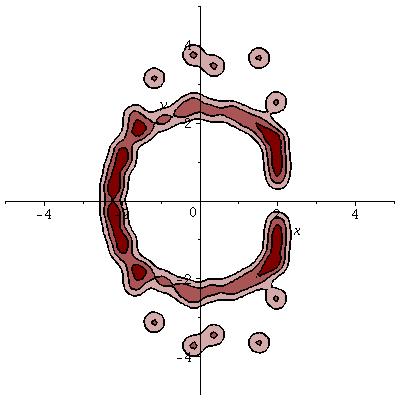}
\end{minipage}
\caption{The evolution of the distribution $p_n$ for $a$-plane zeros for $n=50$ and
sampling along $M$ GAS-sequences of length $10^9$.   Top Left:  The distribution for
$M=10$ ($10$ sequences of length $10^9$ each).  Top Right: the distribution
for $M=100$.  Bottom Left: $M=250$, and Bottom Right: $M=400$.  These
contour plots, computed by using the locations of the zeros and the measure
$\nu_n$, show remarkable consistency with increasing $N$ as the zeros accumulate
in a pattern on a circle of radius equal to the critical value of $a$ (the critical adsorption
point $a_c^+$) in the half square lattice.  Notice the formation of the edge singularity
on the positive real axis.}
\label{figure4}
\end{figure}

\begin{figure}[t]
\centering
\begin{minipage}{0.30\textwidth}
\centering
\includegraphics[width=1.125\linewidth, height=0.225\textheight]{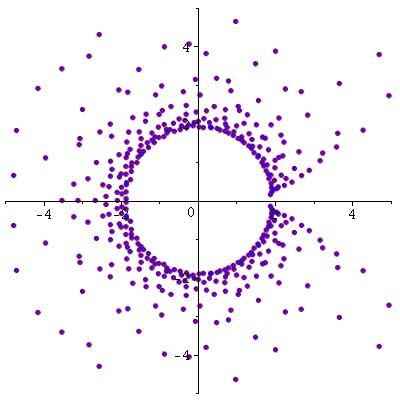}
\end{minipage}
\begin{minipage}{0.30\textwidth}
\centering
\includegraphics[width=1.125\linewidth, height=0.225\textheight]{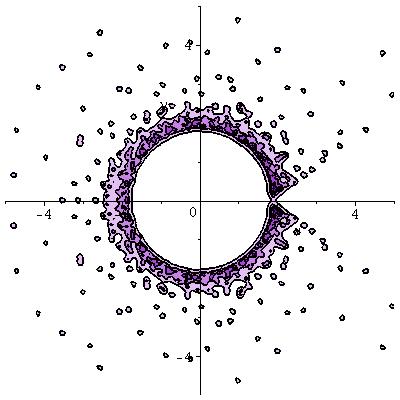}
\end{minipage}
\begin{minipage}{0.31\textwidth}
\centering
\includegraphics[width=1.125\linewidth, height=0.225\textheight]{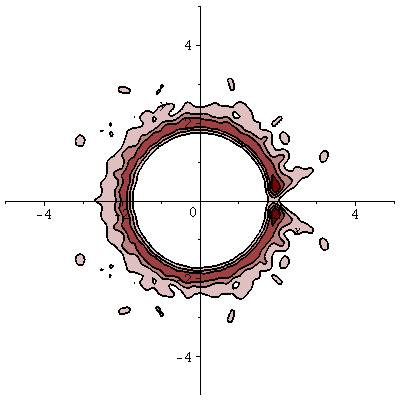}
\end{minipage}

\begin{minipage}{0.30\textwidth}
\centering
\includegraphics[width=1.125\linewidth, height=0.225\textheight]{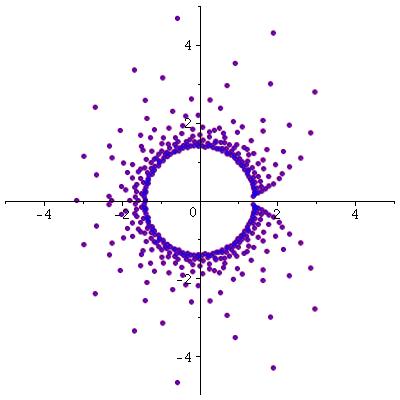}
\end{minipage}
\begin{minipage}{0.30\textwidth}
\centering
\includegraphics[width=1.125\linewidth, height=0.225\textheight]{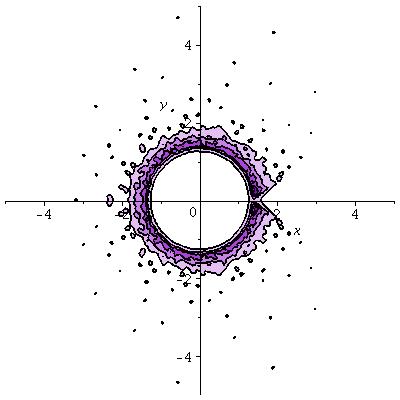}
\end{minipage}
\begin{minipage}{0.30\textwidth}
\centering
\includegraphics[width=1.125\linewidth, height=0.225\textheight]{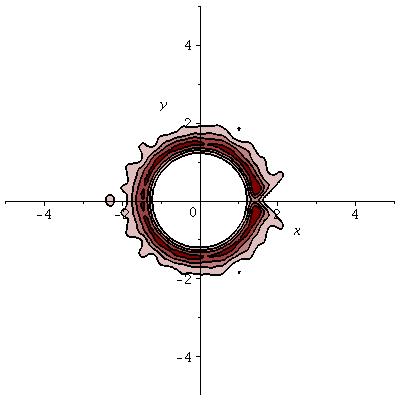}
\end{minipage}
\caption{Fisher zeros of square lattice adsorbing walks of length $n=400$ (top row), and
of cubic lattice adsorbing walks of length $n=400$ (bottom row).  Top left is the locations of the
calculated zeros, top middle a contour plot of the distribution of zeros computed from using the 
measure $\mu_n$ in equation \Ref{eqn10q} with $\tau_n=n$, and top right a contour plot of the distribution
of zeros computed by using the measure $\nu_n$
with $\tau_n = \frac{1}{2}\sqrt{n}$ in equation \Ref{eqn28z}.
Along the bottom row the same plots are shown, but now for data for cubic lattice walks
of length $n=400$.}
\label{figure4B}
\end{figure}

\subsection{The numerical distribution of Fisher zeros}

The distribution of Fisher zeros were measured in section \ref{Dist-Fisher} using
the probability measure $\lambda_n$.  Two examples
of measures were given in equations \Ref{eqn10q} and \Ref{eqn28z}, namely the
measures $\mu_n$ and $\nu_n$.  These measures give distributions
functions $p_n$ on the $a$-plane zeros.  In this section this is considered from a numerical
point of view, using the approximate enumeration data generated by the GAS algorithm.

In figure \ref{figure3} the locations of approximate $a$-plane zeros
of $A_n^+(a)$ for $n=100$ and $d=2$ is shown.  On the left are the locations of the zeros,
and in the middle and on the right are contour plots of the distribution $p_n$ of zeros using the
measures $\mu_n$ (with $\tau_n = n$;
see equation \Ref{eqn10q}) and $\nu_n$ 
(with $\tau_n=\sfrac{1}{2}\sqrt{n}$;
see equation \Ref{eqn28z}), respectively.

{\renewcommand{\baselinestretch}{1.2}
\begin{table}[t!]
\begin{center}
{
\caption{$\|p_n\minus p_{500})\|_1$ with respect to $\mu_n$ and $\nu_n$ in $\mathL_+^2$}
\label{Table1}  
}
{
\begin{tabular}{l|c|c}
\hline\hline
$n$ & $\|p_n\minus p_{500}\|_1$ wrt $\mu_n$ ($\tau_n = \frac{n}{9}$)
      &  $\|p_n\minus p_{500}\|_1$ wrt $\nu_n$ ($\tau_n = \frac{1}{2}\sqrt{n}$) \\
\hline
$50$ & $1.113$ & $1.028$ \\
$100$ & $0.948$ & $0.775$  \\
$200$ & $0.869$ & $0.581$ \\
$300$ & $0.813$ & $0.595$ \\
$350$ & $0.756$ & $0.427$ \\
$400$ & $0.735$ & $0.399$ \\
$420$ & $0.678$ & $0.363$ \\
$440$ & $0.723$ & $0.390$ \\
$460$ & $0.700$ & $0.373$ \\
$470$ & $0.689$ & $0.379$ \\
$480$ & $0.659$ & $0.345$ \\
$490$ & $0.648$ & $0.339$ \\
$495$ & $0.537$ & $0.288$ \\
\hline\hline
\end{tabular}
}
\end{center}
\end{table}
}

The stability of the distribution $p_n$ (or its dependence on uncertainties in the numerical
data), may be examined by tracking the distribution of the estimated zeros as the length 
of the similation increases.  The result is shown in figure \ref{figure4} for $p_n$ in the 
half square lattice.  The first contour plot (top, left) is the distribution $p_n$, determined by using 
the measure $\nu_n$ (with $\tau_n =\sfrac{1}{2}\sqrt{n}$;
see equation \Ref{eqn28z}).  In this case the data were obtained by
using the GAS algorithm to sample along 10 sequences, each of length $10^9$ iterations.

Increasing the length of the simulation to 100 sequences of length $10^9$
gives the second contour plot (top, right).  While the contours have changed, the distribution
is still recognizable.  Notice that those isolated zeros far from the origin moved from their
locations.  The distributions can be compared by computing the $L_1(\alpha)$-norm  
(which has a maximum of $2$) of the absolute difference between the distributions 
$p_n$ for simulations of $N=10$ and $N=100$ sequences (and for walks of
length $n=50$).   With respect to $\mu_n$ with $\tau_n=\sfrac{1}{9}n$ this gives
$0.634$ and with respect to $\nu_n$ with $\tau_n=\sfrac{1}{2}\sqrt{n}$, 
the $L_1(\alpha)$-norm is $0.596$.

Increasing the length of the simulation to 250 sequences of length $10^9$ iterations
gives the distribution on the bottom left.  The contours changed, but the overall shape of the
distribution remained similar.  Computing the $L_1(\alpha)$-norm of the
absolute difference between the distributions at $N=100$
and $N=250$ gives $0.398$ with respect to $\mu_n$
($\tau_n=\sfrac{1}{9}n$) and $0.354$ with respect to $\nu_n$
($\tau_n=\sfrac{1}{2}\sqrt{n}$), respectively.  

For $N=400$ the distribution is given by the contour plot on the bottom right.  
The $L_1(\alpha)$-norm of this distribution and the distribution at $N=250$ is $0.195$ 
with respect to $\mu_n$ and $0.181$ with respect to $\nu_n$, respectively.   Letting
the simulation run to $N=1000$ sequences each of length $10^9$ gives the final distribution
illustrated in figure \ref{figure4B} (in this case illustrated for $n=400$).   The formation 
of an edge-singularity on the positive real axis is seen as the leading zeros approach the
positive real axis.   In this figure the contour plots in the middle and on the right were
determined using the measures $\mu_n$ with $\tau_n=n$, and $\nu_n$
with $\tau_n=\sfrac{1}{2}\sqrt{n}$ in equations \Ref{eqn10q} and \Ref{eqn28z}
respectively. 

The convergence of $p_n \to p$ as $n\to\infty$ can be tracked by computing
the $L_1(\alpha)$-norm of $(p_n\minus p)$ as a function of $n$ (while taking $p\approx
p_{500}$ as the best estimate of the limiting distribution).  This is tabulated in
table \ref{Table1}.  Notice that increasing $n$ decreases the $L_1(\alpha)$-norm in
each case, showing that the distribution $p_n$ approaches $p_{500}$
systematically with increasing $n$.  This is strong supporting evidence that the
distribution of zeros produced by the algorithm behave consistently with 
increasing $n$. Similar results are seen in $\mathL_+^3$.

The relation between the distributions obtained from the $\mu_n$ and $\nu_n$
measures can also be examined by computing the $L_1(\alpha)$-norm of the 
difference between the distributions.  If $p_n(\mu_n)$ is the distribution obtained
from the $\mu_n$-measure, and $p_n(\nu_n)$ is the distribution obtained from the
$\nu_n$-measure, then $\| p_n(\mu_n) \minus p_n(\nu_n)\|_1 = 0.363$ if $n=50$
and $\tau_n = \sfrac{n}{9}$ for $\mu_n$ and $\tau_n = \sfrac{1}{2}\sqrt{n}$
for $\nu_n$.  For $n=500$, the distance is slightly larger at
$\| p_n(\mu_n) \minus p_n(\nu_n)\|_1 = 0.517$.

\section{Numerical Results in the Square Lattice}
\label{Numerical2}   

\subsection{Lee-Yang zeros of $G_N(a,t)$ in the square lattice}
\label{Lee-Yang2}   

The $t$-plane zeros of $G_N(a,t)$ are in general functions of $a$ (and of $N$), but
will be converging on the circle $|t|=\lambda_a^{-1}$ in the $t$-plane 
as $N\to\infty$ by theorem \ref{thm3.4} (where $\lambda_a$ is defined by 
equation \Ref{eqn13}).   Moreover, by theorem \ref{thm3.5} the complex arguments of 
the zeros should approach a uniform distribution.  That is, the 
$t$-plane zeros of $G_N(a,t)$ should be distributed, in the limit, uniformly on 
a circle of radius $\lambda_a^{-1}$. 

\begin{figure}[t]
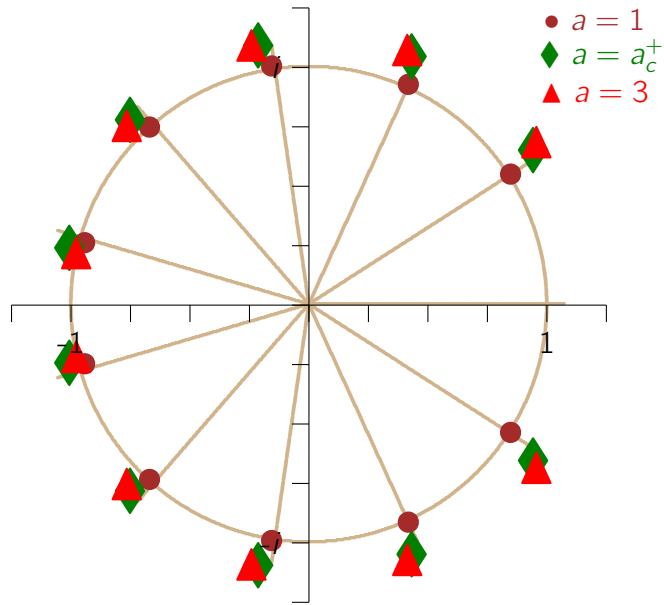

\input Figures/figure07.tex
\caption{Approximate locations of the zeros of $g_N(a,t)$ for $N=10$
and for $a=1$ ($\bullet$), $a=1.779\approx a_c^+$ ($\blacklozenge$),
and $a=3$ ($\blacktriangle$).  The zeros are close to the unit circle, and close
to ten vertices of a regular 11-gon with centre at the origin, and one
corner at the point $(1,0)$, as illustrated.  The rays from the origin
are the spokes of a regular 11-gon.}
\label{figure-t2-3}  
\end{figure}

To investigate the above numerically, approximate  $\lambda_a$ by
\begin{equation}
\lambda_a(N) = (A_N^+(a))^{1/N} . 
\end{equation} 
In particular, $\lim_{N\to\infty} \lambda_a(N) = \lambda_a$.  The location 
of zeros of $G_N(a,t)$ in the $t$-plane can be scaled with $\lambda_a(N)$
in order to approximate the zeros of $g_N(a,t)$ (see equation \Ref{eqn15}).  
The approximation will become increasingly accurate with increasing $N$. 

In figure \ref{figure-t2-1} the approximate zeros of $g_{10}(1,t)$
are plotted in the $t$-plane.  This shows that, even for this small value of $N$,
the estimated zeros are distributed close to the unit circle and spaced 
in a way consistent with approaching a uniform distribution in the limit 
as $N\to\infty$. 

The location of the zeros in figure \ref{figure-t2-1} can be examined
by fitting a least squares ellipse to the data points.  This has approximate centre 
at the point $(0.011,0)$, close to the origin, approximate half short axis
of length $1.00$ and approximate half long axis of length $1.01$.  The rays
from the origin in figure \ref{figure-t2-1} are the spokes of a regular 11-gon 
and they intersect the unit circle in the vertices of a regular $11$-gon; each intersection 
is close to a zero of $g_{10}(1,t)$, except at the point $(1,0)$ on the real axis
where there is no zero nearby. 

The locations of zeros of $g_N(a,t)$ are, in general, functions of $a$
(and of $N$).  This is examined in figure \ref{figure-t2-3} for $N=10$
and three different values of $a$, namely $a=1$ (the same data as 
in figure \ref{figure-t2-1}), $a=1.779\approx a_c^+$, and $a=3$
(at this value the model is in the adsorbed phase).  
There are small movements in the
zeros with increasing $a$, and this can be collectively examined by
determining the least squares ellipse for each value of $a$.  For $a=1$
the least squares ellipse was given above (approximate centre at
$(0.011,0)$, approximate half short axis of length $1.00$ and approximate
half long axis of length $1.01$).  The ellipse for $a=1.779$
has values $(0.049,0)$, $1.125$ and $1.139$ for its approximate
centre, half short axis, and half long axis.  For $a=3$ the values are
$(0.077,0)$, $1.137$ and $1.168$ for its approximate
centre, half short axis, and half long axis.

\begin{figure}[t] 
\centering
\begin{minipage}[t]{0.40\linewidth}
       \flushleft
        \input Figures/figure08.tex
        \caption{Approximate locations of the zeros of $g_N(a,t)$ for $N=100$
and $a=1$.  The zeros are approximately on the unit circle, and close
to 100 vertices of a regular 101-gon with centre at the origin, and with one
corner at the point $(1,0)$, as shown.  
}
\label{figure-t2-2}  
    \end{minipage}%
    \begin{minipage}[t]{0.40\linewidth}
        \flushright
        \input Figures/figure09.tex
        \caption{Approximate locations of the zeros of $g_N(a,t)$ for $N=100$
and for $a=1$, $a=1.779\approx a_c^+$, and $a=3$.  All these zeros are 
located very close to the unit circle.  
}
\label{figure-t2-4}  
    \end{minipage}%
\end{figure}

The situation is similar for larger values of $N$.  For example, the cases for
$N=100$ are illustrated in figures \ref{figure-t2-2} and \ref{figure-t2-4}.
In figure \ref{figure-t2-2} the zeros of $g_N(a,t)$ with $N=100$ and
$a=1$ are shown.  These zeros are very close to $100$ vertices of a regular
101-gon on the unit circle with one vertex at $(1,0)$ (and this is the one
vertex not close to a zero of $g_N(a,t)$).
The least squares ellipse through the zeros has approximate centre 
$(0.0014,0)$, approximate half short axis $1.001$ and approximate half 
long axis $1.002$. 

In figure \ref{figure-t2-4} the zeros are displayed for larger values of $a$.
There are very little motions of the zeros with increasing $a$.  For example, for
$a=1.779 \approx a_c^+$ the least squares ellipse through the zeros 
has approximate centre $(0.0068,0)$, half short and half long axes 
of approximate lengths $1.012$ and $1.016$.  Similarly,
for $a=3$, the least squares ellipse has approximate centre at $(0.0101,0)$,
and half short and half long axes of approximate lengths $1.013$ and $1.019$.

\begin{figure}[t] 
\centering
\begin{minipage}[t]{0.40\linewidth}
       \flushleft
        \input Figures/figure10.tex
        \caption{Approximate locations of the Fisher zeros of $A_n^+(a)$ 
for $n\in\{25,50,75,100\}$.  The zeros scatter in the $a$-plane,
outlining a disk centred at the origin which is free from zeros.  There are
no zeros on the positive real axis, but an edge singularity is clearly being
formed at  $a_c^+$. A positive fraction of the zeros appear to accumulate 
on a circle of radius $a_c^+$.}
\label{figure-1}  
    \end{minipage}%
    \begin{minipage}[t]{0.40\linewidth}
        \flushright
        \input Figures/figure11.tex
        \caption{Approximate locations of the Fisher zeros of $A_n^+(a)$ 
for $n\in\{25,50,\ldots,175,200\}$.  These data include all the data in
figure \ref{figure-1}.  Notice that the zeros appear to pinch the
positive real axis to create an edge singularity
at the critical point $a_c^+$.  A positive fraction of the zeros appear to accumulate 
on a circle of radius $a_c^+$.}
\label{figure-3}  
    \end{minipage}%
\end{figure}

In general, the zeros of $G_N(a,t)$
are accumulating on a circle of radius $\lambda_a$ in the $t$ plane, while the zeros of
$g_N(a,t)$ are accumulating on the unit circle, also in the $t$-plane. 
The positions of the zeros are also close to $N$ vertices of a regular 
$(N\plus1)$-gon with vertices on the circle, and one vertex on the 
positive real axis which does not have a zero nearby.  
For even $N$ the zeros are found in complex conjugate pairs, and 
for odd $N$ there are $N\minus 1$ zeros in complex conjugate pairs, 
and one additional zero located on the negative real axis.  Increasing
$N$ (as a time variable) creates the illusion that new zeros are generated
on the negative real axis, and the existing zeros in the upper half plane move
clockwise along the circle (and in the lower half plane move anti-clockwise along the 
circle), while maintaining a distribution close to the vertices of a regular
$N$-gon on the unit circle.  This increases the density of zeros on the circle in a way
which becomes uniform in the limit $N\to\infty$. 
By theorem \ref{thm3.5} the zeros approach a uniform distribution on the circle
in the limit as $N\to\infty$.  This circle should become a natural boundary of the
generating function $G(a,t)$ (see equation \Ref{eqn9}) in the $t$-plane.

\begin{figure}[t]
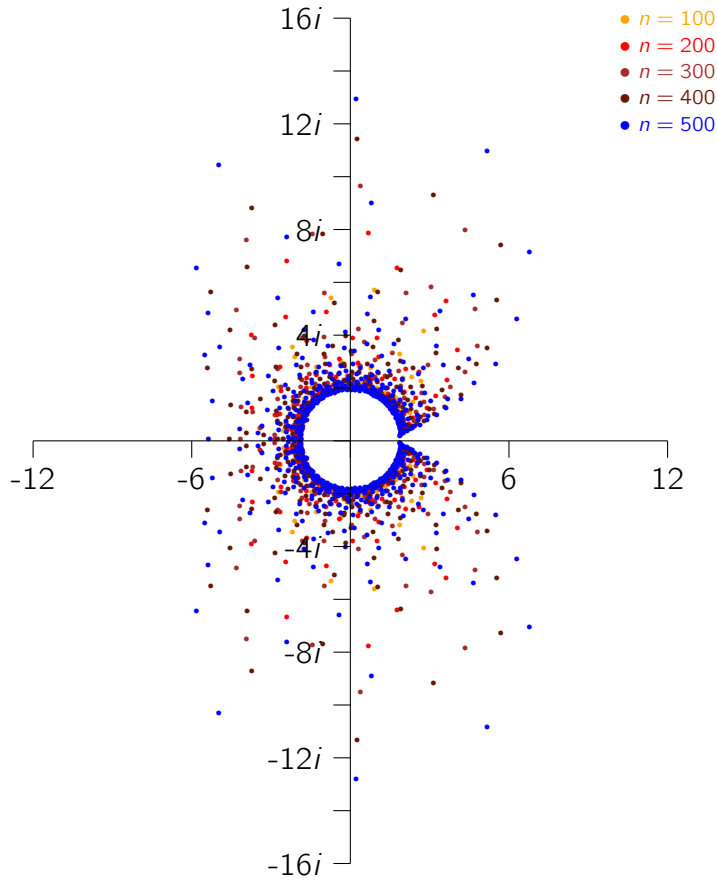

\input Figures/figure12.tex
\caption{Partition functions zeros of  the partition function $A_n^+(a)$  of two dimensional
adsorbing walks in the $a$-plane.  The zeros appear to be distributed outside a circle of radius
$a_c^+$ (the critical circle) and inside a vertical strip of width $w$ which
seems to increase with $n$.}
\label{figure-2}  
\end{figure}

\subsection{Fisher or partition function zeros}
\label{Fisher2A}   

The partition function $A_n^+(a)$ (see equation \Ref{eqn4}) was approximated
using the approximate enumeration data obtained by the GAS algorithm 
for adsorbing walks in $\mathL^2_+$ \cite{JvR16}.  The locations of zeros of $A_n^+(a)$ were
obtained to high accuracy by using Maple \cite{Maple}.  In figure \ref{figure-1} the
zeros are shown for small values $n$, namely $n\in\{25,50,75,100\}$,
and in figure \ref{figure-3} for values of $n$ in steps of $25$ to $n=200$.
These data suggest that the zeros form a characteristic pattern in the $a$-plane, 
such that a fraction is accumulating on a circle (called the \textit{critical circle}),
while the remaining zeros distribute
in the $a$-plane outside the the critical circle.  The results for larger values
of $n$, namely $n\in\{100,200,300,400,500\}$, are plotted in figure \ref{figure-2}.

By theorem \ref{thm2.6} there exists a $\rho\in (0,1)$ such that a positive fraction 
of zeros will be in an annulus centred at the origin and of inner radius 
$r=|a| = 1\minus \rho$ and outer radius $r=|a|= \sfrac{1}{1\minus\rho}$. 

The zeros plotted in figure \ref{figure-2}  appear to accumulate on a circle in the $a$-plane
with increasing $n$. However, many zeros are also found away from this circle,
and they do in fact appear to drift away from the origin with increasing $n$.  These
observations are not inconsistent with theorem \ref{thm2.6}.  In fact, it appears
that many zeros are restricted to an annulus containing the circle
and their angular distribution seems quite uniform (consistent with 
theorem \ref{cor222}), except in the
vicinity of the positive real axis, where some zeros are converging (with
increasing $n$) to the critical adsorption point $a_c^+$.  

The finite size \textit{excess free energy} $F^e_N(a)$ can be computed 
using equation \Ref{eqn19f}:
\begin{equation}
F_n^e(a) = F_n(a) - \log c_n^{(d-1)} = \sum_{k=1}^n \log (a\minus a_k) .
\label{eqn51}  
\end{equation}
The finite size \textit{energy} $E_n(a)$ per unit length and \textit{specific heat} 
$\C{C}_n(a)$ are the first and second derivatives of $F_n^e(a)$ to $\log a$, 
divided by length $n$:
\begin{equation}
\hspace{-2cm}
E_n(a) = \sfrac{a}{n}\sfrac{d}{da}\, F_n^e(a) 
= \Sfrac{a}{n} \sum_{k=1}^n \Sfrac{1}{a\minus a_k}, \q\hbox{and}\;\;
\C{C}_n(a) = a\sfrac{d}{da}\, E_n(a) 
= -\Sfrac{a}{n} \sum_{k=1}^n\Sfrac{a_k}{(a\minus a_k)^2}. 
\label{eqn52}  
\end{equation}

\begin{figure}[t]
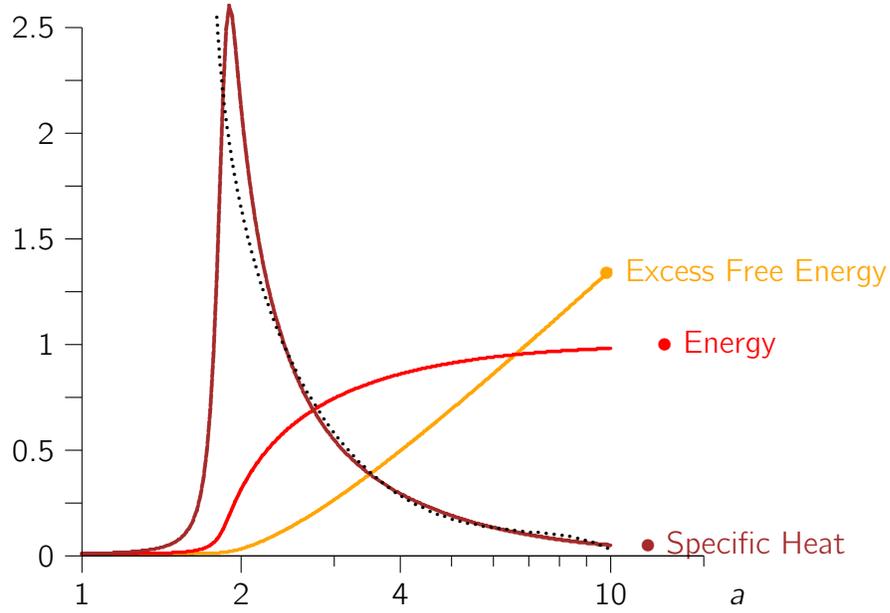

\input Figures/figure13.tex
\caption{The excess free energy $F^e_n(a)$ in the $a$-plane for two
dimensional adsorbing walks of length $n=500$ (see equation \Ref{eqn51}), 
plotted as a function 
of $a$ on a semilogarithmic graph.  The function is convex in $\log a$.
The energy per unit length, and the specific heat, are plotted on the
same graph, and were computed as shown in equation \Ref{eqn52}.  
The specific heat shows a sharp peak and is approximated by the dotted
curve (see equation \Ref{eqn54}).}
\label{figure-4}  
\end{figure}

The specific heat $\C{C}_n(a)$ has limiting behaviour $\lim_{n\to\infty} \C{C}_n(a)
\sim (a\minus a_c^+)^{-\alpha}$, where $2\minus \alpha = \sfrac{1}{\phi}$, and $\alpha$ 
is the specific heat exponent, while $\phi$ is the crossover exponent of adsorbing walks.  
It is thought that $\phi=\sfrac{1}{2}$ in all dimensions (numerical results in 
reference \cite{JvRR04} gives $\phi=0.5005\pm0.0036$), so $\alpha$ is close to or
equal to $0$.    That is, $\C{C}_n(a_c^+) \to \hbox{constant}$ as $n\to\infty$,
and since $\C{A}^+(a)$ is a constant when $a< a_c^+$ (see equation \Ref{eqn6}), one may consider
\begin{equation}
\lim_{n\to\infty} \C{C}_n(a)  \cases{
= 0, & \hbox{\norf if $a<a_c^+$}; \cr
\simeq A_0 - B_0 (a\minus a_c^+)^\phi, & \hbox{\norf if $a>a_c^+$},
}
\label{eqn51c}   
\end{equation}
provided that the activity $a$ is not much larger than $a_c^+$.  A least squares fit with
$\phi=\sfrac{1}{2}$ and the model
\begin{equation}
\C{C}_n(a) = A_0 - B_0 \sqrt{a\minus a_c^+} - C_0 (a\minus a_c^+) - D_0(a\minus a_c^+)^{1.5}
\label{eqn54}  
\end{equation}
to data for $n=500$ gives the dotted curve in figure \ref{figure-4} 
(where $A_0=3.096$, $B_0=-3.785$, $C_0=1.626$ and $D_0=-0.237$), 
approximating the shape of the specific heat curve well as $a$ increases from $a_c^+$.

\begin{figure}[t]
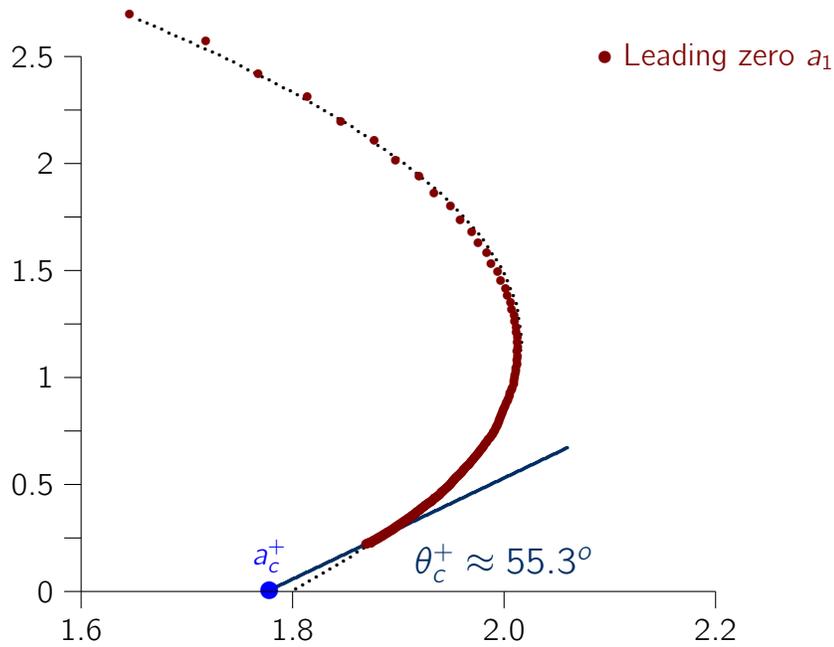

\input Figures/figure14.tex
\caption{The trajectory of the leading Fisher zero $a_1$ as a function of $n$ 
for adsorbing walks in the square lattice.  With increasing $n$ the leading zero approaches the
real axis, where it, and its complex conjugate, squeeze an edge singularity
in the limiting free energy of the model at the critical adsorption point
$a_c^+$.  The dotted line is a least squares parabolic
curve fitted to the trajectory of the leading zero.  Assuming a critical point at
$a_c^+=1.77564$ \cite{BGJ12} allows one to estimate the angle of incidence
$\theta^+_c \approx 55.3^o$.   The approximating parabola
intersects the real axis at the estimated critical point $1.8049$ at a critical angle
of size $58.2^o$.  If the least squares parabola is constrained to pass through the
estimated critical point at $1.77564$, then the critical angle has approximate size $53.7^o$.}
\label{figure-5}  
\end{figure}

Equation \Ref{eqn40A} shows that $E_n(a)$ approaches zero as $n\to\infty$ and
$a<a_c^+$.  Consequently, $\C{C}_n(a) \to 0$ as $n\to\infty$ and $a<a_c^+$.
Similarly, equation \Ref{eqn40A} shows that $E_n(a)$ approaches a constant
as $n\to\infty$ and $a\to\infty$.  That is, $\C{C}_n(a) \to \C{C}(a)\geq 0$ if 
$a>a_c^+$ and $n\to\infty$ (where $\C{C}(a)$ is
the limiting specific heat).  It is therefore reasonable to expect that 
$\C{C}_n(a_c^+) = \C{C}_m(a_c^+)$ at the adsorption critical point $a_c^+$ for
two different values of $n$ and $m$.  Solving for $a_c^+$ from the
equation $\C{C}_n(a) = \C{C}_m(a)$ for $m=2n$, and
$n\in\{200,210,220,230,240,250\}$,
gives estimates which are stable to three significant digits.  Taking the average gives
the estimate
\begin{equation}
a_c^+  = 1.765
, \q\hbox{in the square lattice},
\end{equation}
where the last digit is uncertain.  This compares well with the result from series enumeration
$a_c^+ = 1.77564$ \cite{BGJ12}.  The curves in figure \ref{figure-4} are determined from
the estimated locations of partition function zeros, from which the estimate for $a_c^+$ above
is determined.  This compares well with the result in reference \cite{JvR16}, namely
$a_c^+ = 1.779\pm 0.003$, which was determined using approximate enumeration data
generated with the GAS algorithm and analysed using conventional statistical methods.

The \textit{leading Fisher zero}, denoted
by $a_1$, is that zero of $A_n^+(a)$ with smallest principal argument in the
first quadrant in the $a$-plane.  The leading zero $a_1$ is a function of $n$, and it,
and its conjugate $\o{a}_1$, converge to
the edge singularity at $a_c^+$ on the positive real axis as $n\to\infty$.  Thus, as
$n$ increases, then $\Im a_1 \to 0$ and $\Re a_1 \to a_c^+$.  The rate at which $a_1$
moves to the real axis should be controlled by the crossover exponent $\phi$
of adsorbing walks. Since it is thought to be the case that $\phi=\sfrac{1}{2}$
in all dimensions $d\geq 2$ \cite{BWO99}, one may expect that
$\Im a_1 \sim \sfrac{c_i}{\sqrt{n}}$ and $\Re a_1 \sim a_c^+ + \sfrac{c_r}{\sqrt{n}}$.

The trajectory of $a_1$ in the first quadrant as $n$ increases
is shown in figure \ref{figure-5} for adsorbing walks in $\mathL^2_+$, and for
$n\in[10,500]$.  The estimate $a_c^+=1.77564$ is denoted by a bullet on the real axis
(see reference \cite{JvR16}) .  With increasing $n$ the trajectory of $a_1$ tends
to straigthen out and it approaches the real axis at a critical angle $\theta_c^+$ 
of about $55.3^o$. 

\begin{figure}[t]
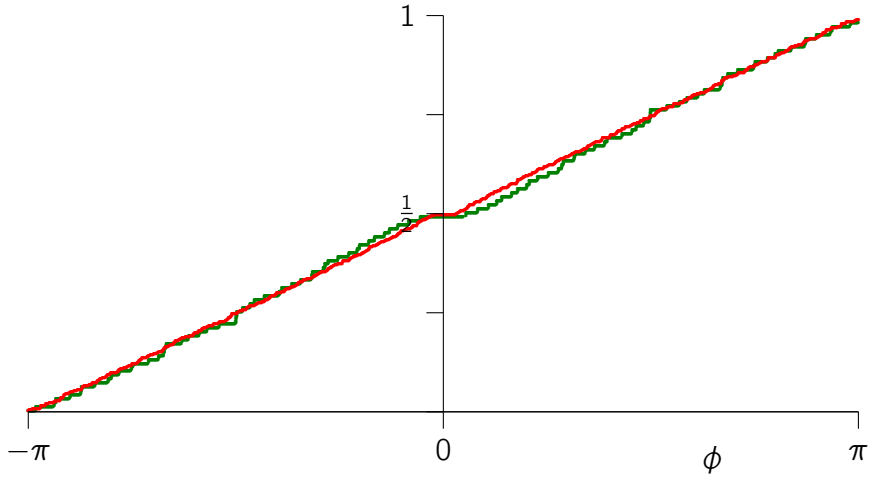

\input Figures/figure15.tex
\caption{The cumulative angular distribution function 
$\frac{1}{n} \alpha_n(-\pi,\phi)$ of Fisher zeros
(displayed for $n=400$ and $n=500$; see equation \Ref{eqn57}), 
plotted as a function of $\phi$, for adsorbing walks in the square lattice.  The curves
are smoother for larger values of $n$, and by equation \Ref{eqn34}, approach
a limiting curve bounded by $\lim_{n\to\infty}
\frac{1}{n} \alpha_n(-\pi,\phi) \leq \min\{1,\frac{1}{\pi}\phi + 1\}$.  The data
suggest that $\lim_{n\to\infty}
\frac{1}{n} \alpha_n(-\pi,\phi) = \frac{1}{2\pi}\phi + \frac{1}{2}$.}
\label{figure-6}  
\end{figure}

The leading zeros appear to accumulate on a curve which may be approximated by a 
quadratic curve in the $a$-plane.  If it is assumed that 
$\Re a_1 = a_c^+ + \alpha_0\, \Im a_1 + \alpha_1 (\Im a_1)^2$, then a
least squares fit to the data in figure \ref{figure-5} gives
\begin{equation}
\Re a_1 = 1.8049 + 0.6003\, \Im a_1 -0.4298 (\Im a_1)^2.
\end{equation}
This gives the estimate $a_c^+ \approx 1.8009$ and angle of incidence
$\theta_c^+ = 90^o - \arctan 0.6190 \approx 58.2^o$ (slightly larger than the $55.3^o$
calculated above).  It seems that the trajectory is fitted well by a quadratic curve,
but the values of the critical point $a_c^+$ and the critical angle $\theta_c^+$ 
obtained in this way are slightly larger than the estimates obtained earlier,
but not numerically inconsistent with those values.

Repeating the analysis above, but now with the critical value fixed at $a_c^+=1.77564$,
gives the least squares parabola
\begin{equation}
\Re a_1 = 1.77564 + 0.7347\, \Im a_1 -0.5220 (\Im a_1)^2.
\end{equation}
This gives the angle of incidence 
$\theta = 90^o - \arctan 0.7349 \approx 53.7^o$ (an angle slightly smaller than
calculated above).

\begin{figure}[t]
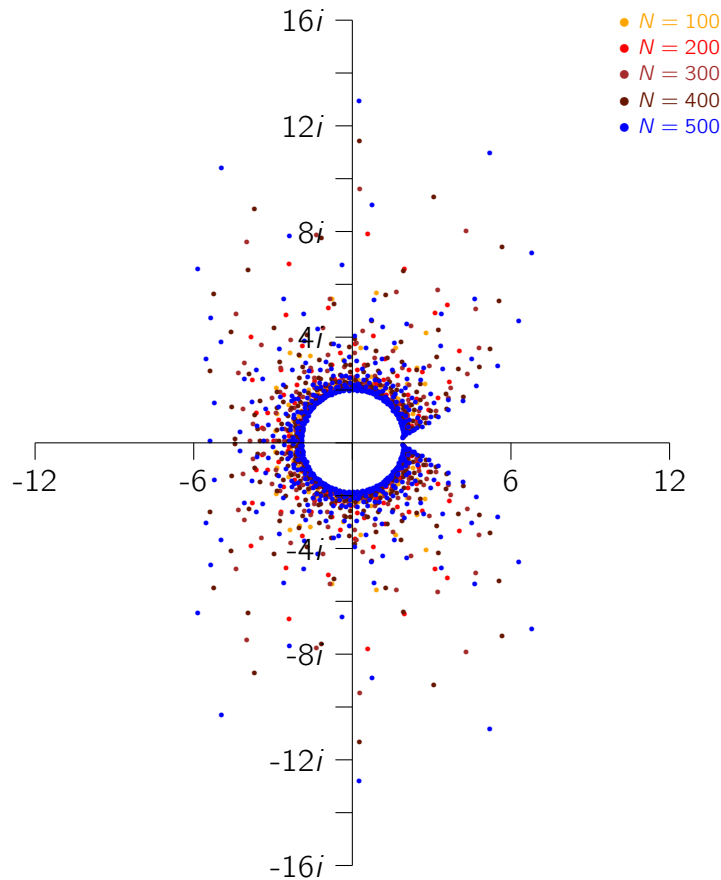

\input Figures/figure16.tex
\caption{Zeros of $G_N(a,t)$ (Yang-Lee zeros) in the $a$-plane for 
$t=\frac{1}{\mu_2}$.  The general distribution of the zeros is similar to
that seen for Fisher zeros (see figure \ref{figure-2}).  There are also 
little dependence on the value of $t$.  With increasing $N$ the
leading zero and its conjugate approaches the positive real axis
at $a_c^+$, creating an edge singularity in the limit as $N\to\infty$.}
\label{figure-7}  
\end{figure}

The \textit{cumulative angular distribution function}
$\hbox{\Large$\alpha$}_n(\phi)$ of the Fisher zeros of $A_n^+(a)$
around the origin in the $a$-plane is given by
\begin{equation}
\hbox{\Large$\alpha$}_n(\phi) = \sfrac{1}{n}\, \alpha_n(-\pi,\phi) 
\label{eqn57}  
\end{equation}
for adsorbing walks of length $n$ (and where $\alpha_n(\theta,\phi)$ is defined
in equation \Ref{eqn12}).  This is the fraction of Fisher zeros with
(complex) principal argument in the interval $(-\pi,\phi]$, and it increments
by $\sfrac{1}{n}$ each time a new zero enters the interval $(-\pi,\phi]$
with increasing $\phi$.   

The cumulative angular distibution function is plotted in figure \ref{figure-6}
for adsorbing walks in $\mathL^2_+$ for $n=400$ and $n=500$.  The function
increases steadily, except near $\phi=0$ where it has a constant part
which steadily decreases in length as $n$ increases (this corresponds to the
gap between leading Fisher zeros and the positive real axis near $\phi=0$).

\subsection{Yang-Lee zeros ($a$-plane zeros of $G_N(a,t)$)}
\label{sectionYangLee}  

The zeros of the partial generating function $G_N(a,t)$ (equation \Ref{eqn11})
in the $a$-plane are closely related to Fisher (partition function) zeros.  To 
distinguish these zeros from Lee-Yang zeros  (which are $t$-plane zeros of
$G_N(a,t)$), they will be called \textit{Yang-Lee zeros}.

The analysis of Yang-Lee zeros proceeds similar to Fisher zeros (as in 
section \Ref{Fisher2A}).  The zeros were determined by deflating
$G_N(a,t)$ for $t=\sfrac{1}{\mu_2}$  and are plotted in
figure \ref{figure-7} for $N\in\{100,200,300,400,500\}$.  In general, the 
location of zeros did not change in significant ways with changes in
the value of $t$.  The results for these zeros are very similar to those results seen
for Fisher zeros.  For example, the trajectory of the leading zero is
very similar to figure \ref{figure-5}, and the cumulative angular distribution of
zeros has the same profile as seen in figure \ref{figure-6}.

\begin{figure}[t]
\input Figures/figure17.tex
\caption{The angular distribution of partition function zeros in the square
lattice estimated by plotting
$\frac{50}{2\pi n}\, \alpha_n(\frac{(2m\minus 1)\pi}{100},
\frac{(2m\plus 1)\pi}{100})$ against $\frac{2m\pi}{50}$ 
for $m=-25,-24,-23,\ldots,24,25$,
and for $n\in\{100,200,\ldots,500\}$.  The darker colours correspond
to larger values of $n$, and the data points are interpolated by
line segments.  The data become less noisy with increasing $n$
and there is a notable decrease (to zero) of the
density of zeros close to the real axis (where $\phi=0$).}
\label{figure-8}  
\end{figure}

\subsection{The density profile and distribution of Fisher and Yang-Lee zeros}
\label{density2}   

Partition function zeros in figure \ref{figure-2}, or Yang-Lee zeros in
figure \ref{figure-7}, have angular and radial distribution profiles
around the origin in the $a$-plane.  The angular distribution of
Fisher zeros was already examined in figure \ref{figure-6},
and the same approach to Yang-Lee zeros gives a result which
is similar to this.

The distribution function $\alpha_n(\theta,\phi)$ is given by
\begin{equation}
\alpha_n(\theta,\phi) = n\,\hbox{\Large$\alpha$}_n(\phi) - 
n\,\hbox{\Large$\alpha$}_n(\theta), 
\end{equation}
and this may be estimated from the data in figure \ref{figure-6}.  This gives a
nearly uniform distribution, except for a small gap where the distribution
decreases for angles near $0$.  This is consistent with the
suggestion of theorem \ref{thm2.7a} on the angular distribution of
Fisher zeros.

An alternative approach to the angular distribution function 
is to count zeros in wedges with vertices at the origin (and where the
vertices of the wedges have size $\sfrac{2\pi}{M}$ in the $a$-plane).
If $M=50$ then the $a$-plane is partitioned into $50$ wedges
and the zeros are binned and counted in each wedge.  
This gives results as shown in figure \ref{figure-8}
for $n\in\{100,200,\ldots,500\}$. The darker colours correspond
to larger values of $n$, and the figure is obtained by  plotting
$\sfrac{50}{2\pi n} \, \alpha_n(\sfrac{(2m\minus 1)\pi}{100},
\sfrac{(2m\plus 1)\pi}{100})$ against $\sfrac{2m\pi}{50}$ for $m=-25,-24,\ldots,25$,
with the points interpolated by line segments.  The graph is very
uneven for small $n$, but this uneveness decreases as $n$ increases.  Also notable
is the low density in a (narrow) wedge containing the positive real axis.

Plotting Yang-Lee zeros instead of Fisher zeros gives very similar looking
results.

\begin{figure}[t]
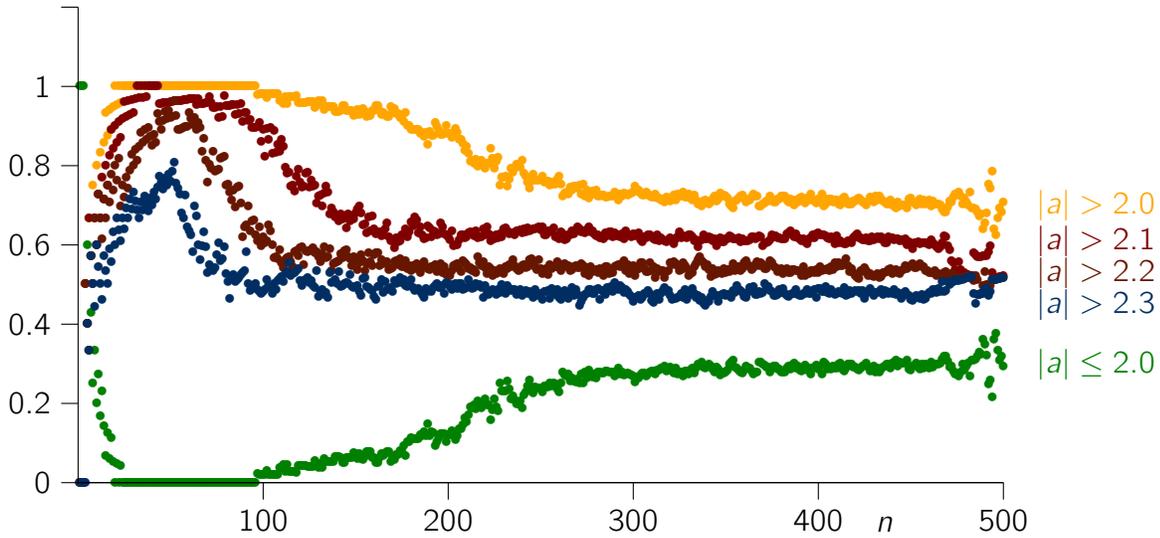

\input Figures/figure18.tex
\caption{The fraction of partition function zeros (zeros of $A_n^+(a)$ in the
$a$-plane) of adsorbing walks in the square lattice.  The data displayed
show the fraction of zeros at distances closer than $2$ (green points along the 
lowest curve), and then farther away than $2$,$2.1$, $2.2$ and $2.3$ from 
the origin in the $a$-plane.  Determining the same data for Yang-Lee zeros
(zeros of $G_N(a,t)$ with $t=\frac{1}{\mu_2}$) gives a similar result, 
with data points which largely coincide with the data plotted here.  The data
appear to flatten to constant values with increasing $n$.  This suggests that
the number density of zeros decreases inversely with distance from
the origin if $|a|$ is large and in the limit that $n\to\infty$.}
\label{figure-9}  
\end{figure}

In the radial direction Fisher zeros appear to accumulate on the
critical circle $|a|=a_c^+$ in the $a$-plane.  The fraction
of zeros inside or outside a disk $|a|< r$ is a function of
$n$, and may be approaching limiting values as $n$ increases to
infinity.  This is tested in figure \ref{figure-9}.  The data points marked by
$|a|<2.0$ correspond to the fraction of zeros inside the disk
$|a|\leq 2$ (this is the same as the fraction of zeros inside the
annulus $a_c^+ \leq |a| \leq 2$).  The fraction is small for $n<100$,
but increases steadily and appears to approach a constant fraction of
about $0.30$ for $n>300$.

A similar situation is seen for zeros farther away from the origin.
For example, the fraction of zeros farther away than a distance of $2.0$
from the origin (that is, the fraction of zeros outside a disk $|a|\leq 2$
at the origin), is shown by the data points accumulating
along the top curve of the graph in figure \ref{figure-9}.  For small
values of $n$ the fraction is nearly equal to $1$, but this decreases
and levels off at about $0.70$ for $n>300$.

The data outside a disk $|a|\leq r$ decreases with increasing $r$,
as seen for $r\in\{2.0,2.1,2.2,2.3\}$ in figure \ref{figure-9}.    In each case
the fraction of zeros levels off (within noise) to a constant,  independent
of $n$.  The results for Yang-Lee zeros (zeros of $G_N(a,t)$) with
$t=\frac{1}{\mu_2}$ are similar to the results for Fisher zeros shown
in figure \ref{figure-9}.

The results in figure \ref{figure-9} suggest that only a fraction of Fisher 
and Yang-Lee zeros accumulate on the critical circle $|a|=a_c^+$.  That is,
it appears that a constant fraction of zeros accumulate on the critical circle,
while the inside of the critical circle is empty, and the remaining zeros are
distributed outside the critical circle in the $a$-plane.
The radial density (integrated in the angular direction)
approaches a constant value with increasing $n$.  This suggests, in the limit
as $n\to\infty$, that there are zeros at arbitrary distances 
from the origin.  However, since the distribution of zeros is integrable,
the \textit{number density} of zeros should drop off faster than
the inverse squared distance from the origin.

\section{Numeral Results in the Cubic Lattice}
\label{Numerical3}   

\subsection{Lee-Yang zeros of $G_N(a,t)$ in the cubic lattice}
\label{Lee-Yang3} 

Similar to the results in the square lattice, the Lee-Yang zeros of $G_N(a,t)$
will accumulate on a circle of radius $\lambda_a$ (see equation \Ref{eqn13})
as $N\to\infty$ in the $t$-plane.  By scaling $t$ with $\lambda_a$ (see equation \Ref{eqn15}),
the Lee-Yang zeros will accumulate on the unit circle in the
$t$-plane (by theorem \ref{thm3.4}), and their distribution will be (asymptotically)
uniform in the limit as $N\to\infty$. 

The Lee-Yang zeros will be examined by considering the partial sum $g_N(a,t)$
in equation \Ref{eqn15} for data obtained for $\mathL^3_+$.  The estimated
locations of these zeros are shown in figures \ref{figure3-t2-1} and \ref{figure3-t2-3},
for $N=10$ and $a=1$, and for $a\in\{1,a_c^+,3\}$, respectively, where $a_c^+$ is
the critical adsorption activity estimated in reference \cite{JvR16} 
($a_c^+ \approx 1.306$).

\begin{figure}[t] 
\centering
\begin{minipage}[t]{0.40\linewidth}
       \flushleft
        \input Figures/figure19.tex
        \caption{Approximate locations of the zeros of $g_N(a,t)$ for $N=10$
and $a=1$ in the cubic lattice.  The zeros are located approximately on the 
unit circle, and close to 10 vertices of a regular 11-gon with centre at the origin
(oriented with one vertex at the point $(1,0)$).  The rays from the origin
are the spokes of the regular 11-gon.  
}
\label{figure3-t2-1}  
    \end{minipage}%
    \begin{minipage}[t]{0.40\linewidth}
        \flushright
        \input Figures/figure20.tex
        \caption{Approximate locations of the zeros of $g_N(a,t)$ for $N=10$
and for $a=1$, $a=1.306\approx a_c^+$,
and $a=3$ in the cubic lattice.  The zeros for $a>1$ have moved slightly away from the
unit circle, but are still fairly close to it. 
}
\label{figure3-t2-3}  
    \end{minipage}%
\end{figure}

The best least-squares ellipse through the zeros in figure
\ref{figure3-t2-1} has centre at $(0.012,0)$, half short axis of length $1.031$ 
and half long axis of length $1.042$.  The rays from the origin in this
figure intersect the unit circle in the vertices of a regular $11$-gon
(with one vertex on the real axis).  Each vertex of the $11$-gon is close
to a Lee-Yang zero, except for the vertex on the real axis.

In figure \ref{figure3-t2-3} the dependence of the Lee-Yang zeros on the
value of $a$ is examined. The data plotted in this figure correspond to
$a\in\{1,a_c^+,3\}$ with $a_c^+$ approximated by $1.306$ \cite{JvR16}.
For both $a=1.306$, and $a=3$, the zeros have moved slightly
off the unit circle, but  they are still located on a least squares ellipsiod 
with centre close to the origin (in the case of $a=1.306$, 
with centre at $(0.0286,0)$, half short axis of approximate length $1.207$ 
and half long axis of approximate length $1.211$; 
and for $a=3$, with centre at $0.0487,0)$, half short axis of 
approximate length $1.209$ and half long axis of approximate length
$1.216$).

The data for $N=100$ are shown in figures \ref{figure3-t2-2} and
\ref{figure3-t2-4}.  The zeros are evenly distributed along the unit
circle for $a=1$ in figure \ref{figure3-t2-2}, close to the vertices of
a regular $101$-gon with centre at the origin and one vertex on the positive
real axis at the point $(1,0)$.  The best least-squares ellipse through these
zeros has centre at $(0.001,0)$, approximate half short axis of length
$1.001$ and approximate half long axis of length $1.011$.   Changes in the values of
$a$ do not have a significant effect on the location of the zeros of $g_N(a,t)$, since these
are are already scaled by estimates of $\lambda_a$.  Thus, the
zeros for $a\in\{1,a_c^+,3\}$ shown in figure \ref{figure3-t2-4} are all
close to the unit circle and are evenly distributed along it.  The best
least-squares ellipse through the zeros when $a=1.306$ in figure \ref{figure3-t2-4}
has centre $(0.003,0)$, half short axis of approximate length $1.026$ and half
long axis of approximate length $1.026$.  If $a=3$, then the least squares
ellipse has centre at $(0.006,0)$, half short axis of approximate length $1.025$ and
half long axis of approximate length $1.027$.

\begin{figure}[t] 
\centering
\begin{minipage}[t]{0.40\linewidth}
       \flushleft
        \input Figures/figure21.tex
        \caption{Approximate locations of the zeros of $g_N(a,t)$ for $N=100$
and $a=1$ in the cubic lattice.  The zeros are approximately on the unit circle, 
and close to 100 vertices of a regular 101-gon with centre at the origin, and
oriented with one vertex at the point $(1,0)$.  
}
\label{figure3-t2-2}  
    \end{minipage}%
    \begin{minipage}[t]{0.40\linewidth}
        \flushright
        \input Figures/figure22.tex
        \caption{Approximate locations of the zeros of $g_N(a,t)$ for $N=100$
and for $a=1$, $a=1.306\approx a_c^+$, and $a=3$ in the cubic lattice.  
All the zeros are close to the unit circle. As in figure \ref{figure3-t2-2}
the rays are the spokes of a regular $101$-gon.  
}
\label{figure3-t2-4}  
    \end{minipage}%
\end{figure}

\subsection{Fisher or partition function zeros}
\label{Fisher3}  

The partition function $A_n^+(a)$ of adsorbing walks in $\mathL^3_+$ was
approximated using the approximate microcanonical data obtained 
from the GAS algorithm simulation for adsorbing walks in reference \cite{JvR16}.
The locations of zeros in the complex plane were similarly determined 
by deflating the polynomials using Maple \cite{Maple}.  The results for $n\in\{25,50,75,100\}$
are plotted in figure \ref{figure3-1}, and for $n\in\{100,200,\ldots,500\}$ are shown
in figure \ref{figure3-2}.

By theorem \ref{thm2.6} a positive fraction of partition function zeros will be 
located in an annulus $\{a\svv \hbox{$1\minus\rho \leq r \leq \sfrac{1}{1\minus\rho}$} \}$
for some value of $\rho\in(0,1)$. 

The zeros plotted in figure \ref{figure3-2}  appear to accumulate on a circle 
in the $a$-plane with increasing $n$. However, many zeros are also distributed
away from this circle, and with increasing $n$ zeros are found at further
distances from the origin.  These observations are not inconsistent with 
theorem \ref{thm2.6}.  Many zeros are located in an annulus containing the circle of
critical radius $a_c^+$ (by theorem \ref{thm2.7}) and their angular distribution 
appears to be very even, except in the vicinity of the positive real axis, 
where the leading zeros are converging to the edge singularity (with
increasing $n$) at the critical adsorption point $a_c^+$.  

\begin{figure}[t] 
\centering
\begin{minipage}[t]{0.4\linewidth}
       \flushleft
        \input Figures/figure23.tex
        \caption{Partition function (Fisher) zeros of $A_n^+(a)$  in the $a$-plane for
adsorbing walks in the cubic lattice, for $n\in\{25,50,75,100\}$.}
\label{figure3-1}  
    \end{minipage}%
    \begin{minipage}[t]{0.4\linewidth}
        \flushright
        \input Figures/figure24.tex
        \caption{Partition function (Fisher) zeros of $A_n^+(a)$  in the $a$-plane for 
adsorbing walks in the cubic lattice, for $n\in\{100,200,300,400,500\}$.}
\label{figure3-2}  
    \end{minipage}%
\end{figure}

\begin{figure}[t]
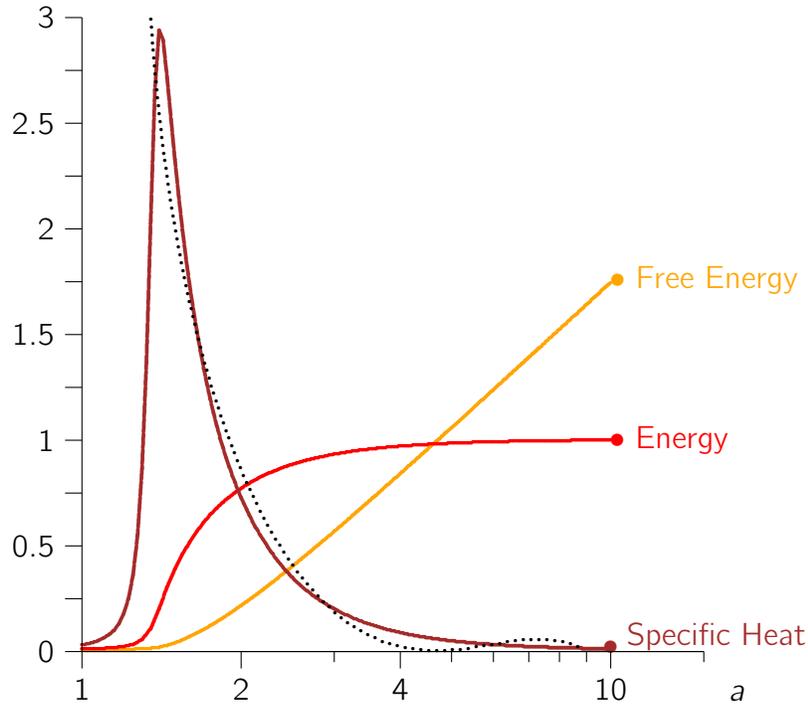

\centering
\input Figures/figure25.tex
\caption{The excess free energy $F^e_n(a)$ in the $a$-plane for three
dimensional adsorbing walks of length $n=500$ (see equation \Ref{eqn51}), plotted as a function 
of $a$ on a semilogarithmic graph.  The function is convex in $\log a$.
The energy per unit length, and the specific heat, are plotted on the
same graph, and were computed as shown in equation \Ref{eqn52}.  
The specific heat shows a sharp peak and is approximated by the dotted
curve (see equation \Ref{eqn54}).}
\label{figure3-4}  
\end{figure}

The finite size \textit{excess free energy} $F^e_n(a)$ can be computed 
using equation \Ref{eqn19f}:
\begin{equation}
F_n^e(a) = F_n(a) - \log c_n^{(d-1)} = \sum_{k=1}^n \log (a\minus a_k) .
\label{eqn51d}  
\end{equation}
The finite size \textit{energy} $E_n(a)$ per unit length and \textit{specific heat} 
$\C{C}_n(a)$ are the first and second derivatives of $F_n^e(a)$ to $\log a$, 
divided by length $n$ (see equation \Ref{eqn52}).   Assuming that the specific heat $\C{C}_n(a)$ 
has limiting behaviour as shown in equation \Ref{eqn51c} with $\phi=\sfrac{1}{2}$, 
then $\C{C}_n(a)$ should be given by equation \Ref{eqn52} when $a>a_c^+$ in the
limit as $n\to\infty$.  A least squares fit gives the dotted curve in 
figure \ref{figure3-4} (where $A_0=4.123$, $B_0=-5.987$, $C_0=2.852$
and $D_0=-0.443$), approximating the shape of the specific heat
curve well as $a$ increases from $a_c^+$.

Equation \Ref{eqn51c} shows that $\C{C}_n(a) \to 0$ as $n\to\infty$ and $a<a_c^+$, 
and $\C{C}_n(a) \to \C{C}(a) \geq 0$ if $a>a_c^+$ and $n\to\infty$.  It may
reasonably be expected that $\C{C}_n(a_c^+) = \C{C}_m(a_c^+)$ at the critical
point.  Solving this for $m=2n$ and for $n\in\{200,210,220,230,240,250\}$,
give estimates which are stable to three decimal places.  Taking the average gives
the estimate
\begin{equation}
a_c^+  = 1.326, \q\hbox{in the cubic lattice},
\end{equation}
where the last digit is uncertain.  This compares well with (but is slightly larger
than)  the result obtained
in reference \cite{JvR16}, namely $a_c^+ = 1.306(7)$ \cite{JvR16}.

\begin{figure}[t]
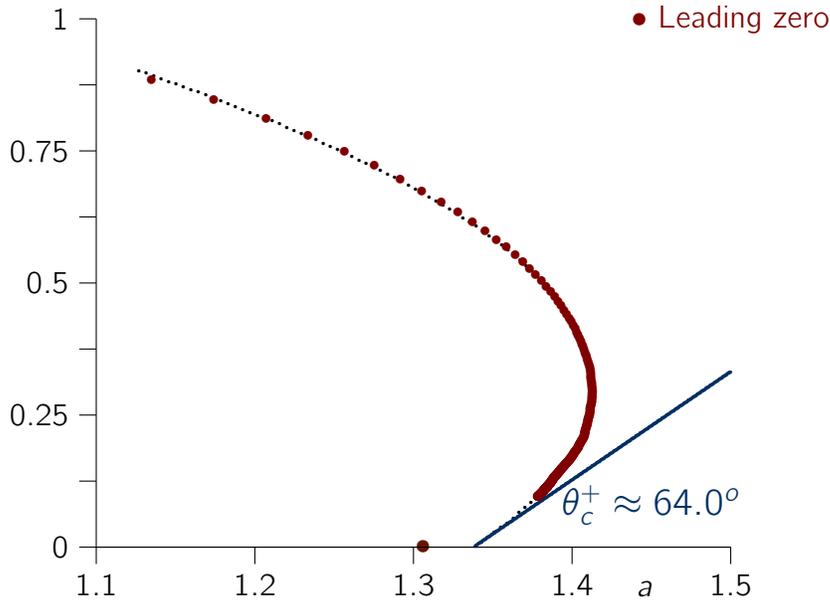

\input Figures/figure26.tex
\caption{The trajectory of the leading Fisher zero $a_1$ as a function of $n$ 
for adsorbing walks in the cubic.  With increasing $n$ the leading zero approaches the
real axis, where it (together with its complex conjugate) squeeze an edge singularity
in the limiting free energy of the model at the critical adsorption point
$a_c^+$ (denoted by $\bullet$ on the real axis).  The dotted line is a least squares quadratic
curve fitted to the trajectory of the leading zero.  It intersects the 
real axis at the critical point, estimated in this way to be approximately
equal to $1.34$.  The angle at which the curve intersects the real axis is
about $64^o$.  If the least squares parabola is constrained to pass through the
estimated critical point $a_c^+=1.306$, then its angle of intersection is approximately
$54.6^o$.}
\label{figure3-5}  
\end{figure}

The leading Fisher zeros $a_1$ are plotted in figure \ref{figure3-5}.
The estimate $a_c^+=1.306$ from reference \cite{JvR16} is denoted by the bullet 
on the real axis.  With increasing $n$ zeros appear to move along a 
curve and it approaches the real axis at an angle $\theta$ as $n \to\infty$.
A least squares fit to the curve, using the model  $\Re a_1 = 
a_c^+ + \alpha_0 \Im a_1 + \alpha_1 (\Im a_1)^2$ gives
\begin{equation}
\Re a_1 = 1.340 + 0.4868 \Im a_1 -0.8035 (\Im a_1)^2.
\end{equation}
This gives the estimate $a_c^+ \approx 1.340$ and angle of incidence
$\theta = 90^o - \arctan 0.4846 \approx 64.0^o$.  The location of the leading
zeros are well approximated by the quadratic curve,  and this may indicate that the
estimate in \cite{JvR16} slightly underestimates the location of the critical
point.

Repeating the analysis above, but now with the critical value fixed at $a_c^+=1.306$
(so that the least squares parabola passes through the point $a_c^+$ on the real
axis), gives
\begin{equation}
\Re a_1 = 1.306 + 0.7350 \Im a_1 -1.10578 (\Im a_1)^2.
\end{equation}
This gives the angle of incidence 
$\theta = 90^o - \arctan 0.7350 \approx 54.8^o$ (a smaller angle than
calculated above). 

The estimates for the critical angle above (namely $64.0^o$, and $54.8^o$)
show that there remains significant uncertainty in its value.

The \textit{cumulative angular distribution function} $\hbox{\Large$\alpha$}_n(\phi)$ 
of the partition function zeros of adsorbing walks of length $n$ 
around the origin in the $a$-plane is defined in equation \Ref{eqn57}
(where $\alpha_n(\theta,\phi)$ is defined in equation \Ref{eqn12}).  
That it, $\hbox{\Large$\alpha$}_n(\phi)$ is the fraction of Fisher zeros with
(complex) principal argument in the interval $(-\pi,\phi]$, and it increments in 
steps of size $\sfrac{1}{n}$ each time a new zero enters the interval $(-\pi,\phi]$
with increasing $\phi$.   $\hbox{\Large$\alpha$}_n(\phi)$ is plotted in
figure \ref{figure3-6} for adsorbing walks in $\mathL^3_+$ and for
$n=400$ and $n=500$.  For these finite values of $n$ it is a step-function,
with a wider constant part straddling the positive real axis 
where $\phi=0$.  The constant part steadily decreases in length as $n$ increases 
(this corresponds to the leading Fisher zero approaching the real
axis with increasing $n$).

The distribution function $\alpha_N(\theta,\phi)$ is given by
\begin{equation}
\alpha_n(\theta,\phi) = n\,\hbox{\Large$\alpha$}_n(\phi) - 
n\,\hbox{\Large$\alpha$}_n(\theta), 
\end{equation}
and this can be estimated from the data in figure \Ref{figure3-6}. 
Polynomial fits to $\hbox{\Large$\alpha$}_n(\theta)$ give a very
uniform profile for $\alpha_n(\theta,\phi)$,  consistent with 
theorem \ref{thm2.7a} on the angular distribution of Fisher zeros.

\subsection{Yang-Lee zeros ($a$-plane zeros of $G_N(a,t)$)}
\label{Yang-Lee3}   

The Yang-Lee zeros (see section \ref{sectionYangLee}) of the partial generating 
function $G_N(a,t)$ (equation \Ref{eqn11}) are closely related to Fisher (partition function) 
zeros.  These zeros were estimated using Maple \cite{Maple}, and are
plotted for $N\in\{100,200,300,400,500\}$ in figure \ref{figure3-7}.

The distribution of Yang-Lee zeros, and their other properties, are very 
similar to those of Fisher zeros; see section \ref{Fisher2A}.  The data in figure
\ref{figure3-7} are for $t=\sfrac{1}{\mu_3}$, and it is found that 
the locations of the zeros, and their distribution, are not very sentitive to changes
in $t$ (the data in figure \ref{figure3-7} are very similar to the
data shown in figure \ref{figure3-2}).  In addition, the position and trajectory 
of the leading zero is very similar to the data in figure \ref{figure3-5},  
while the cumulative angular distribution of zeros has the same profile 
as seen in figure \ref{figure3-6}.

\begin{figure}[t]
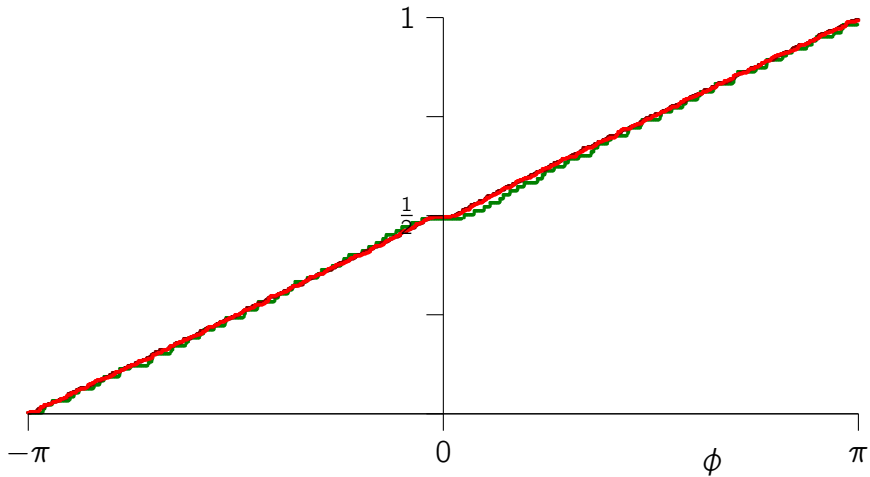

\input Figures/figure27.tex
\caption{The cumulative angular distribution function of Fisher zeros of
adsorbing walks in the cubic lattice, $\frac{1}{n}\, \alpha_n(-\pi,\phi)$
(displayed for $n=400$ and $n=500$ in the cubic lattice as
a function of $\theta$; see equation \Ref{eqn57}).  These curves are step functions
and they become smoother for larger values of $n$. By equation \Ref{eqn34} 
the limit curve is bounded by $\lim_{n\to\infty}
\frac{1}{n}\, \alpha_n(-\pi,\phi) \leq \min\{1,\frac{1}{\pi}\phi + 1\}$.  The data
suggest that $\lim_{n\to\infty}
\frac{1}{n}\, \alpha_n(-\pi,\phi) = \frac{1}{2\pi}\phi + \frac{1}{2} .$}
\label{figure3-6}  
\end{figure}

\subsection{The density profile and distribution of Fisher and Yang-Lee zeros}
\label{density3}   

The angular and radial distribution of partition function zeros in 
figure \ref{figure3-2}, or Yang-Lee zeros in figure \ref{figure3-7}, 
are similar.  The angular distribution of partition function zeros
was examined in figure \ref{figure3-6}, where the cumulative angular
distribution function is shown.  Yang-Lee zeros of $G_N(a,t)$ have a
cumulative distribution which is similar to figure \ref{figure3-6},
and it is not shown here.

The angular distribution of zeros may also be examined by counting
zeros in wedges around the origin.  If the plane is partitioned by $M$ wedges
at the origin in the $a$-plane with vertex angles of size $\sfrac{2\pi}{M}$, then 
the angular distribution can be estimated by counting zeros in each wedge.
Binning the partition function zeros for each wedge, and plotting the 
results against bin number gives figure \ref{figure3-8} (where $M=50$
and for $n\in\{100,200,\ldots,500\}$). The darker colours correspond
to larger values of $n$, and the figure is obtained by  plotting
$\sfrac{50}{2\pi n}\, \alpha_n(\sfrac{(2m\minus 1)\pi}{100} , 
\sfrac{(2m\plus 1)\pi}{100})$ against $\sfrac{2m\pi}{50}$ for $m=-25,-24,\ldots,25$,
with the points interpolated by line segments.  The graph is very
uneven for small $n$, but this uneveness subsides as $n$ increases.  Also notable
is the low density in a (narrow) wedge containing the positive real axis. 

Plotting Yang-Lee zeros instead of partition function zeros gives very similar looking
results.

\begin{figure}[t]
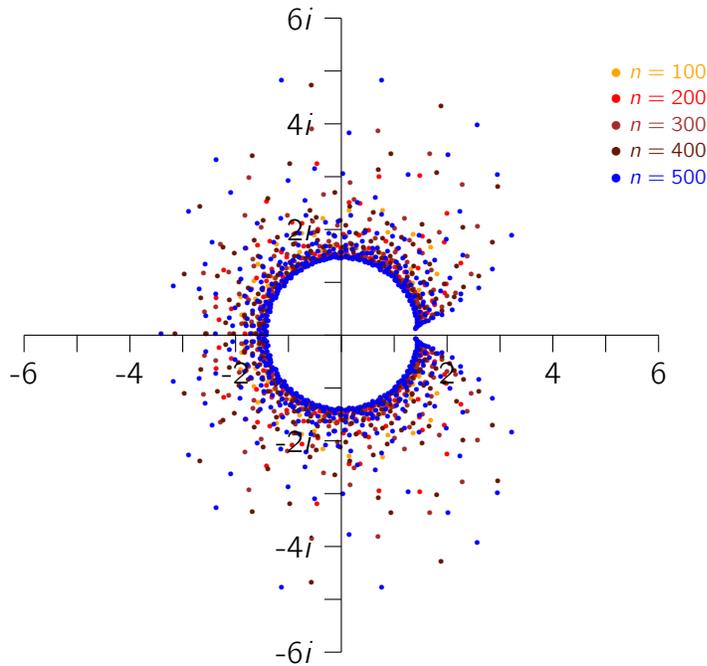

\input Figures/figure28.tex
\caption{Zeros of $G_N(a,t)$ (Yang-Lee zeros) for adsorbing walks
in the cubic lattice in the $a$-plane for $t=\frac{1}{\mu_3}$.  This distribution 
of the zeros is similar to that seen for Fisher zeros (see figure \ref{figure3-2}).  
Examination of the data show that this pattern of zeros is not very
sensitive to the value of $t$ in the generating function.  With increasing $N$ the
leading zero and its conjugate approaches the positive real axis
at the adsorption critical point $a_c^+$, creating an edge singularity 
in the limit as $N\to\infty$.}
\label{figure3-7}  
\end{figure}

In the radial direction the partition function zeros appear to accumulate on a
circle $|a|=a_c^+$ in the $a$-plane with increasing $n$.  The fraction
of zeros inside a disk $|a|< r$, or outside disks $|a|<r$, is a function of
$n$, but this may be approaching limiting values as $n$ increases to
infinity.  This possibility is examined in figure \ref{figure3-9}.  The data points marked by
$|a|\leq 1.45$ in this figure correspond to the fraction of zeros inside the disk
$|a|\leq 1.45$ (this is the same as the fraction of zeros inside the
annulus $a_c^+ \leq |a| \leq 1.45$).  The fraction is small for $n<150$,
but increases steadily and seems to approach a constant value of about
$0.30$ for $n>300$.

A similar situation is seen for zeros farther away from the origin.
For example, the fraction of zeros farther away than a distance of $1.45$
from the origin (that is, the fraction of zeros \textit{outside} a disk $|a|\leq 1.45$
at the origin), is shown by the data points accumulating
along the top curve of the graph in figure \ref{figure3-9}.  For small
values of $n$ the fraction is nearly equal to $1$, but this decreases
and levels off at a density of about $0.70$ for $n>300$.

The density outside disks $|a|\leq r$ decreases with increasing $r$,
as seen for $r\in\{1.45,1.55,1.65,1.75\}$ in figure \ref{figure3-9}.    In each case
the fraction of zeros levels off (within noise) to a constant,  independent
of $n$.  The results for Yang-Lee zeros (zeros of $G_N(a,t)$ with
$t=\frac{1}{\mu_2}$) are similar to the results for partition function zeros shown
in figure \ref{figure3-9}.

\begin{figure}[t]
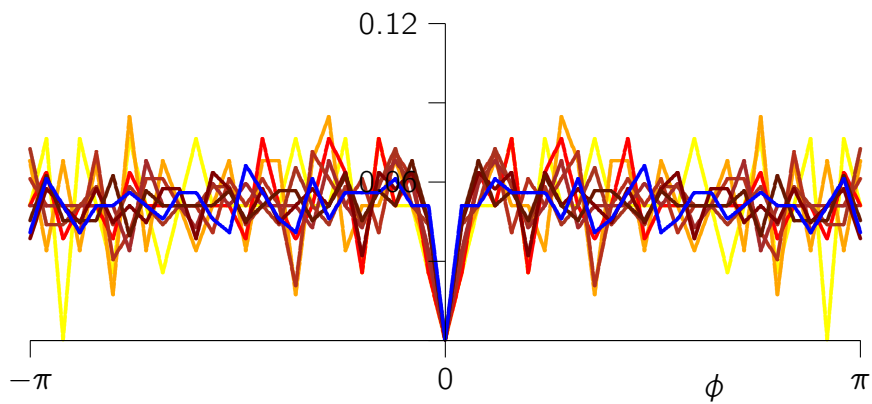

\input Figures/figure29.tex
\caption{The angular distribution of partition function zeros in the cubic
lattice estimated by plotting
$\frac{50}{2\pi n}\, \alpha_n(\frac{(2m - 1)\pi}{100} , 
\frac{(2m + 1)\pi}{100})$ against $\frac{2m\pi}{50}$ 
for $m=-25,-24,-23,-22,\ldots,24,25$,
and for $n\in\{100,200,\ldots,500\}$.  The darker colours correspond
to larger values of $n$, and the data points are interpolated by
line segments.  The data become less noisy with increasing
$n$, and there is a notable decrease (to zero) of the
density of zeros close to the positive real axis (where $\phi=0$).}
\label{figure3-8}  
\end{figure}

Similar to the observation for adsorbing square lattice walks,
the results in figure \ref{figure3-9} suggest that only a fraction of Fisher 
and Yang-Lee zeros accumulate on the critical circle $|a|=a_c^+$.  That is,
it appears that a constant fraction of zeros accumulate on the critical circle,
while the inside of the critical circle is empty.
The radial density (integrated in the angular direction)
approaches a constant value with increasing $n$.  This suggests, in the limit
as $n\to\infty$, that there are zeros at arbitrary distances 
from the origin.  However, since the distribution of zeros is integrable,
the \textit{number density} of zeros should drop off faster than
the inverse squared distance from the origin.

\section{Conclusions}
\label{conclusions}   

In this paper the properties of partition and generating functions zeros
of models of adsorbing walks in the square and cubic lattices were examined
numerically.  
In addition, it was shown that theorems on the distribution of polynomial zeros
can be used to prove theorems on the distribution of partition and generating
function zeros in these models.  This includes theorem \ref{thm3.4}, which shows that
generating function zeros in the $t$-plane converge on a circle -- these are
the Lee-Yang zeros, and by theorem \ref{thm3.5} they converge to a uniform 
distribution in the limit as the length of the walk approaches infinity.  These
results are corollaries of the theorem of Hughes and Nikechbali
(theorem \ref{thm3.3}), and the theorems of
Erd\"os and Tur\'an (theorem \ref{thm3.6}) and Erd\'elyi (theorem \ref{thm3.7}).

Numerical results on the Lee-Yang zeros in the $t$-plane were given in sections
\ref{Lee-Yang2} and \ref{Lee-Yang3}.  The results show that the zeros are
distributed evenly on a circle of radius $\lambda_a^{-1}$, even for small values
of $N$ as seen in figures \ref{figure-t2-1}, \ref{figure-t2-3}, \ref{figure-t2-2} 
and \ref{figure-t2-4} for square lattice adsorbing walks, and
figures \ref{figure3-t2-1}, \ref{figure3-t2-3}, \ref{figure3-t2-2} 
and \ref{figure3-t2-4} for cubic lattice adsorbing walks.  Moverover, the
zeros of the generating function $G_N(a,t)$ were seen to be close to 
$N$ vertices of a regular $(N\plus 1)$-gon.  As $N$ increases this gives
an increasingly uniform distribution of the zeros on the circle.

\begin{figure}[t]
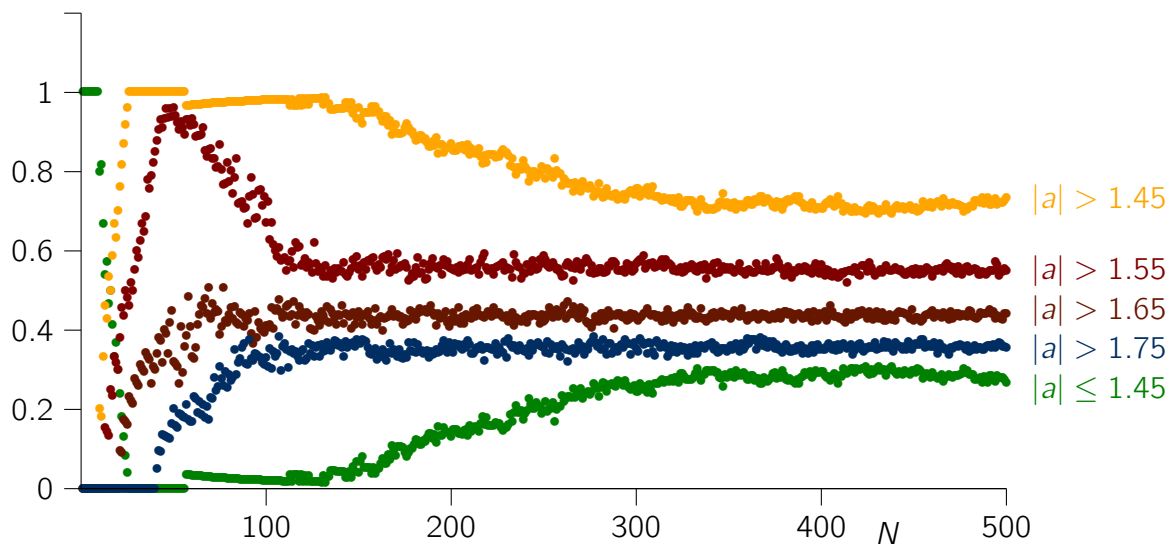

\input Figures/figure30.tex
\caption{The fraction of partition function zeros (zeros of $A_n^+(a)$ in the
$a$-plane) of adsorbing walks in the cubic lattice.  The data displayed
show the fraction of zeros at distances closer than $1.45$ (green points along the 
lowest curve), and then farther away than $1.45$, $1.55$, $1.65$ and $1.75$ from 
the origin in the $a$-plane.  Determining the same data for Yang-Lee zeros
(zeros of $G_N(a,t)$ with $t=\frac{1}{\mu_3}$) gives similar results, 
with data points which largely coincide with the data plotted here.  The data
appear to flatten out to constant values with increasing $n$.  This implies that
the number density of zeros decreases inversely with distance from
the origin if $|a|$ is large and in the limit that $n\to\infty$.}
\label{figure3-9}  
\end{figure}

The distribution of partition function (or Fisher) zeros was also examined.
These zeros were shown to be distributed with positive density in an annulus in the
complex $a$-plane in theorem \ref{thm2.6}.  
The angular distribution of Fisher zeros was also considered by examining 
the function $\alpha_n(\theta,\phi)$.  This, in particular, gives, in the
limit as $n\to\infty$, an angular distribution $q(\theta)$, as defined 
in equation \Ref{eqn22}.  Bounds on $q(\theta)$ are given in theorem
\ref{thm2.7a}.

Numerical results on the distribution of Fisher zeros are consistent with
the theorems in section \ref{Fisher2}.  The distribution of Fisher zeros is
typically similar to the data shown in figures \ref{figure-1}, \ref{figure-3} and \ref{figure-2}
for square lattice adsorbing walks, and figures \ref{figure3-1} and \ref{figure3-2}
for cubic lattice adsorbing walks.  It was shown that thermodynamic quantities
(the free energy, energy and specific heat) can be computed from the
Fisher zeros (see figures \ref{figure-4} and \ref{figure3-4}) and that
these are consistent with results obtained by other means; see for example
reference \cite{JvR16}.  The trajectory of the leading Fisher zero $a_1$
in figures \ref{figure-5} and \ref{figure3-5} follows an approximately
parabolic path with increasing length of the walk, and approaches the real axis at an
angle which was estimated to be about $55^o$ in the square lattice.  
Estimating this angle in the cubic lattice proved more difficult.  If the
estimate $a_c^+=1.306$ for the critical value of $a$ is used, then the
angle has a value of approximately $55^o$.  If this is not used, then
the angle appears to be larger, for example, approximately $64^o$ in figure
\ref{figure3-5}.

The cumulative angular distribution function of Fisher zeros is illustrated
in figure \ref{figure-6} for square lattice adsorbing walks, and
figure \ref{figure3-6} for cubic lattice adsorbing walks.  There is very 
little difference in these figures, and with increasing $n$ both cases
appear to approach the linear function $\sfrac{1}{2\pi}\,\phi + \sfrac{1}{2}$.
That is, the angular distribution is a constant function away from the 
angle $\phi=0$ in the limit as $n\to\infty$.  This is consistent with the 
graphs in figures \ref{figure-8} and \ref{figure3-8}, which becomes 
less noisy with increasing $n$.

The radial distributions of Fisher zeros were also plotted in figures \ref{figure-9}
and \ref{figure3-9}.  The fraction of zeros outside disks of given radius $r$
approaches a constant as $n$ increases, showing that the number density
of Fishes zeros decreases with increasing distance from the
the origin in the $a$-plane.

Results on the distribution of Yang-Lee zeros in the $a$-plane were 
similarly presented.  These zeros are distributed similarly to Fisher
zeros, and the results in theorems \ref{thm2.7} on the radial distribution
is similar to the radial distribution of Fisher zeros seen in theorem \Ref{thm2.6}.
The angular distribution of Yang-Lee zeros was also seen to be similar to
that of Fisher zeros, and the results on the distribution function are
given in theorem \ref{thm2.7aa}.
Numerical investigation on the Yang-Lee zeros produced results similar
to that seen for Fisher zeros.

\vspace{1cm}
\noindent{\bf Acknowledgements:} EJJvR acknowledges financial support 
from NSERC (Canada) in the form of a Discovery Grant.  

\vspace{1cm}

\noindent{\bf References}
\bibliographystyle{plain}
\bibliography{References}

\end{document}